\tikzset{every path/.style={draw=black!80, line width=0.6pt}}
\tikzstyle{every picture}=[baseline=-0.25em]
\tikzstyle{none}=[inner sep=0mm]
\tikzstyle{black dot}=[inner sep=0.7mm,minimum width=0pt,minimum height=0pt,fill=black,draw=black,shape=circle]
\tikzstyle{dot}=[black dot]
\tikzstyle{white dot}=[dot,fill=white]
\tikzstyle{box}=[rectangle,fill=white,draw=black, font=\scriptsize, inner sep=2pt]
\tikzstyle{arrow}=[decoration={markings,mark=at position 1 with
\tikzstyle{box-no-outline}=[rectangle, draw=white, fill=white, inner sep=2pt]
\tikzstyle{polynomial}=[regular polygon, regular polygon sides=3, shape border rotate= 180, fill=white,draw=black,inner sep = -10pt, rounded corners, rounded corners=10pt, minimum width=25pt, font=\scriptsize]
\tikzstyle{none}=[inner sep=0mm]
\tikzstyle{every loop}=[]
\newcommand{\tikzfig}[1]{
\input{./figures/#1.tikz}
}
\def\fig{}
	\let
	\renewcommand{\tikzfig}[1]{
		\tikzsetnextfilename{#1}
		
\input{./figures/#1.tikz}
}
\newcommand{\eq}[2][~~]{
#1
\underset{\substack{#2}}{=}
#1
}
\newcommand{\rewrite}[1]{
\xrightarrow[\substack{#1}]{}
}
\newcommand{\interp}[1]{\left\llbracket #1 \right\rrbracket}
\newcommand{\abs}[1]{\left\lvert #1 \right\rvert}
\newcommand{\bra}[1]{\ensuremath{\left\langle \textstyle #1 \right|}}
\newcommand{\ket}[1]{\ensuremath{\left| \textstyle #1 \right\rangle}}
\newcommand{\ketbra}[2]{\ket{#1}\!\!\bra{#2}}
\newcommand{\cat}[1]{\mathbf{#1}}
\newcommand{\atan}[1]{\operatorname{arctan}\left(#1\right)}
\newcommand{\Var}{\operatorname{Var}}
\newlength{\delimstretch}
\newcommand{\ascend}[1]{
\setlength{\delimstretch}{\heightof{\ensuremath{#1}}}
\pgfmathsetlength{\delimstretch}{max(\delimstretch,1.5ex)}
\resizebox{!}{1.1\delimstretch}{$\raisebox{0.18ex}{$\upharpoonleft$}\hspace*{-0.94ex}\lfloor$}\ensuremath{#1}\resizebox{!}{1.1\delimstretch}{$\rfloor\hspace*{-0.94ex}\raisebox{0.18ex}{$\upharpoonright$}$}
}
\newcommand{\descend}[1]{
\setlength{\delimstretch}{\heightof{\ensuremath{#1}}}
\pgfmathsetlength{\delimstretch}{max(\delimstretch,1.5ex)}
\resizebox{!}{1.1\delimstretch}{$\raisebox{-0.18ex}{$\downharpoonleft$}\hspace*{-0.94ex}\lceil$}\ensuremath{#1}\resizebox{!}{1.1\delimstretch}{$\rceil\hspace*{-0.94ex}\raisebox{-0.18ex}{$\downharpoonright$}$}
}
\title{Completeness of Sum-Over-Paths for Toffoli-Hadamard and the Dyadic Fragments of Quantum Computation}
\author{Renaud Vilmart}{Université Paris-Saclay, CNRS, ENS Paris-Saclay, Inria, LMF, 91190, Gif-sur-Yvette, France  \and \url{https://rvilmart.github.io/}}{vilmart@lsv.fr}{https://orcid.org/0000-0002-8828-4671}{}
\authorrunning{R. Vilmart} 
\titlerunning{Completeness of SOP for Tof-H and Dyadic Fragments of Quantum Computation}
\keywords{Quantum Computation, Verification, Sum-Over-Paths, Rewrite Strategy, Toffoli-Hadamard, Completeness}
\begin{document}

\maketitle

\begin{abstract}
The ``Sum-Over-Paths'' formalism is a way to symbolically manipulate linear maps that describe quantum systems, and is a tool that is used in formal verification of such systems.

We give here a new set of rewrite rules for the formalism, and show that it is complete for ``Toffoli-Hadamard'', the simplest approximately universal fragment of quantum mechanics. We show that the rewriting is terminating, but not confluent (which is expected from the universality of the fragment). We do so using the connection between Sum-over-Paths and graphical language ZH-Calculus, and also show how the axiomatisation translates into the latter.

Finally, we show how to enrich the rewrite system to reach completeness for the dyadic fragments of quantum computation -- obtained by adding phase gates with dyadic multiples of $\pi$ to the Toffoli-Hadamard gate-set -- used in particular in the Quantum Fourier Transform.

\end{abstract}

\section{Introduction}

Sum-Over-Paths (SOP) is a formalism used to represent and manipulate quantum processes in a symbolic way, introduced in 2018 by Amy \cite{SOP}. Its first important feature is its capacity to translate from most common descriptions of quantum processes in polynomial time and space. The formalism hence provides a intermediary view between usual (matrix) semantics and these usual process descriptions. Its second crucial feature is that it comes equipped with a rewrite system that simplifies the term, without altering its semantics.

Despite its links \cite{MSc.Lemonnier,LvdWK} with graphical languages such as the ZH-Calculus \cite{ZH} -- which will be used in the following --, it provides a different view on the quantum processes, representing them as weighted sums of Dirac kets and bras (a very familiar notation in quantum mechanics).

The formalism has seen several applications, the first of which being verification. Verification is a crucial aspect of computations in the quantum realm, where physical constraints (like no-cloning, or the fundamental probabilistic nature of quantum) make it impossible to do debugging the way we do on classical algorithms. More specifically, the SOP formalism was introduced as a solution to circuit equivalence: To check the equivalence between two circuits $\mathcal C_1$ and $\mathcal C_2$, the system represents $\mathcal C_2^\dagger \circ\mathcal C_1$ as an SOP term (where $\mathcal C_2^\dagger$ can be seen as the inverse of $\mathcal C_2$, easy to describe from it). It then tries to reduce it to the identity. If successful, this shows $\mathcal{C}_1$ and $\mathcal C_2$ to represent the same unitary. Otherwise, the system searches for a witness that the term at hand does not represent the identity. As such, the system has been used in several different projects (e.g.~\cite{beaudrap2020fast,Kissinger2020reducing}) to check precisely for circuit equivalence. It was later extended to account for families of morphisms and used within environment Qbricks \cite{chareton2020deductive,chareton2021survey} together with automated solvers to verify algorithms and routines such as quantum phase estimation, Grover's search and Shor's algorithm.

Amongst other applications of the Sum-Over-Paths, we may cite noiseless simulation of quantum processes, where the rewrite strategy is used to reduce the number of variables in the term, effectively decreasing the number of summands when expanding the term to actually compute its semantics. It is for instance one of the simulators implemented in the supercomputer Atos QLM \cite{Haidar2022qlm}.

While the initial suggestion for Sum-Over-Paths focussed on the Clifford+T fragment -- a universal fragment of quantum computing, i.e.~a restriction still capable of approximating with arbitrary precision any quantum process --, it also provided some interesting result for the Clifford fragment. It is known that the latter is not universal \cite{clifford-not-universal}, and actually efficiently simulable with a classical computer, so it is a good test for the relevance of a formalism to check how it handles them. And indeed, it was shown \cite{SOP} a ``weak'' form of confluence of the rewrite system in the Clifford fragment. More precisely, in this fragment, $\mathcal C_2^\dagger \circ\mathcal C_1$ reduces (in polynomial time) to the identity if and only if $\mathcal C_2$ and $\mathcal C_1$ represent the same unitary operator.

However, SOP terms may represent more than unitary operator, but actually any linear map. With those, it is still possible to define the above restrictions, and the rewrite system was extended in \cite{SOP-Clifford} to get confluence for the -- not necessarily unitary -- Clifford fragment. When moving to a universal fragment -- like Clifford+T -- it is expected that we cannot provide a rewrite system with all the good properties of the Clifford case: either reduction is not polynomial, or there is no confluence, or we need an infinite number of rewrites, ... The reason for this is that if we could provide such a system, circuit equivalence would become polynomial, while we know that it is QMA-complete -- a quantum variant of NP-complete -- \cite{Bookatz2014qma,Janzing2005identity}. A weaker property than that of confluence we can ask for is completeness: the question here is to decide whether two equivalent terms can be turned into one another, \emph{with the assumption that rewrites can be used in both directions} (in that case, we rather speak of an equational theory, or axiomatisation, than a rewrite system).

\medskip\noindent
\textbf{Contributions.~}
In this paper, we address the problem of completeness first for arguably the simplest universal fragment of quantum computing, which is \emph{Toffoli-Hadamard}. We provide a fairly simple rewrite system that we show complete for the fragment, and also exhibit two important drawbacks: the non-confluence of the rewrite strategy and the potential explosion of the size of the morphisms during the rewrite. We then show how the rewrite strategy can be tweaked to reach completeness for every dyadic fragment -- where we allow phase gates with phase a multiple of $\frac\pi{2^k}$ for some $k$ --, a restriction that encompasses Clifford, Clifford+T and Toffoli-Hadamard, and is crucially used in the Quantum Fourier Transform, a central block for algorithms such as Shor's and Quantum Phase Estimation.

\medskip\noindent
\textbf{Structure of the paper.~}
After reviewing the Sum-Over-Paths formalism in \autoref{sec:SOP}, the ZH-Calculus in \autoref{sec:ZH} and the links between the two in \autoref{sec:SOP<->ZH}, we show the completeness result for the Toffoli-Hadamard fragment in \autoref{sec:TH-completeness}. The extension to the dyadic fragments is then handled in \autoref{sec:completeness}. 

\noindent
The missing proofs can be found in the appendix.

\section{Sums-Over-Paths}
\label{sec:SOP}

\subsection{The Morphisms}

Sums-Over-Paths \cite{SOP} are a way to symbolically describe linear maps of dimensions powers of $2$ over the complex numbers. These linear maps form a $\dagger$-compact monoidal category \cite{mac2013categories,selinger2010survey} denoted $\cat{Qubit}$ where the objects are natural numbers (this makes the category a PROP \cite{Lack-PROP,PhD.Zanasi}), where morphisms from $n$ to $m$ are linear maps $\mathbb C^{2^n}\to \mathbb C^{2^m}$, and where $(\cdot\circ\cdot)$ (resp.~$(\cdot\otimes\cdot)$) is the usual composition (resp.~tensor product) of linear maps. The category is endowed with a \emph{symmetric braiding} $\sigma_{n,m}:n+m\to m+n$, as well as a \emph{compact structure} $(\eta_n:0\to2n,\epsilon_n:2n\to0)$. Furthermore, there exists an inductive contravariant endofunctor $(\cdot)^\dagger$, that behaves properly with the symmetric braiding and the compact structure. For more information on these structures, see \cite{selinger2010survey}.

The formalism of SOP relies heavily on the Dirac notation for quantum states and operators of $\cat{Qubit}$. The two canonical states of a single qubit are denoted $\ket0$ and $\ket1$. They form a basis of $\mathbb C^2$, and can be viewed as vectors $\ket0=\begin{pmatrix}1&0\end{pmatrix}^\intercal$ and $\ket1=\begin{pmatrix}0&1\end{pmatrix}^\intercal$. A 1-qubit state is then merely a normalised linear combination of these two elements. Using $(\cdot\otimes\cdot)$, they can be used to build the basis states of larger systems, e.g.~$\ket{010}:=\ket0\otimes\ket1\otimes\ket0$ is a basis state of a 3-qubit system. Again, the state of an arbitrary $n$-qubit system is a normalised linear combination of the $2^n$ basis states. We will use extensively the following notation $\bra x$ to represent the dagger (transpose conjugate) of $\ket x$. The identity on a qubit can then be expressed in Dirac notation as $\mathbb I:=\ketbra00+\ketbra11$, where $\ketbra xy := \ket x \circ \bra y$.

We give in the following the definition of Sum-Over-Paths of \cite{SOP-Clifford}, which differs from \cite{SOP} in the way the input qubits are treated, by making them more symmetric with the outputs. This makes some concepts, like the $\dagger$ or the compact structure, more natural.

\begin{definition}[$\cat{SOP}$]
We define $\cat{SOP}$ as the collection of objects $\mathbb N$ and morphisms between them that are tuples $f:n\to m := (s, \vec y, P, \vec O, \vec I)$, which we write:
$\displaystyle s\sum\limits_{\vec y\in V^k} e^{2i\pi P(\vec y)}\ketbra{\vec O(\vec y)}{\vec I(\vec y)}$\\
where $s\in\mathbb{R}$, $\vec y\in V^k$ with $V$ a set of variables, $P\in \mathbb R[X_1,\ldots,X_{k}]/(X_i^2-X_i)$ is called the \emph{phase polynomial} of $f$\footnote{The quotient in the phase polynomial means that we consider each occurrence of the square of a variable to be equal to the variable itself $X_i^2-X_i=0$, since they will be evaluated over $\{0,1\}$. We can further constrain the polynomial by taking it modulo $1$, but only when considered as an element of a group, once all the products have been evaluated, as otherwise all phase polynomials would be evaluated to $0$ as $P = P\times1 = P\times 0 = 0$.}, $\vec O \in \left(\mathbb F_2[X_1,\ldots,X_{k}]\right)^m$ and $\vec I \in \left(\mathbb F_2[X_1,\ldots,X_{k}]\right)^n$ -- $\mathbb F_2$ being the binary field, whose only two elements are its additive and multiplicative identities, denoted $0$ and $1$.

\noindent Compositions are obtained as\footnote{To avoid further clutter, we may not specify the variables of polynomials, e.g.~$P_g$ actually stands for $P_g(\vec y_g)$, $\vec O_g$ for $\vec O_g(\vec y_g)$ etc...}:
\begin{itemize}
\item $f\circ g := \frac{s_fs_g}{2^m}\sum\limits_{\substack{\vec y_f,\vec y_g\\\vec y\in V^m}} e^{2i\pi \left(P_g+P_f+\frac{\vec O_g\cdot \vec y+\vec I_f\cdot \vec y}{2}\right)}\ketbra{\vec O_f}{\vec I_g}$ where $m=\abs{\vec I_f}=\abs{\vec O_g}$
\item $f\otimes g := s_fs_g\sum\limits_{\substack{\vec y_f,\vec y_g}} e^{2i\pi (P_g+P_f)}\ketbra{\vec O_f\vec O_g}{\vec I_f\vec I_g}$
\end{itemize}

\noindent We distinguish particular morphisms:
\begin{itemize}
\item Identity morphisms $id_n : \sum\limits_{\vec y\in V^n}\ketbra{\vec y}{\vec y}$
\item Symmetric braidings $\sigma_{n,m}= \sum\limits_{\vec y_1,\vec y_2}\ketbra{\vec y_2,\vec y_1}{\vec y_1,\vec y_2}$
\item Morphisms for compact structure $\eta_n= \sum\limits_{\vec y} \ketbra{\vec y,\vec y}{}$ and $\epsilon_n= \sum\limits_{\vec y} \ketbra{}{\vec y,\vec y}$
\end{itemize}

\noindent We also distinguish two functors that have $\cat{SOP}$ as a domain:
\begin{itemize}
\item The $\dagger$-functor is given by: $f^\dagger :=  s\sum\limits_{\vec y} e^{-2i\pi P}\ketbra{\vec I}{\vec O}$
\item The functor $\interp{\cdot}:\cat{SOP}\to\cat{Qubit}$ is defined as: $\interp{f}:= s\sum\limits_{\vec y\in \{0,1\}^k} e^{2i\pi P(\vec y)}\ketbra{\vec O(\vec y)}{\vec I(\vec y)}$
\end{itemize}
\end{definition}

The $\dagger$-functor is particularly important to characterise maps that are unitary -- the pure transformations that are allowed by quantum mechanics: $f$ is called unitary if $\interp{f^\dagger\circ f}=id$.

\begin{example}
\label{ex:H-Tof}
The Hadamard and Toffoli gates (which justify the name of the first fragment we will consider in the following), can be represented in this formalism as:
\[H:=\frac1{\sqrt2}\sum_{y_0,y_1}e^{2i\pi\frac{y_0y_1}2}\ketbra{y_1}{y_0}\qquad\operatorname{Tof}:=\sum_{y_0,y_1,y_2}\ketbra{y_0,y_1,y_2\oplus y_0y_1}{y_0,y_1,y_2}\]
It can be checked that both operators are unitary.
\end{example}

As is customary, we consider equality of the SOP morphisms up to $\alpha$-conversion, i.e.~renaming of the variables.
Notice that the definition of the composition $(\cdot\circ\cdot)$ gets somewhat involved. This is to cope with the way we deal with the inputs, which can be any boolean polynomial. The additional terms $\frac{\vec O_g\cdot \vec y+\vec I_f\cdot \vec y}{2}$ enforce that $O_{gi}=I_{fi}$ for all $0\leq i <m$. Indeed, when summing over the variable $y_i$, we get $(1+e^{i\pi(O_{gi}+I_{fi})})$ -- which is non-null only when $O_{gi}=I_{fi}$ -- as a factor of the whole morphism. This presentation has the advantage of keeping the size of the morphism polynomial with the size of the quantum circuit -- or ZH-diagram, see below -- it can be built from, no matter what gate set is used. A downside, however, is that the above does not directly constitute a category, as for instance $id\circ id\neq id$. However, it suffices to quotient the formalism with rewrite rules to turn it into a proper category \cite{SOP-Clifford}, hence justifying the use of the term ``functor'' for the last two maps.

\subsection{A Rewrite System}

We hence give in \autoref{fig:rewrite-TH} a set of rewrite rules denoted $\rewrite{\text{TH}}$ that induces an equational theory $\underset{\textnormal{TH}}{\sim}$ (the symmetric and transitive closure of $\rewrite{\text{TH}}$).

\begin{figure}[!ht]

\[\sum_{\vec y} e^{2i\pi P}\ketbra{\vec O}{\vec I}
\rewrite{y_0\notin \Var(P,\vec O,\vec I)} 2\sum_{\vec y\setminus\{y_0\}} e^{2i\pi P}\ketbra{\vec O}{\vec I}\tag{Elim}\]
\[t = \sum e^{2i\pi\left(\frac{y_0}{2}(y_i\widehat Q + \widehat{Q'} +1) + R\right)}\ket{\vec O}\!\!\bra{\vec I}
\rewrite{y_0\notin\Var(Q,Q',R,\vec O,\vec I)\\y_i\notin\Var(Q,Q')\\QQ'=Q'} t[y_i\leftarrow 1\oplus Q']\tag{HHgen}\]
\[t = \sum e^{2i\pi\left(\frac{y_0}{2}\widehat Q + \frac{y_0'}{2}\widehat{Q'} + R\right)}\ketbra{\vec O}{\vec I}
\rewrite{y_0,y_0'\notin \Var(Q,Q',R,\vec O,\vec I)} 2t[y_0'\leftarrow y_0\oplus y_0Q]\tag{HHnl}\]
\[t=\sum_{\vec y} e^{2i\pi\left(P\right)}|\cdots,\overset{O_i}{\overbrace{y_0{\oplus} O_i'}},\cdots\rangle \!\!\bra{\vec I}
\rewrite{O_i'\neq 0\\y_0\notin \Var(O_1,\ldots,O_{i-1},O_i')}
t[y_0{\leftarrow} O_i]\tag{ket}\]
\vspace*{-0.5em}
\[t=\sum_{\vec y} e^{2i\pi\left(P\right)}\ket{\vec O}\!\!\langle \cdots,\overset{I_i}{\overbrace{y_0{\oplus} I_i'}},\cdots|
\rewrite{I_i'\neq 0\\y_0\notin \Var(\vec O,I_1,\ldots,I_{i-1},I_i')}
t[y_0{\leftarrow} I_i]\tag{bra}\]
\[s\sum_{\vec y} e^{2i\pi\left(\frac{y_0}{2} + R\right)}\ketbra{\vec O}{\vec I}
\rewrite{R\neq 0 \text{ or } \vec O,\vec I\neq \vec 0\\y_0\notin\Var(R,\vec O,\vec I)} \sum_{y_0} e^{2i\pi\left(\frac{y_0}{2}\right)}\ketbra{0,\cdots,0}{0,\cdots,0}\tag{Z}\]
\caption[]{Rewrite system $\rewrite{\text{TH}}$}
\label{fig:rewrite-TH}
\end{figure}

We need in the conditions of all the rules the function $\Var$, that, given a set or list of polynomials, gives the set of all variables used in them.
We call \emph{internal variable} a variable that is present in the morphism $t$ but not in its inputs/outputs, i.e.~a variable $y_0$ such that $y_0\in\Var(t)\setminus\Var(\vec O,\vec I)$. It is worth noting that searching for an occurrence of, and applying any of these rules \emph{once} can be done in polynomial time.

The rules (ket) and (bra) correspond to changes of variables that are necessary to get a unique normal form in the Clifford case \cite{SOP-Clifford}, and the rule (Elim) simply gets rid of a variable that is used nowhere in the term and simply contributes to a global phase (since that variable is supposed to range over two values, it contributes to a multiplicative scalar $2$).

The rules (HHgen), (HHnl) and (Z) all stem from a particular observation: In the morphism $t = \sum e^{2i\pi\left(\frac{y_0}{2}\widehat Q + R\right)}\ketbra{\vec O}{\vec I}$ where $y_0$ is internal and not in $R$, if $Q$ is evaluated to $1$, then the whole morphism is interpreted as null. This is exactly what (Z) captures -- and the conditions on $R$, $\vec O$ and $I$ are simply here to avoid applying the rule indefinitely.

The rule (HHgen) deals with a case where the polynomial $Q$ can be forced to $0$, whilst the rule (HHnl) factorises different such polynomials $Q$ into one.

\begin{remark}
When performing certain rules, we have to substitute a variable by a boolean polynomial $Q$. We need to be able to understand $Q$ as a phase polynomial, as the variable can occur in $P$. The map $\widehat{(\cdot)}:\mathbb F_2[X_1,\ldots,X_{k}]\to \mathbb R[X_1,\ldots,X_{k}]/(X_i^2-X_i)$, serves this purpose. It is inductively defined as:
\[ \widehat{Q_1Q_2}=\widehat{Q_1}\widehat{Q_2}\qquad\qquad \widehat{Q_1\oplus Q_2}=\widehat{Q_1}+\widehat{Q_2}-2\widehat{Q_1}\widehat{Q_2} \qquad\qquad \widehat{y_i}=y_i \qquad\qquad \widehat{\alpha}=\alpha\]
\end{remark}

%

\begin{remark}
When rewriting SOP-morphisms for simplification or verification, it can be beneficial to not only reduce the number of variables -- which is what all rules but (ket/bra) do --, but also to keep the size of the phase polynomial as short as possible. In that respect, the rule (HHgen) can be generalised to:
\[t = \sum e^{2i\pi\left(\frac{y_0}{2}(y_i\widehat Q + \widehat{QQ'} +1) + R\right)}\ketbra{\vec O}{\vec I}
\rewrite{y_0\notin\Var(Q,Q',R,\vec O,\vec I)\\y_i\notin\Var(Q,Q')} t[y_i\leftarrow 1\oplus Q']\tag{HHgen'}\]
where the polynomial $Q'$ can here be smaller (in the number of terms) than the one in (HHgen). However, finding a ``minimal'' $Q'$ for this rule is a hard problem, as it requires the use of boolean Groebner bases \cite{Boolean-Grobner-Bases}. (HHgen) can be seen as a particular case of (HHgen'), where $Q'\leftarrow QQ'$, as $Q\times QQ' = QQ'$. The rule (HHgen) is sufficient for the scope of this paper.
\end{remark}

In \cite{SOP} was introduced a particular and important rule:
\[t = \sum e^{2i\pi\left(\frac{y_0}{2}(y_i + \widehat{Q}) + R\right)}\ketbra{\vec O}{\vec I}
\rewrite{y_0\notin\Var(Q,R,\vec O,\vec I)\\y_i\notin\Var(Q)} 2\sum e^{2i\pi R[y_i\leftarrow \widehat Q]}\left(\ketbra{\vec O}{\vec I}\right)[y_i\leftarrow Q]\tag{HH}\]
This one is a particular case of the rule (HHgen) (with additional use of the rule (Elim)), where $Q\leftarrow 1$, $Q'\leftarrow Q\oplus1$. Moreover, the rule gave enough power to the formalism to become a $\dagger$-compact PROP \cite{SOP-Clifford}. We can extend this result here thanks to:
\begin{proposition}
\label{prop:TH-local}
\[\forall t_1,t_2 \in \cat{SOP},~~t_1\underset{\textnormal{TH}}{\sim}t_2 \implies \begin{cases}
A\circ t_1 \circ B \underset{\textnormal{TH}}{\sim} A\circ t_2 \circ B & \text{for all $A$, $B$ composable}\\
A\otimes t_1 \otimes B \underset{\textnormal{TH}}{\sim} A\otimes t_2 \otimes B & \text{for all $A$, $B$}
\end{cases}\]
\end{proposition}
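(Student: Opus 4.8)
The plan is to prove the statement by reducing it to individual rewrite steps and then to a case analysis over the rules of $\rewrite{\textnormal{TH}}$. Since $\underset{\textnormal{TH}}{\sim}$ is by definition the symmetric and transitive closure of $\rewrite{\textnormal{TH}}$, and since $t\mapsto A\circ t\circ B$ and $t\mapsto A\otimes t\otimes B$ are genuine functions on SOP terms, it suffices to establish the implication for a single rewrite step $t_1\rewrite{\textnormal{TH}}t_2$: the general case follows by induction on the length of the derivation witnessing $t_1\underset{\textnormal{TH}}{\sim}t_2$, using that a function which relates all generators of a symmetric and transitive closure automatically relates every pair in that closure. Moreover, reading $A\circ t\circ B$ as $A\circ(t\circ B)$ and $A\otimes t\otimes B$ as $A\otimes(t\otimes B)$, I would further reduce to the four elementary operations of left/right composition and left/right tensoring by a fixed morphism, and treat each in turn.

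The tensor cases are immediate: in $t_1\otimes B$ (and symmetrically $A\otimes t_1$) the variables of the two factors are disjoint up to $\alpha$-conversion and the phase polynomials merely add, so the local syntactic pattern on which the chosen rule acts -- the relevant variable together with its coefficient in the phase polynomial and its occurrences in $\vec O,\vec I$ -- is left entirely untouched by the other factor. Hence the very same instance of the rule rewrites $t_1\otimes B$ directly to $t_2\otimes B$.

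For composition the key observation is that the formula for $(\cdot\circ\cdot)$ only augments the phase polynomial by $P_A$ (resp.\ $P_B$) and by the coupling term $\frac{\vec O\cdot\vec y+\vec I\cdot\vec y}{2}$ built from the boundary being contracted, neither of which involves any variable disjoint from that boundary. I would split on the rule used. For (Elim), (HHgen), (HHnl) and (Z) the rewrite acts on variables \emph{internal} to $t_1$; such a variable keeps its coefficient in the phase polynomial of the composite and stays internal there, so the side conditions survive and the rule still fires. The only point to verify is that the substitutions these rules perform on $\vec O$ or $\vec I$ commute with composition, which they do because $(\cdot\circ\cdot)$ treats the contracted boundary functorially: substituting in $t_1$ and then composing yields the same term as composing and then substituting. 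For (Z) one observes in addition that $A\circ t_1$ and $A\circ t_2$ both still carry a free $\frac{y_0}{2}$ summand with $y_0$ internal, so (Z) reduces each of them to the canonical zero morphism of the appropriate type, making them $\underset{\textnormal{TH}}{\sim}$-equal.

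The genuine obstacle is the interaction of the boundary rules (ket) and (bra) with a composition that \emph{consumes} the boundary they act on. When the boundary survives -- (ket) under right composition $t_1\circ B$ and (bra) under left composition $A\circ t_1$ -- the pattern is still exposed and the rule replays verbatim, its substitution again commuting with composition. The problematic directions are (ket) under left composition $A\circ t_1$, whose outputs are those of $A$ while $\vec O_{t_1}$ is absorbed into the coupling term, and dually, via $(\cdot)^\dagger$, (bra) under right composition, where the pattern $\ket{\dots,y_0\oplus O_i',\dots}$ is no longer on the boundary. Since (ket) and (bra) are just the affine change of variable $y_0\leftarrow y_0\oplus O_i'$, the crux is an auxiliary lemma: such a change of variable on an \emph{internal} variable is admissible, i.e.\ $u\underset{\textnormal{TH}}{\sim}u[y_0\leftarrow y_0\oplus Q]$ whenever $y_0$ is internal in $u$ and $y_0\notin\Var(Q)$. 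I would establish it by running the $\frac{y_0'}{2}(\cdot)$ mechanism of (HH)/(HHgen) backwards to introduce a fresh variable tied to $y_0\oplus Q$ by a delta factor, rename, and then eliminate the auxiliary variable; the two composites $A\circ t_1$ and $A\circ t_2$ differ by exactly one such internal affine shift, which the lemma absorbs. I expect this lemma, rather than the bookkeeping of the remaining cases, to be the main difficulty.
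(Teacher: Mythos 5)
Your proposal is correct and its skeleton coincides with the paper's proof: reduce to a single rewrite step, observe that the tensor case is immediate, and then check rule by rule that internal variables of $t_1$ stay internal in $A\circ t_1\circ B$ (so that (Elim), (HHgen), (HHnl) replay verbatim and the substitutions commute with composition), with (Z) handled by reducing both composites to the common canonical zero term. The one place you genuinely diverge is the crux you correctly isolate, namely (ket)/(bra) when the composition consumes the boundary they act on. The paper does not introduce your auxiliary change-of-variable lemma: it observes that the coupling term $\frac{(\,y_0+\widehat{O_i'}+\widehat{I_{Ai}}\,)x_i}{2}$ produced by the composition formula \emph{already is} the ``delta factor'' you propose to create, so a single (HH) step on the boundary variable $x_i$ sends $A\circ t_1\circ B$ to a common reduct, and a second (HH) step sends $A\circ t_2\circ B$ to the same reduct, closing the zigzag directly. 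Your route packages the identical mechanism as a standalone admissibility lemma ($u\underset{\textnormal{TH}}{\sim}u[y_0\leftarrow y_0\oplus Q]$ for $y_0$ internal), proved by introducing a fresh variable tied to $y_0\oplus Q$ and eliminating either endpoint by (HH); this is slightly longer but more modular and reusable, and it makes explicit something the paper leaves implicit. Do note that both your lemma and the paper's computation rely on silently identifying $\frac{x}{2}\widehat{a\oplus b}$ with $\frac{x}{2}(\widehat{a}+\widehat{b})$, i.e.\ discarding integer-valued summands of the phase polynomial as licensed by the paper's footnote on working modulo $1$; if you write the lemma up, flag that identification explicitly, since the literal pattern of (HH) only matches one of the two eliminations otherwise.
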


\begin{proof}
In appendix at page \pageref{prf:TH-local}.
\end{proof}

Thanks to this Proposition, and since $\cat{SOP}/\underset{\textnormal{HH}}{\sim}$ is a $\dagger$-compact PROP by \cite{SOP-Clifford}, we get:
\begin{corollary}
$\cat{SOP}/\underset{\textnormal{TH}}{\sim}$ is a $\dagger$-compact PROP.
\end{corollary}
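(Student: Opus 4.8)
The plan is to realise $\cat{SOP}/\underset{\textnormal{TH}}{\sim}$ as a further quotient of the $\dagger$-compact PROP $\cat{SOP}/\underset{\textnormal{HH}}{\sim}$ supplied by \cite{SOP-Clifford}, and then to check that the coarser equivalence $\underset{\textnormal{TH}}{\sim}$ is compatible with each of the three pieces of structure $\circ$, $\otimes$ and $(\cdot)^\dagger$. First I would observe that $\underset{\textnormal{HH}}{\sim}\,\subseteq\,\underset{\textnormal{TH}}{\sim}$: the generators of $\underset{\textnormal{HH}}{\sim}$ — the Clifford rules of \cite{SOP-Clifford}, which include (Elim), (ket), (bra) and (HH) — are all derivable in $\rewrite{\text{TH}}$, the first three being literally rules of \autoref{fig:rewrite-TH} while (HH) is the instance of (HHgen) followed by (Elim) identified above. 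Consequently the quotient map $\cat{SOP}\to\cat{SOP}/\underset{\textnormal{TH}}{\sim}$ factors through $\cat{SOP}/\underset{\textnormal{HH}}{\sim}$, and any equation already holding in the latter (any $\underset{\textnormal{HH}}{\sim}$-equivalence) automatically holds in the former.

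Compatibility of $\underset{\textnormal{TH}}{\sim}$ with $\circ$ and $\otimes$ is exactly \autoref{prop:TH-local}, so these two operations descend to well-defined operations on $\underset{\textnormal{TH}}{\sim}$-classes. All the remaining axioms — associativity and unitality of $\circ$ and $\otimes$, the interchange law, naturality and coherence of the braiding $\sigma$, and the snake equations for $(\eta_n,\epsilon_n)$ — are witnessed by equalities already valid in $\cat{SOP}/\underset{\textnormal{HH}}{\sim}$, hence are $\underset{\textnormal{HH}}{\sim}$- and therefore $\underset{\textnormal{TH}}{\sim}$-equivalences; so they persist in the quotient, which is thus a compact PROP.

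The one genuinely new verification, and the step I expect to be the main obstacle since \autoref{prop:TH-local} says nothing about the dagger, is that $(\cdot)^\dagger$ descends, i.e.\ that $t_1\underset{\textnormal{TH}}{\sim}t_2\Rightarrow t_1^\dagger\underset{\textnormal{TH}}{\sim}t_2^\dagger$. As $(\cdot)^\dagger$ is involutive and $\underset{\textnormal{TH}}{\sim}$ is generated by the single-step rewrites of \autoref{fig:rewrite-TH}, it suffices to show for each rule turning $t$ into $t'$ that $t^\dagger\underset{\textnormal{TH}}{\sim}(t')^\dagger$. Recalling that $(\cdot)^\dagger$ swaps $\vec O$ with $\vec I$ and negates the phase: (Elim) and (Z) are sent to themselves, their side-conditions being symmetric in $\vec O,\vec I$ and insensitive to negating the phase; (ket) and (bra) are interchanged by the dagger; and for (HHgen) and (HHnl) the prescribed substitution commutes with $(\cdot)^\dagger$ applied uniformly, while the conditions — notably $QQ'=Q'$ and the $\Var$-constraints — are unaffected. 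The delicate point is reconciling the negated phase $-\frac{y_0}{2}(y_i\widehat Q+\widehat{Q'}+1)-R$ with the shape demanded by the rule: this works because each half-integer monomial $\frac m2$ with $m$ Boolean satisfies $\frac m2\equiv-\frac m2\pmod 1$, so the dagger image is, modulo $1$, again an instance of the same rule (with $R$ replaced by $-R$). Once $(\cdot)^\dagger$ is seen to respect $\underset{\textnormal{TH}}{\sim}$, its compatibility with $\circ,\otimes,\sigma,\eta_n,\epsilon_n$ descends from $\cat{SOP}/\underset{\textnormal{HH}}{\sim}$ as in the previous paragraph, and we conclude that $\cat{SOP}/\underset{\textnormal{TH}}{\sim}$ is a $\dagger$-compact PROP.
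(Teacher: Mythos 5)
Your proposal is correct and follows essentially the same route as the paper: the paper's proof is just the observation that (HH) is derivable from (HHgen) plus (Elim), that $\cat{SOP}/\underset{\textnormal{HH}}{\sim}$ is already a $\dagger$-compact PROP by \cite{SOP-Clifford}, and that \autoref{prop:TH-local} makes $\underset{\textnormal{TH}}{\sim}$ a congruence for $\circ$ and $\otimes$, so the structure descends to the coarser quotient. Your extra explicit verification that $(\cdot)^\dagger$ respects $\underset{\textnormal{TH}}{\sim}$ (rule by rule, using that each half-integer phase monomial is invariant under negation modulo $1$) is a point the paper leaves implicit, and you discharge it correctly.
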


The set of rules was obviously chosen so as to preserve the semantics:
\begin{proposition}[Soundness]
\label{prop:TH-soundness}
For any two $\cat{SOP}$ morphisms $t_1$ and $t_2$, if $t_1\overset{\ast}{\rewrite{\textnormal{TH}}}t_2$, then $\interp{t_1}=\interp{t_2}$.
\end{proposition}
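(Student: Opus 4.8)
The plan is to reduce to a single rewrite step and then check each of the six rules separately. Since $\overset{\ast}{\rewrite{\textnormal{TH}}}$ is the reflexive–transitive closure of $\rewrite{\textnormal{TH}}$ and equality of linear maps is reflexive and transitive, it suffices to prove that $t_1\rewrite{\textnormal{TH}}t_2$ implies $\interp{t_1}=\interp{t_2}$ for one application of a rule. By definition of $\interp{\cdot}$, the interpretation is computed by letting every variable range over $\{0,1\}$; hence the soundness of each rule amounts to an identity between two finite sums of terms $e^{2i\pi P}\ketbra{\vec O}{\vec I}$ indexed by boolean assignments. The one elementary fact I will use repeatedly is that, for an internal variable $y_0$ occurring in the phase only through a term $\tfrac{y_0}{2}A$ with $A$ integer-valued on booleans, $\sum_{y_0\in\{0,1\}}e^{2i\pi\frac{y_0}{2}A}=1+(-1)^{A}$, which equals $2$ when $A$ is even and $0$ when $A$ is odd. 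Summing out such a $y_0$ therefore either scales by $2$ or installs the boolean constraint ``$A$ even''.

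The three easy rules are dispatched directly. For (Elim), $y_0$ occurs nowhere, so the free boolean sum over it contributes exactly the factor $2$ displayed on the right. For (ket) and (bra), the side condition $y_0\notin\Var(O_1,\dots,O_{i-1},O_i')$ makes $y_0\mapsto y_0\oplus O_i'$ a bijection of $\{0,1\}$ at every fixed value of the remaining variables; re-indexing the sum along this bijection leaves $\interp{\cdot}$ unchanged and realises the substitution $y_0\leftarrow O_i$. For (Z), summing $e^{2i\pi(\frac{y_0}{2}+R)}$ over $y_0$ gives the factor $1+e^{i\pi}=0$, so the left-hand side is the zero map; the right-hand side equals $\sum_{y_0}e^{i\pi y_0}\ketbra{0,\cdots,0}{0,\cdots,0}=0$ as well, the side conditions serving only to prevent non-termination.

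The substance of the proof is in (HHgen) and (HHnl), where the bookkeeping between $\mathbb F_2$-arithmetic (used in $\vec O$, $\vec I$ and in the substitutions) and real phase arithmetic modulo $2$ (used in the exponent) must be handled through $\widehat{(\cdot)}$. For (HHgen) I would substitute $y_i\leftarrow 1\oplus Q'$ on the right and compute its effect on the phase via $\widehat{1\oplus Q'}=1-\widehat{Q'}$ and $\widehat{QQ'}=\widehat Q\,\widehat{Q'}$: the condition $QQ'=Q'$ collapses $(1-\widehat{Q'})\widehat Q+\widehat{Q'}+1$ to $\widehat Q+1$, so the right-hand phase term becomes $\tfrac{y_0}{2}(\widehat Q+1)$. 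Summing out $y_0$ on both sides then yields the same scalar and the same surviving terms: on the left the constraint ``$y_i\widehat Q+\widehat{Q'}+1$ even'' keeps, for each assignment with $Q=1$, exactly the value $y_i=1\oplus Q'$ and kills everything when $Q=0$, which is precisely what the substitution produces on the right once its own $y_0$-sum contributes the factor $2$ exactly on the assignments with $Q=1$. The rule (HHnl) is analogous: substituting $y_0'\leftarrow y_0\oplus y_0Q$ turns $\tfrac{y_0}{2}\widehat Q+\tfrac{y_0'}{2}\widehat{Q'}$ into $\tfrac{y_0}{2}(\widehat Q+\widehat{Q'}-\widehat Q\,\widehat{Q'})$, whose coefficient is the arithmetisation of $Q\vee Q'$; summing out the remaining $y_0$ gives $2$ exactly when $Q=Q'=0$, matching the product of the two independent interference factors on the left together with the explicit factor $2$.

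The main obstacle is entirely contained in this last paragraph: verifying that the destructive-interference constraint produced by summing out the internal ``Hadamard'' variable coincides with the prescribed $\mathbb F_2$-substitution, and that the accompanying scalar factors agree. This hinges on translating each substitution correctly through $\widehat{(\cdot)}$ and on using the algebraic side conditions — in particular $QQ'=Q'$ for (HHgen) and the boolean identity $Q+Q'-QQ'=Q\vee Q'$ for (HHnl) — as exactly the hypotheses that make the two sides equal. Once these identities are established the remaining cases are routine.
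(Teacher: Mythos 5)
Your proof is correct and follows essentially the same route as the paper's: reduce to a single rewrite step, sum out the internal variable to obtain an interference factor $1+(-1)^{A}$, and use the side conditions ($QQ'=Q'$ for (HHgen), the arithmetisation of $Q\vee Q'$ for (HHnl)) to match the surviving terms and scalars on both sides. The only difference is cosmetic — you phrase the (HHgen) case as a constraint analysis where the paper splits the $y_i$-sum into two branches, and you verify the (HHnl) scalar identity algebraically where the paper uses a truth table — and you additionally spell out the easy rules (Elim), (ket), (bra), (Z) that the paper's proof leaves implicit.
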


\begin{proof}
In appendix at page \pageref{prf:TH-soundness}.
\end{proof}

\begin{example}
The following morphism:
\[\sum_{\vec y}e^{2i\pi(\frac{y_0y_1y_2}2+\frac{y_2}2+\frac{y_2y_3y_4}2)}\ketbra{y_4}{y_0}\]
is irreducible using the rules of \cite{SOP} and \cite{SOP-Clifford}. However, here it can be reduced to:
\begin{align*}
\sum_{y_0,y_1,y_2,y_3,y_4}e^{2i\pi(\frac{y_0y_1y_2}2+\frac{y_2}2+\frac{y_2y_3y_4}2)}&\ketbra{y_4}{y_0}\\
\overset{\text{(HHnl)}}{\rewrite{y_3\leftarrow y_1\oplus y_0y_1y_2}}
&~2\sum_{y_0,y_1,y_2,y_4}e^{2i\pi(\frac{y_0y_1y_2}2+\frac{y_2}2+\frac{y_1y_2y_4}2+\frac{y_0y_1y_2y_4}2)}\ketbra{y_4}{y_0}\\
\overset{\text{(HHgen)}}{\rewrite{y_1\leftarrow 1}}
&~2\sum_{y_0,y_2,y_4}e^{2i\pi(\frac{y_0y_2}2+\frac{y_2}2+\frac{y_2y_4}2+\frac{y_0y_2y_4}2)}\ketbra{y_4}{y_0}
\end{align*}
The first rewrite can be made clearer by writing the phase polynomial as $\frac{y_1}2(y_0y_2)+\frac{y_3}2(y_2y_4)+\frac{y_2}2$, and the second one by writing it as $\frac{y_2}2\left(y_1(y_0+y_4+y_0y_4)+0+1\right)$. Recall that variables are binary variables, so $y_i^2=y_i$.
\end{example}

\section{The $\cat{ZH}$-Calculus}
\label{sec:ZH}

The graphical calculi ZX, ZW and ZH \cite{ZH,interacting,ghz-w} are calculi for quantum computing, with a tight link with the Sum-Over-Paths formalism \cite{MSc.Lemonnier,LvdWK,SOP-Clifford}, and whose completeness was proven in particular for the Toffoli-Hadamard fragment \cite{backens2021ZHcompleteness,zw,phase-free-ZH,zx-toffoli}. 

This fragment of quantum mechanics is approximately universal \cite{toffoli-simple,toffoli}, and it is arguably the simplest one with this property. This is the fragment we will be interested in, in most of the following of the paper; and the associated completeness result will be paramount in the development of the following.

We choose to present here the ZH-Calculus, because of its proximity with both $\cat{SOP}$ and the Toffoli-Hadamard fragment. Notice however that there exist translations between all the aforementioned graphical calculi, so by composition, we can connect $\cat{SOP}$ to all of them.


$\cat{ZH}$ is a PROP whose morphisms -- read here from top to bottom -- are composed (sequentially $(\cdot\circ\cdot)$ or in parallel $(\cdot\otimes\cdot)$) from Z-spiders and H-spiders:
\begin{itemize}
\item $Z_m^n:n\to m::$\tikzfig{Z-spider}, called Z-spider
\item $H_m^n(r):n\to m::$\tikzfig{H-spider}, called H-spider, with a parameter $r\in\mathbb C$
\end{itemize}
When $r$ is not specified, the parameter in the H-spider is taken to be $-1$.

$\cat{ZH}$ is made a $\dagger$-compact PROP, which means it also has a symmetric structure $\sigma_{n,m}::\tikzfig{sigma}$, a compact structure $\left(\eta_n::\tikzfig{eta},\epsilon_n::\tikzfig{epsilon}\right)$, and a $\dagger$-functor $(\cdot)^\dagger:\cat{ZH}^{\operatorname{op}}\to\cat{ZH}$. It is defined by: $(Z_m^n)^\dagger := Z_n^m$ and $(H_m^n(r))^\dagger := H_n^m(\overline{r})$ 
where $\overline r$ is the complex conjugate of $r$. 
For convenience, we define two additional spiders:\\
\tikzfig{X-spider} and \tikzfig{X-spider-neg}

The language comes with a way of interpreting the morphisms as morphisms of $\cat{Qubit}$. The standard interpretation $\interp{\cdot}:\cat{ZH}\to\cat{Qubit}$ is a $\dagger$-compact-PROP-functor, defined as:
\[\interp{\tikzfig{Z-spider}} = \ketbra{0^m}{0^n} + \ketbra{1^m}{1^n}
\qquad\qquad\interp{~\tikzfig{id}~}=\ketbra00+\ketbra11\]
\[\interp{\tikzfig{H-spider}} = \sum_{j_k,i_k\in\{0,1\}}r^{j_1\ldots j_mi_1\ldots i_n}\ketbra{j_1,\ldots,j_m}{i_1,\ldots,i_n}\]
\[\interp{\tikzfig{eta}} = \sum_{i_k\in\{0,1\}}\ket{i_1,\ldots,i_n,i_1,\ldots,i_n}=\interp{\tikzfig{epsilon}}^\dagger\]
\[\interp{\tikzfig{sigma}} = \sum_{i_k,j_k\in\{0,1\}}\ketbra{j_1,\ldots,j_m,i_1,\ldots,i_n}{i_1,\ldots,i_n,j_1,\ldots,j_m}\]

Notice that we used the same symbol for two different functors: the two interpretations $\interp{\cdot}:\cat{SOP}\to\cat{Qubit}$ and $\interp{\cdot}:\cat{ZH}\to\cat{Qubit}$. It should be clear from the context which one is to be used.\\
The language is universal: $\forall f\in \cat{Qubit},~\exists D_f\in\cat{ZH},~~\interp{D_f} = f$. In other words, the interpretation $\interp{\cdot}$ is surjective. 

The language comes with an equational theory, which in particular gives the axioms for a $\dagger$-compact PROP. We will not present it here.

We can easily define a restriction of $\cat{ZH}$ that exactly captures the Toffoli-Hadamard fragment of quantum mechanics \cite{backens2021ZHcompleteness,phase-free-ZH}, as the language generated by:
$\left\lbrace\tikzfig{H-spider-minus-1},\tikzfig{Z-spider},\tikzfig{H-scalar-1_sqrt2}\right\rbrace$. Notice that the two black spiders can still be defined if we also define $\tikzfig{H-scalar-1_sqrt2_p}:=\tikzfig{H-scalar-1_sqrt2}^{\otimes p}$. We denote this restriction by $\cat{ZH}_{\operatorname{TH}}$.

This restriction is provided with an equational theory, given in \autoref{fig:ZH-TH-rules}\footnote{The axiomatisation provided here is that of \cite{phase-free-ZH}. It was later simplified in \cite{backens2021ZHcompleteness} in a fragment that is very close to the one we consider, but does \emph{not} contain the scalar $\frac1{\sqrt2}$. As we would rather have this scalar in the language (to properly represent the Hadamard gate), instead of giving a mix of the two axiomatisation, we decided to stick to the first one.}, that makes it complete.

\begin{figure}
~\hfill\tikzfig{ZH-rule-ZS1}\hfill\tikzfig{ZH-rule-ZS2}\hfill\tikzfig{ZH-rule-HS1}\hfill\tikzfig{ZH-rule-HS2}\hfill~\\[1.5em]
\tikzfig{ZH-rule-BA1}\hfill\tikzfig{ZH-rule-BA2}\hfill\tikzfig{ZH-rule-M}\hfill\tikzfig{ZH-rule-O}\\[1.5em]
\phantom{.}
\hfill\tikzfig{ZH-rule-AND}
\hfill\tikzfig{ZH-rule-IV}\hfill\tikzfig{ZH-rule-Z}\hfill~\\[1.5em]
\phantom{.}\hfill\tikzfig{X-spider}\hfill\tikzfig{X-spider-neg}\hfill$\tikzfig{H-scalar-1_sqrt2_p}:=\tikzfig{H-scalar-1_sqrt2}^{\otimes p}$\hfill~
\caption[]{Set of rules $\operatorname{ZH}_{\operatorname{TH}}$ \cite{phase-free-ZH}.}
\label{fig:ZH-TH-rules}
\end{figure}

\begin{theorem}[\cite{phase-free-ZH} Completeness of $\cat{ZH}_{\operatorname{TH}}/\operatorname{ZH}_{\operatorname{TH}}$]
\label{thm:ZH-completeness}
\[\forall D_1,D_2\in \cat{ZH}_{\operatorname{TH}},~~\interp{D_1}=\interp{D_2}\iff \operatorname{ZH}_{\operatorname{TH}}\vdash D_1=D_2\]
\end{theorem}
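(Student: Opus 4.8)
The plan is to prove the two implications separately, with the backward direction routine and the forward direction carrying all the weight. For the backward direction ($\operatorname{ZH}_{\operatorname{TH}}\vdash D_1=D_2$ implies $\interp{D_1}=\interp{D_2}$), I would check that each of the finitely many axioms in \autoref{fig:ZH-TH-rules} is sound, i.e.~that its two sides have equal standard interpretation in $\cat{Qubit}$. Since $\interp{\cdot}:\cat{ZH}\to\cat{Qubit}$ is a $\dagger$-compact-PROP-functor, semantic equality is automatically preserved under $(\cdot\circ\cdot)$, $(\cdot\otimes\cdot)$, the braiding and the compact structure; hence it suffices to verify each generating rule once by a direct computation of both matrices. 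This step is mechanical.

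The forward direction is the substantial one, and I would attack it through a \emph{universal normal form}. First I would fix, for every matrix $M$ (a morphism $\mathbb C^{2^n}\to\mathbb C^{2^m}$ whose entries lie in the ring $\mathbb Z[1/\sqrt2]$ realisable in the fragment), a canonical diagram $N_M\in\cat{ZH}_{\operatorname{TH}}$ that encodes $M$ entrywise — concretely, a layer of Z-spiders fanning out the $2^n$ input branches, an arithmetic gadget built from $-1$-labelled H-spiders and copies of the $\frac1{\sqrt2}$ scalar that computes each entry of $M$, and a layer of Z-spiders collecting the $2^m$ outputs — so that $\interp{N_M}=M$ by construction and $N_M$ is uniquely determined by $M$. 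The core of the argument is then to show, using only the axioms, that every diagram $D$ satisfies $\operatorname{ZH}_{\operatorname{TH}}\vdash D=N_{\interp{D}}$. I would prove this by induction on the construction of $D$ from the generators: for each generator I exhibit its normal form, and for each of $(\cdot\circ\cdot)$ and $(\cdot\otimes\cdot)$ I show that a composite of two normal forms rewrites back to a normal form. Completeness then follows immediately, since $\interp{D_1}=\interp{D_2}=:M$ forces $\operatorname{ZH}_{\operatorname{TH}}\vdash D_1=N_M=D_2$.

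The hard part will be the reduction step, because it requires deriving from the listed axioms a whole library of auxiliary equalities — spider fusion, the bialgebra and Hopf laws, and gadgets that realise addition, multiplication and copying of $\mathbb Z[1/\sqrt2]$ entries — sufficient to rewrite any composite of encodings back into a single encoding. The most delicate point is handling the $\frac1{\sqrt2}$ scalar correctly: one must ensure the synthesis gadget represents each entry \emph{canonically} (so distinct matrices never share a normal form) while remaining reachable from every diagram, and this forces careful book-keeping of powers of $\frac1{\sqrt2}$ together with a proof that the scalar fragment is itself adequately axiomatised through the definition $H_{1/\sqrt2}^{\otimes p}$. In a genuinely phase-free setting the entries are integers and the combinatorics are cleaner; it is precisely the presence of this scalar that makes the present axiomatisation preferable to the scalar-free simplification of \cite{backens2021ZHcompleteness}, and that constitutes the main obstacle of the proof.
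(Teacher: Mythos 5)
The paper does not prove \autoref{thm:ZH-completeness} at all: it is imported verbatim from \cite{phase-free-ZH}, so there is no in-paper argument to compare against, only the citation. Your sketch — soundness by checking each axiom of \autoref{fig:ZH-TH-rules} under the $\dagger$-compact-PROP-functor $\interp{\cdot}$, and completeness by rewriting every diagram to a canonical entrywise encoding of its matrix and showing these normal forms are closed under $(\cdot\circ\cdot)$ and $(\cdot\otimes\cdot)$ — is essentially the strategy actually used in the cited source (and in the original ZH completeness proof of \cite{ZH}), including your correct identification of the $\frac{1}{\sqrt2}$ scalar bookkeeping as the point where this axiomatisation differs from the scalar-free one of \cite{backens2021ZHcompleteness}.
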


\section{Translations between $\cat{SOP}$ and $\cat{ZH}$}
\label{sec:SOP<->ZH}

\subsection{From $\cat{SOP}$ to $\cat{ZH}$}

It is possible to translate $\cat{SOP}$ morphisms to ZH-diagrams using interpretation $[\cdot]^{\operatorname{ZH}}:\cat{SOP}\to\cat{ZH}$. 
A description of $[\cdot]^{\operatorname{ZH}}:\cat{SOP}\to\cat{ZH}$ was defined in \cite{MSc.Lemonnier,LvdWK} and in \cite{SOP-Clifford}. We choose the latter definition as it fits our definition of $\cat{SOP}$.
\[\left[s \sum_{\vec y} e^{2i\pi P}\ketbra{O_1,\ldots,O_m}{I_1,\ldots,I_n}\right]^{\operatorname{ZH}}:=\tikzfig{ZH-NF}\]
where the row of Z-spiders represents the variables $y_1,\ldots,y_k$.
Informally:
\begin{itemize}
\item each monomial $\alpha y_{i_1}...y_{i_s}$ in $P$ gives a single H-spider with parameter $e^{i\frac\alpha{2\pi}}$ and connected to the Z-spiders that represent $y_{i_1}$,...,$y_{i_s}$
\item each monomial $y_{i_1}...y_{i_s}$ in $O_i$ is represented by \tikzfig{and-n} where the inputs are connected to the Z-spiders that represent $y_{i_1}$,...,$y_{i_s}$. Notice that the only (non-zero) constant monomial is \tikzfig{constant-monomial}
\item these monomials are then added to form $O_i$ thanks to \tikzfig{xor-n}
\item the nodes $I_i$ are defined similarly, but upside-down
\end{itemize}
For more details, see \cite{SOP-Clifford}.

\begin{example}
The $\cat{SOP}$ morphism:
\[\frac{1}{2\sqrt{2}}\sum\limits_{\vec y}e^{2i\pi\left(\frac14y_0+\frac{1}{2}y_4y_0+\frac{1}{8}y_5y_0y_1+\frac{3}{4}y_1y_2y_3+\frac{1}{2}y_0y_3\right)}\ketbra{0,1{\oplus} y_0{\oplus} y_4y_2,y_5}{y_4,y_5{\oplus} y_2{\oplus} y_3}\]
is mapped to \tikzfig{SOP-to-ZH-example}
\end{example}
The boolean polynomials as defined above are given in their (unique) expanded form. These can easily be shown to be copied through the white node:
\begin{lemma}
\label{lem:poly-copy}
\def\fig{polynomial-copy}
\begin{align*}
\begin{tikzpicture}
	\begin{pgfonlayer}{nodelayer}
		\node [style=white dot] (0)  at (-0.5, 0.0) {};
		\node [style=white dot] (1)  at (0.5, 0.0) {};
		\node [style=white dot] (2)  at (0.0, 0.5) {};
		\node [style=box] (3)  at (-0.5, 0.5) {};
		\node [style=box] (4)  at (0.0, 0.0) {};
		\node [style=box] (5)  at (0.5, 0.5) {};
		\node [style=none] (6)  at (-0.5, -0.5) {};
		\node [style=none] (7)  at (0.5, -0.5) {};
	\end{pgfonlayer}
	\begin{pgfonlayer}{edgelayer}
		\draw (0) to (6.center);
		\draw (0) to (1);
		\draw (1) to (7.center);
		\draw (2) to (5);
		\draw (2) to (4);
		\draw (3) to (0);
		\draw (3) to (2);
		\draw (5) to (1);
	\end{pgfonlayer}
\end{tikzpicture}
\eq{}\begin{tikzpicture}
	\begin{pgfonlayer}{nodelayer}
		\node [style=none] (53)  at (-0.25, -0.625) {};
		\node [style=none] (54)  at (0.5, -0.625) {};
		\node [style=dot] (55)  at (-0.25, -0.3) {};
		\node [style=none, font={\scriptsize}] (56)  at (-0.5, -0.325) {$\neg$};
		\node [style=dot] (58)  at (0.5, -0.3) {};
		\node [style=none, font={\scriptsize}] (59)  at (0.25, -0.325) {$\neg$};
		\node [style=dot] (60)  at (-0.25, 0.2) {};
		\node [style=none, font={\scriptsize}] (61)  at (-0.5, 0.175) {$\neg$};
		\node [style=dot] (62)  at (0.5, 0.2) {};
		\node [style=none, font={\scriptsize}] (63)  at (0.25, 0.175) {$\neg$};
		\node [style=white dot] (74)  at (0.0, 0.625) {};
	\end{pgfonlayer}
	\begin{pgfonlayer}{edgelayer}
		\draw (54.center) to (58);
		\draw (55) to (53.center);
		\draw (60) to (55);
		\draw (62) to (58);
	\end{pgfonlayer}
\end{tikzpicture}
\end{align*}
\end{lemma}

\begin{proof}
In appendix at page \pageref{prf:poly-copy}.
\end{proof}

\noindent This translation preserves the semantics:
\begin{proposition}[\cite{SOP-Clifford}]
\label{prop:zh-preserves-semantics}
$\interp{[\cdot]^{\operatorname{ZH}}}=\interp{\cdot}$.
\end{proposition}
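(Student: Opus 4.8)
The plan is to prove the identity $\interp{[f]^{\operatorname{ZH}}} = \interp{f}$ directly, by computing the standard $\cat{ZH}$-interpretation of the normal-form diagram $[f]^{\operatorname{ZH}}$ gadget by gadget and matching the result against the definition of $\interp{\cdot}:\cat{SOP}\to\cat{Qubit}$. Write $f = s\sum_{\vec y} e^{2i\pi P}\ketbra{O_1,\ldots,O_m}{I_1,\ldots,I_n}$. Because $\interp{\cdot}:\cat{ZH}\to\cat{Qubit}$ is a $\dagger$-compact-PROP functor, it is compositional, so I can interpret the diagram one block at a time. The key structural observation is that the standard interpretation of a Z-spider is diagonal in the computational basis: summing over the value carried by the $j$-th Z-spider of the variable row amounts to summing over $y_j\in\{0,1\}$, and for each fixed assignment the spider broadcasts that Boolean value identically along all of its legs. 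Hence the whole interpretation collapses to a sum $\sum_{\vec y\in\{0,1\}^k}$, exactly the summation range appearing in $\interp{f}$, with each occurrence of a variable in the phase- and parity-gadgets receiving the same fixed bit.

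It then remains to check, for a fixed assignment $\vec y\in\{0,1\}^k$, that the remaining gadgets produce the correct scalar, ket and bra, via three elementary sub-claims. First, the AND gadget attached to the spiders of $y_{i_1},\ldots,y_{i_s}$ evaluates to the product $y_{i_1}\cdots y_{i_s}$ of the incident bits, and the constant-monomial gadget evaluates to $1$; this is a direct computation from the interpretation of the H-spider. Second, the XOR gadget combining these monomials computes their sum in $\mathbb F_2$, so that the $i$-th output wire carries the bit $O_i(\vec y)$ and yields the ket $\ket{O_i(\vec y)}$, and dually the input side yields $\bra{I_i(\vec y)}$; here one uses that XOR of bit values realises addition in $\mathbb F_2$ and AND realises multiplication, so the network literally evaluates the expanded Boolean polynomial. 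Third, each phase H-spider attached to $y_{i_1},\ldots,y_{i_s}$ contributes, when its legs carry bit values, the scalar $e^{2i\pi\alpha\, y_{i_1}\cdots y_{i_s}}$; multiplying over all monomials of $P$ gives $e^{2i\pi P(\vec y)}$, while the H-scalar spiders contribute the global scalar $s$.

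Assembling these factors, for each $\vec y\in\{0,1\}^k$ the diagram contributes $s\,e^{2i\pi P(\vec y)}\ketbra{\vec O(\vec y)}{\vec I(\vec y)}$, and summing over $\vec y$ recovers precisely $\interp{f}$. The compatibility of shared variables, where a single variable feeds several AND or phase gadgets, is exactly the broadcasting property of the Z-spider, of which \autoref{lem:poly-copy} is the diagrammatic counterpart. I expect the main obstacle to be bookkeeping rather than conceptual: one must verify carefully that the AND--XOR network computes the \emph{expanded} Boolean polynomial (matching $\mathbb F_2$-arithmetic with the Boolean logic of the gadgets), and that the H-spider parameter convention realises each phase monomial so that the per-monomial phases genuinely multiply to $e^{2i\pi P(\vec y)}$. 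The latter is consistent precisely because bit values satisfy $y_i^2 = y_i$, which is also what makes the quotient $\mathbb R[X_1,\ldots,X_k]/(X_i^2-X_i)$ in the definition of the phase polynomial the right object: evaluating $P$ at $\vec y\in\{0,1\}^k$ is insensitive to the quotient, so the diagrammatic reading and the algebraic definition of $\interp{\cdot}$ agree.
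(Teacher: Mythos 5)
The paper does not prove this proposition itself but imports it from \cite{SOP-Clifford}, and your argument is essentially the proof given there: a direct, compositional evaluation of the normal-form diagram in which the Z-spider row realises the sum over $\vec y\in\{0,1\}^k$, the AND/XOR gadgets evaluate the expanded $\mathbb F_2$-polynomials $O_i$ and $I_i$, and the phase H-spiders multiply to $e^{2i\pi P(\vec y)}$. Your outline is correct and takes the same route, so there is nothing to add beyond the bookkeeping you already flag.
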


\subsection{From $\cat{ZH}$ to $\cat{SOP}$}

Any $\cat{ZH}$-diagram can be understood as a $\cat{SOP}$-morphism. To do so, we use the PROP-functor $[\cdot]^{\operatorname{sop}}:\cat{ZH}\to\cat{SOP}$ defined as:
\[\left[~\tikzfig{H-spider-phase}~\right]^{\operatorname{sop}} := \sum e^{2i\pi \frac{\alpha}{2\pi}x_1\ldots x_ny_1\ldots y_m}\ketbra{y_1,\ldots, y_m}{x_1,\ldots,x_n}\]
\[\left[~\tikzfig{H-scalar}~\right]^{\operatorname{sop}} := s\ketbra{}{} \quad\text{ for }s\in\mathbb R\]
\[\left[~\tikzfig{Z-spider}~\right]^{\operatorname{sop}} := \sum_y \ketbra{y,\ldots, y}{y,\ldots, y}\qquad\qquad
\left[~\tikzfig{H-spider-0}~\right]^{\operatorname{sop}} := \left[~\tikzfig{H-spider-0-decomp}~\right]^{\operatorname{sop}}\]
The functor furthermore maps the symmetric braiding (resp. the compact structure) of $\cat{ZH}$ to the symmetric braiding (resp. the compact structure) of $\cat{SOP}$.

This does not give a full description of $[\cdot]^{\operatorname{sop}}$, as we did not describe the interpretation of the H-spider for all parameters, but only for phases and $0$. However, any H-spider can be decomposed using the previous ones:

\begin{lemma}
\label{lem:H-spider-decomp}
For any $r\in\mathbb C$ such that $\abs{r}\notin\{0,1\}$, there exist $s\in\mathbb C$, $\alpha,\beta\in\mathbb R$ such that:
\[\interp{\tikzfig{H-spider}} = \interp{\tikzfig{H-spider-decomp}}\]
\end{lemma}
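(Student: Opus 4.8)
The plan is to prove the statement purely semantically, in $\cat{Qubit}$, so that the ZH equational theory is not needed: I only have to produce a scalar $s$ and two phases $\alpha,\beta$ making the two linear maps equal. First I would recall that the interpretation of an H-spider,
\[\interp{H_m^n(r)} = \sum_{\vec\jmath,\vec\imath} r^{\,j_1\cdots j_m i_1\cdots i_n}\ketbra{\vec\jmath}{\vec\imath},\]
has all of its entries equal to $1$ except the single entry indexed by $\ketbra{1^m}{1^n}$, which equals $r$. Hence two H-spiders of equal arity differ only in that one ``corner'' entry. Since the decomposition on the right-hand side carries the same external legs, proving the equation amounts to checking that its interpretation realises the all-ones background with exactly $r$ in the corner; contracting the Z-spiders that detect the all-ones input/output reduces this to the genuinely one-dimensional problem of synthesising the number $r$ from unit-modulus H-box labels and a scalar.

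The heart of the argument is then the arithmetic identity
\[e^{i\alpha}+e^{i\beta} = 2\cos\!\Big(\tfrac{\alpha-\beta}{2}\Big)\,e^{i\frac{\alpha+\beta}{2}},\]
which shows that $(\alpha,\beta)\mapsto e^{i\alpha}+e^{i\beta}$ surjects onto the closed disk of radius $2$: its argument $\frac{\alpha+\beta}{2}$ is arbitrary and its modulus $2\lvert\cos\frac{\alpha-\beta}{2}\rvert$ sweeps $[0,2]$. Writing $r=\rho\,e^{i\phi}$ with $\rho=\abs r>0$, I would set $\frac{\alpha+\beta}{2}=\phi$ to fix the phase and choose the real scalar $s$ so that $\rho/(2s)\in(0,1)$, whence $\cos\frac{\alpha-\beta}{2}=\rho/(2s)$ admits a solution $\delta=\frac{\alpha-\beta}{2}$; unpacking $\alpha=\phi+\delta$ and $\beta=\phi-\delta$ then yields the required parameters, the combination gadget scaled by $s$ evaluating to the target value $r$ in the corner while leaving the background of $1$'s intact.

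Finally I would check the side condition. When $\abs r=1$, $r$ is itself a phase and $H_m^n(r)$ is already a phase H-spider, handled directly by the definition of $[\cdot]^{\operatorname{sop}}$; when $r=0$ it is the special $0$-labelled spider, treated separately. So the decomposition is only needed, and only asserted, for $\abs r\notin\{0,1\}$, where the equations are non-degenerate: $\rho=\abs r>0$ keeps $s$ nonzero and forces the two phases to be distinct, so the gadget genuinely uses two phase boxes rather than collapsing to a single label or to the zero label. The main obstacle I anticipate is not the trigonometry but computing the interpretation of the particular decomposition diagram as an explicit function of $s,\alpha,\beta$ — especially verifying that the single scalar $s$ which normalises the off-corner entries to $1$ is consistent with the value it forces in the corner — together with the bookkeeping of branch choices for $\arg$ and $\arccos$ needed to exhibit a real solution $(s,\alpha,\beta)$ uniformly over the whole region $\abs r\notin\{0,1\}$.
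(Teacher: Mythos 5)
Your overall strategy---localise the parameter $r$ in a single ``corner'' entry and then solve a trigonometric equation for $(s,\alpha,\beta)$---matches the spirit of the paper's proof, which first applies (HS1) to reduce the $n$-to-$m$ H-box to a single one-legged H-box and then solves for the parameters. But there is a genuine gap, and it sits exactly where you yourself place the ``main obstacle'': you never compute the interpretation of the actual decomposition diagram, and the form you guess for it cannot work. If the right-hand side really were a global scalar $s$ times a matrix whose off-corner entries are all $1$ and whose corner is (a fixed multiple of) $e^{i\alpha}+e^{i\beta}$, then matching the background entries would pin $s$ to a fixed positive constant, and the corner would then be confined to a disk of bounded radius---so the construction could never reach $\abs{r}>1$, which the lemma must cover. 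Your key identity $e^{i\alpha}+e^{i\beta}=2\cos\!\big(\tfrac{\alpha-\beta}{2}\big)e^{i\frac{\alpha+\beta}{2}}$ only gives surjectivity onto a \emph{bounded} disk, and the ``free'' real scalar you invoke to escape it is precisely the one already spent normalising the background; you flag this consistency problem but do not resolve it, and resolving it is the actual content of the proof.

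The paper's decomposition avoids the problem because it is a sequential composition (an $e^{i\alpha}$-labelled H-box, a Hadamard-type change of basis, an $e^{i\beta}$-labelled H-box, and a scalar), so corner and background are not matched independently. After (HS1) one compares two explicit $2$-vectors, $\frac12\begin{pmatrix}1+r & 1-r\end{pmatrix}^\intercal$ versus $2se^{i\alpha/2}\begin{pmatrix}\cos\frac\alpha2 & -ie^{i\beta}\sin\frac\alpha2\end{pmatrix}^\intercal$, and the relevant ratio $e^{i(\beta-\pi/2)}\tan\frac\alpha2$ sweeps \emph{all} of $\mathbb C$ because $\tan$ is unbounded---that is what makes arbitrary $\abs{r}$ reachable. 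One then takes $\tan\frac\alpha2=\abs{\frac{1-r}{1+r}}$, $\beta=\arg\big(\frac{1-r}{1+r}\big)+\frac\pi2$ and $s=\frac{1+r}{4e^{i\alpha/2}\cos\frac\alpha2}$; note that this $s$ is genuinely complex (the lemma deliberately allows $s\in\mathbb C$), whereas you restrict yourself to real $s$. To repair your proposal you would have to write down $\interp{\cdot}$ of the given diagram as an explicit function of $(s,\alpha,\beta)$ and redo the solving step against that expression rather than against $s(e^{i\alpha}+e^{i\beta})$.
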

\begin{proof}
In appendix, at page \pageref{prf:H-spider-decomp}.
\end{proof}

As a consequence, we extend the definition of $[\cdot]^{\operatorname{sop}}$ by:
\[\left[~\tikzfig{H-spider}~\right]^{\operatorname{sop}} :=\left[~\tikzfig{H-spider-decomp}~\right]^{\operatorname{sop}} \]

This interpretation of ZH-diagrams as $\cat{SOP}$-morphisms preserves the semantics:
\begin{proposition}[\cite{SOP-Clifford}]
\label{prop:sop-preserves-semantics}
$\interp{[\cdot]^{\operatorname{sop}}} = \interp{\cdot}$. In other words, the following diagram commutes:
$$\tikzfig{sop-interp-cd}$$
\end{proposition}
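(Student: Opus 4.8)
The plan is to exploit that both sides of the claimed equality are PROP-functors out of $\cat{ZH}$, and that such functors are completely determined by their action on a generating set of morphisms. The left-hand side $\interp{[\cdot]^{\operatorname{sop}}}$ is the composite of the PROP-functor $[\cdot]^{\operatorname{sop}}:\cat{ZH}\to\cat{SOP}$ with the functor $\interp{\cdot}:\cat{SOP}\to\cat{Qubit}$, hence is itself a strict monoidal functor; the right-hand side $\interp{\cdot}:\cat{ZH}\to\cat{Qubit}$ is a $\dagger$-compact-PROP-functor by construction. Since $\cat{ZH}$ is generated, under $(\cdot\circ\cdot)$ and $(\cdot\otimes\cdot)$, by the $Z$-spiders, the $H$-spiders, and the structural morphisms (identity, symmetric braiding $\sigma$, cup $\eta$, cap $\epsilon$), it suffices to check that the two functors agree on each of these.

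First I would dispatch the structural morphisms. By hypothesis $[\cdot]^{\operatorname{sop}}$ sends the braiding and compact structure of $\cat{ZH}$ to those of $\cat{SOP}$, and a glance at the defining sums shows that $\interp{\cdot}:\cat{SOP}\to\cat{Qubit}$ sends the $\cat{SOP}$ braiding $\sum\ketbra{\vec y_2,\vec y_1}{\vec y_1,\vec y_2}$, the cup $\eta_n$ and the cap $\epsilon_n$ to the corresponding $\cat{Qubit}$ morphisms; these coincide with the images under $\interp{\cdot}:\cat{ZH}\to\cat{Qubit}$ read off from the standard interpretation. Hence the two functors already agree on all structural generators.

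Next come the spiders, handled by direct computation. For the $Z$-spider, $\interp{[Z_m^n]^{\operatorname{sop}}}$ unfolds to $\sum_{y\in\{0,1\}}\ketbra{y,\ldots,y}{y,\ldots,y}=\ketbra{0^m}{0^n}+\ketbra{1^m}{1^n}=\interp{Z_m^n}$. For an $H$-spider whose parameter is a phase $e^{i\alpha}$, substituting $y_k,x_k\in\{0,1\}$ into the defining sum yields $\sum e^{i\alpha\, x_1\cdots x_n y_1\cdots y_m}\ketbra{\vec y}{\vec x}$, which is exactly $\interp{H_m^n(e^{i\alpha})}=\sum r^{j_1\cdots j_m i_1\cdots i_n}\ketbra{\vec j}{\vec i}$; the pure-scalar case $s\,\ketbra{}{}$ and the parameter-$0$ case (matched through its defining decomposition) are dealt with likewise. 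For a general parameter $r$ with $\abs r\notin\{0,1\}$, the functor $[\cdot]^{\operatorname{sop}}$ is defined via the decomposition of \autoref{lem:H-spider-decomp} into phase- and scalar-spiders; applying functoriality of $[\cdot]^{\operatorname{sop}}$ and of $\interp{\cdot}$ together with the base cases just established gives $\interp{[H_m^n(r)]^{\operatorname{sop}}}=\interp{\text{(decomposition)}}$, and this equals $\interp{H_m^n(r)}$ precisely by \autoref{lem:H-spider-decomp}. This closes the generator check and hence the proposition.

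The main delicate point is not any single generator but the justification that equality on generators suffices, which rests on $\interp{\cdot}:\cat{SOP}\to\cat{Qubit}$ being a genuine strict monoidal functor. Because $\cat{SOP}$ is not literally a category before quotienting (e.g.\ $id\circ id\neq id$), one must verify that interpretation nonetheless collapses the $\cat{SOP}$ composition -- with its factor $\tfrac1{2^m}$ and the bridging phase $\tfrac{\vec O_g\cdot\vec y+\vec I_f\cdot\vec y}2$ -- to ordinary matrix composition: summing each bridging variable $y_i$ over $\{0,1\}$ produces the factor $\tfrac12\bigl(1+e^{i\pi(O_{gi}+I_{fi})}\bigr)$, which is the Kronecker delta enforcing $\vec O_g=\vec I_f$, so that $\interp{f\circ g}=\interp{f}\circ\interp{g}$. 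The tensor case is immediate. Once this functoriality is in place, every remaining step is a routine substitution of $\{0,1\}$-values into the defining sums.
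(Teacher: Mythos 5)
The paper itself gives no proof of this proposition --- it is imported verbatim from \cite{SOP-Clifford} --- so there is no in-text argument to compare against; your proof is a correct reconstruction along the lines that reference takes: reduce to the generators of $\cat{ZH}$ and verify that $\interp{\cdot}:\cat{SOP}\to\cat{Qubit}$ turns the SOP composition, with its $\tfrac{1}{2^m}$ prefactor and bridging phase $\tfrac{\vec O_g\cdot\vec y+\vec I_f\cdot\vec y}{2}$, into ordinary matrix composition via the Kronecker-delta factors $\tfrac12\bigl(1+e^{i\pi(O_{gi}+I_{fi})}\bigr)$. You correctly single out that functoriality check as the only non-routine step, and your handling of the general H-spider through \autoref{lem:H-spider-decomp} is exactly how the definition of $[\cdot]^{\operatorname{sop}}$ forces the generator verification to go.
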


The composition of the two interpretations is the identity up to small rewrites:
\begin{proposition}[\cite{SOP-Clifford}]
\label{prop:double-interp-is-identity}
$\left[[\cdot]^{\operatorname{ZH}}\right]^{\operatorname{sop}} \underset{\operatorname{TH}}\sim (\cdot)$
\end{proposition}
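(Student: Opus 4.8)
The plan is to exploit that $[\cdot]^{\operatorname{sop}}$ is a PROP-functor, hence commutes with $\circ$ and $\otimes$, so that the round trip $\left[[t]^{\operatorname{ZH}}\right]^{\operatorname{sop}}$ can be evaluated block by block on the diagram $[t]^{\operatorname{ZH}}$ and then reassembled in $\cat{SOP}$. For a morphism $t = s\sum_{\vec y}e^{2i\pi P}\ketbra{\vec O}{\vec I}$, the diagram $[t]^{\operatorname{ZH}}$ is built from a row of Z-spiders (one per variable $y_j$), a family of AND- and XOR-gadgets assembling each output $O_i$ and each input $I_j$, one H-spider per monomial of $P$, and a scalar box. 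First I would compute $[\cdot]^{\operatorname{sop}}$ on each of these generators separately and record the resulting $\cat{SOP}$-fragment.

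Next I would compose these fragments. Each Z-spider representing $y_j$ becomes, under $[\cdot]^{\operatorname{sop}}$, a copy map that duplicates the variable into every gadget it feeds; \autoref{lem:poly-copy} is exactly what guarantees the shared variables are propagated coherently through the white nodes. Each AND- and XOR-gadget contributes fresh internal variables together with phase terms of the form $\frac{y_0}{2}\widehat{Q}$ constraining those variables, and each phase H-spider contributes the corresponding monomial of $P$ via the map $\widehat{(\cdot)}$, using $X_i^2 = X_i$. The upshot is an expanded $\cat{SOP}$-term which, by \autoref{prop:zh-preserves-semantics} and \autoref{prop:sop-preserves-semantics}, already has the same interpretation as $t$, but which carries many auxiliary internal variables.

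The core of the argument is then to reduce this expanded term back to the compact form of $t$ using the equational theory. The rules (HH), (HHgen) and (HHnl), together with (Elim) and the change-of-variable rules (ket) and (bra), are precisely designed to eliminate the internal variables introduced by the copy/AND/XOR gadgets: resolving an AND-gadget recovers the product of boolean polynomials and resolving an XOR-gadget recovers their sum modulo $2$, matching $\widehat{(\cdot)}$. Eliminating all auxiliary variables collapses the outputs and inputs back to $\vec O$ and $\vec I$ and the phase back to $P$, yielding $t$ up to $\underset{\operatorname{TH}}{\sim}$.

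I expect the main obstacle to be the bookkeeping in this last reduction: checking that an admissible sequence of TH-rewrites removes every auxiliary variable in the right order and that the phase contributions recombine exactly under the quotient $X_i^2 = X_i$. However, this computation is already carried out in \cite{SOP-Clifford} for the rewrite system generated by (HH) (with (ket), (bra) and (Elim)), establishing $\left[[\cdot]^{\operatorname{ZH}}\right]^{\operatorname{sop}} \underset{\operatorname{HH}}{\sim} (\cdot)$. Since (HH) is a special case of (HHgen) together with (Elim), every $\underset{\operatorname{HH}}{\sim}$-derivation is also a $\underset{\operatorname{TH}}{\sim}$-derivation, so $\underset{\operatorname{HH}}{\sim}\ \subseteq\ \underset{\operatorname{TH}}{\sim}$, and the statement follows at once.
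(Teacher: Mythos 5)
Your argument is correct and matches what the paper does: the proposition is simply imported from \cite{SOP-Clifford}, where the round-trip identity is established for the equational theory generated by (HH) (with (Elim), (ket), (bra)), and it carries over because (HH) is derivable from (HHgen) and (Elim), so every $\underset{\operatorname{HH}}{\sim}$-derivation is a $\underset{\operatorname{TH}}{\sim}$-derivation. Your first three paragraphs are a reasonable sketch of the cited computation, but the final paragraph is the whole of the justification the paper itself relies on.
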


\subsection{Restrictions of $\cat{SOP}$}

Recall that $\cat{ZH}_{\operatorname{TH}}$ exactly captures the Toffoli-Hadamard fragment of quantum mechanics. We can then use the two interpretations to define the Toffoli-Hadamard fragment of $\cat{SOP}$. We actually go a step beyond and define a family of fragments indexed by $n$:

\begin{definition}[$\cat{SOP}{[}\frac1{2^n}{]}$]
We define $\cat{SOP}[\frac1{2^n}]$ as the restriction of $\cat{SOP}$ to morphisms of the form: $\displaystyle t = \frac1{\sqrt2^p}\sum e^{2i\pi \frac P{2^n}}\ketbra{\vec O}{\vec I}$ where $p\in\mathbb Z$ and $P$ has integer coefficients.
\end{definition}

The Toffoli-Hadamard fragment is then the first such restriction ($n=1$):
\begin{proposition}
\label{prop:sop1/2-TOF}
$\cat{SOP}[\frac12]$ captures exactly the Toffoli-Hadamard fragment of quantum mechanics.
\end{proposition}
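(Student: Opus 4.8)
The plan is to prove the equality of the two sets of $\cat{Qubit}$ morphisms $\interp{\cat{SOP}[\tfrac12]}$ and $\interp{\cat{ZH}_{\operatorname{TH}}}$; since $\cat{ZH}_{\operatorname{TH}}$ is by construction the subcategory capturing the Toffoli-Hadamard fragment, this is exactly the statement. Both inclusions follow from the semantics-preserving translations $[\cdot]^{\operatorname{sop}}$ and $[\cdot]^{\operatorname{ZH}}$ of \autoref{sec:SOP<->ZH}, once I check that each of them respects the two restrictions.

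For the inclusion $\interp{\cat{ZH}_{\operatorname{TH}}}\subseteq\interp{\cat{SOP}[\tfrac12]}$, I would show that $[\cdot]^{\operatorname{sop}}$ maps $\cat{ZH}_{\operatorname{TH}}$ into $\cat{SOP}[\tfrac12]$. As $[\cdot]^{\operatorname{sop}}$ is a PROP-functor and $\cat{ZH}_{\operatorname{TH}}$ is generated by the $-1$-labelled $H$-spider, the $Z$-spider, and the $\tfrac1{\sqrt2}$ scalar, it suffices to check (i) that each generator lands in $\cat{SOP}[\tfrac12]$ and (ii) that $\cat{SOP}[\tfrac12]$ is closed under $(\cdot\circ\cdot)$ and $(\cdot\otimes\cdot)$. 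Point (i) is immediate from the definition of $[\cdot]^{\operatorname{sop}}$: the $Z$-spider has scalar $1$ and zero phase; the $-1$-labelled $H$-spider produces the single phase monomial $\tfrac12 x_1\cdots x_n y_1\cdots y_m$, of the form $\tfrac{P}{2}$ with $P$ integral; and the $\tfrac1{\sqrt2}$ scalar is $\tfrac1{\sqrt2^{1}}$. Point (ii) is a direct inspection of the composition formulas: the tensor product multiplies the scalars, giving $\tfrac1{\sqrt2^{p_f+p_g}}$, and adds the phases, keeping integer numerators over $2$; while the composition multiplies an extra $\tfrac1{2^m}=\tfrac1{\sqrt2^{2m}}$ and adds the phase term $\tfrac{\vec O_g\cdot\vec y+\vec I_f\cdot\vec y}2$, which, once the boolean polynomials are read through the integer-coefficient map $\widehat{(\cdot)}$, is again of the form $\tfrac{P}2$. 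Together with $\interp{[\cdot]^{\operatorname{sop}}}=\interp{\cdot}$ (\autoref{prop:sop-preserves-semantics}), this yields the inclusion.

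For the converse $\interp{\cat{SOP}[\tfrac12]}\subseteq\interp{\cat{ZH}_{\operatorname{TH}}}$, I would show that $[\cdot]^{\operatorname{ZH}}$ sends every $t\in\cat{SOP}[\tfrac12]$ to a diagram built only from $Z$-spiders, $-1$-labelled $H$-spiders, and $\tfrac1{\sqrt2}$ scalars, i.e. to a diagram of $\cat{ZH}_{\operatorname{TH}}$. Reading off the description of $[\cdot]^{\operatorname{ZH}}$: each phase monomial of $t$ has coefficient in $\tfrac12\mathbb Z$, hence modulo $1$ is either $0$ or $\tfrac12$, so the associated $H$-spider carries parameter $e^{i\pi}=-1$ (or is absent when the coefficient vanishes); the AND- and XOR-gadgets used to assemble the polynomials $\vec O$ and $\vec I$ are themselves diagrams over $\{Z\text{-spider},\,H(-1)\}$; and the scalar $\tfrac1{\sqrt2^p}$ is realised by $\tfrac1{\sqrt2}$ spiders. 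Applying $\interp{[\cdot]^{\operatorname{ZH}}}=\interp{\cdot}$ (\autoref{prop:zh-preserves-semantics}) then places $\interp t$ in $\interp{\cat{ZH}_{\operatorname{TH}}}$, completing the equality.

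The main obstacle I anticipate is bookkeeping around scalars rather than anything conceptual: in point (ii) I must allow the exponent of $\tfrac1{\sqrt2}$ to be any integer $p\in\mathbb Z$, in particular negative, which forces me to confirm that $\sqrt2$ — and hence every integer power of $\sqrt2$ — is representable inside $\cat{ZH}_{\operatorname{TH}}$. Here I would use that the legless $Z$-spider already realises the scalar $2$, so that $\sqrt2 = 2\cdot\tfrac1{\sqrt2}$ lies in $\cat{ZH}_{\operatorname{TH}}$. Everything else reduces to the routine verification that the generator images and the two closure conditions respect the single constraint ``phase in $\tfrac12\mathbb Z$, scalar an integer power of $\tfrac1{\sqrt2}$''.
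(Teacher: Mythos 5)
Your proposal is correct and follows exactly the paper's route: establish $\left[\cat{ZH}_{\operatorname{TH}}\right]^{\operatorname{sop}}\subseteq \cat{SOP}[\frac12]$ and $\left[\cat{SOP}[\frac12]\right]^{\operatorname{ZH}}\subseteq \cat{ZH}_{\operatorname{TH}}$ (using closure of $\cat{SOP}[\frac12]$ under composition and tensor), then chain the inclusions via the semantics-preserving interpretations of \autoref{prop:sop-preserves-semantics} and \autoref{prop:zh-preserves-semantics}. The paper leaves the two inclusions as ``straightforward verifications''; your write-up simply supplies the generator-by-generator and scalar bookkeeping details, which are all accurate.
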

\begin{proof}
We can prove this by showing that $\left[\cat{ZH}_{\operatorname{TH}}\right]^{\operatorname{sop}}\subseteq \cat{SOP}[\frac12]$ and that $\left[\cat{SOP}[\frac12]\right]^{\operatorname{ZH}}\subseteq \cat{ZH}_{\operatorname{TH}}$. The two claims are straightforward verifications, and use the fact that compositions of $\cat{SOP}[\frac12]$-morphisms give $\cat{SOP}[\frac12]$-morphisms.\\
Then, $\interp{\cat{ZH}_{\operatorname{TH}}}=\interp{\left[\cat{ZH}_{\operatorname{TH}}\right]^{\operatorname{sop}}}\subseteq\interp{\cat{SOP}[\frac12]} = \interp{\left[\cat{SOP}[\frac12]\right]^{\operatorname{ZH}}}\subseteq \interp{\cat{ZH}_{\operatorname{TH}}}$, so:
\[\interp{\cat{SOP}[\textstyle\frac12]} = \interp{\cat{ZH}_{\operatorname{TH}}}\qedhere\]
\end{proof}
Notice in particular that the Hadamard and Toffoli gates given in \autoref{ex:H-Tof} lie in this fragment. Not all of $\cat{SOP}[\frac12]$ can be generated by these two gates however, as $\cat{SOP}[\frac12]$ comprises linear maps that are not unitary, i.e.~such that $\interp{t^\dagger\circ t}\neq id$.

\section{Completeness for Toffoli-Hadamard}
\label{sec:TH-completeness}

In this section, we aim to show that the set of rules $\rewrite{\text{TH}}$ captures the whole Toffoli-Hadamard fragment of quantum mechanics. We do so by transporting the similar result from $\cat{ZH}_{\operatorname{TH}}$ to $\cat{SOP}[\frac12]$. First, we show:

\begin{proposition}
\label{prop:TH-proves-TH}
$\forall D_1,D_2\in \cat{ZH}_{\operatorname{TH}},~~ \operatorname{ZH}_{\operatorname{TH}}\vdash D_1=D_2\implies [D_1]^{\operatorname{sop}}\underset{\textnormal{TH}}\sim [D_2]^{\operatorname{sop}}$
\end{proposition}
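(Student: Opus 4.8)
The plan is to exploit that $\operatorname{ZH}_{\operatorname{TH}}\vdash D_1 = D_2$ holds exactly when $D_1$ and $D_2$ are related by the smallest congruence (with respect to $\circ$ and $\otimes$) that contains the specific axioms of \autoref{fig:ZH-TH-rules} together with the axioms of a $\dagger$-compact PROP. Since $[\cdot]^{\operatorname{sop}}$ is a $\dagger$-compact-PROP-functor — it preserves $\circ$, $\otimes$, identities, the braiding and the compact structure — and since $\underset{\textnormal{TH}}\sim$ is itself a congruence by \autoref{prop:TH-local}, it suffices to establish the claim on the generating equalities: whenever $L = R$ is an instance of an axiom, I must exhibit $[L]^{\operatorname{sop}}\underset{\textnormal{TH}}\sim[R]^{\operatorname{sop}}$. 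The general case then follows by applying this fact in context, i.e.\ by pre/post-composing and tensoring with the translation of the surrounding diagram, which is legitimate precisely because of \autoref{prop:TH-local}.

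First I would dispatch the structural axioms. Every axiom coming from the $\dagger$-compact PROP structure (naturality of the braiding, the yanking equations of the compact structure, symmetry) is sent by $[\cdot]^{\operatorname{sop}}$ to the corresponding structural equality in $\cat{SOP}$, which holds up to $\underset{\textnormal{TH}}\sim$ precisely because $\cat{SOP}/\underset{\textnormal{TH}}\sim$ is a $\dagger$-compact PROP (the Corollary following \autoref{prop:TH-local}). This leaves only the finitely many genuine rules of \autoref{fig:ZH-TH-rules} — (ZS1), (ZS2), (HS1), (HS2), the bialgebra rules (BA1), (BA2), and the rules (M), (O), (AND), (IV), (Z), together with the defining equations of the two X-spiders.

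The core of the argument is then this finite case analysis: for each such rule I would compute $[L]^{\operatorname{sop}}$ and $[R]^{\operatorname{sop}}$ explicitly from the definition of $[\cdot]^{\operatorname{sop}}$ on spiders, and produce a derivation between them using the rules of \autoref{fig:rewrite-TH}. The simplest cases (spider fusion, the $\frac1{\sqrt2}$ scalars, identity removal) reduce to (Elim) and direct $\alpha$-equalities; the rules whose translated phase polynomials contain genuine products of variables — the bialgebra rules, (AND) and (O) — are the ones requiring substantive rewrites, typically an application of (HHnl) to merge two internal variables carrying the same factor, followed by (HHgen) or (ket)/(bra) to discharge the remaining variable and realign the kets and bras. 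Throughout, one must track the real scalar $s$ and the powers of $2$ introduced by (Elim), (HHnl) and (Z) to ensure the two sides match exactly rather than merely up to a global factor.

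The main obstacle is this last bookkeeping-heavy verification of the non-linear rules: translating a bialgebra- or AND-type equation yields an $\cat{SOP}$ term whose outputs and phase are products of several variables, and finding the precise sequence of (HHnl)/(HHgen) substitutions — together with the correct change of variables via (ket)/(bra) — that brings both sides to a common normal form is where the real effort lies. Soundness (\autoref{prop:TH-soundness}) guarantees that any candidate derivation I write down is at least semantics-preserving, which provides a useful consistency check, though not the derivation itself.
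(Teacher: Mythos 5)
Your proposal follows essentially the same route as the paper's proof: reduce to verifying each axiom of $\operatorname{ZH}_{\operatorname{TH}}$ under $[\cdot]^{\operatorname{sop}}$, dispatch the structural axioms via the fact that $\cat{SOP}/\underset{\textnormal{TH}}\sim$ is a $\dagger$-compact PROP, and use \autoref{prop:TH-local} to lift the axiom-level equivalences to arbitrary contexts. You correctly identify that (ZS1), (ZS2), (HS1), (HS2), (M) and the scalar rules follow from (HH)/(Elim) while (BA1), (BA2), (O) and (AND) need (HHnl), (HHgen) and (ket)/(bra); the only thing missing relative to the paper is the explicit sequence of rewrites for those harder cases, which you rightly flag as the bulk of the remaining work.
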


\begin{proof}
In appendix at page \pageref{prf:TH-proves-TH}.
\end{proof}

We can then use the previous proposition to show the main result of this paper:
\begin{theorem}
\label{thm:TH-completeness}
$\cat{SOP}[\frac12]/\underset{\textnormal{TH}}\sim$ is complete, i.e.: 
$\forall t_1,t_2\in\cat{SOP}[{\textstyle\frac12}],~~\interp{t_1}=\interp{t_2}\iff t_1\underset{\textnormal{TH}}\sim t_2$
\end{theorem}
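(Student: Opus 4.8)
The plan is to leverage the completeness of the ZH-Calculus (\autoref{thm:ZH-completeness}) and transport it across the two interpretations, exactly as the section's introduction suggests. The forward direction is the easy half: if $t_1 \underset{\textnormal{TH}}{\sim} t_2$, then by the soundness result (\autoref{prop:TH-soundness}) applied along the finite chain of rewrites connecting $t_1$ and $t_2$, we get $\interp{t_1} = \interp{t_2}$. So all the real work is in the converse.

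For the converse, suppose $t_1, t_2 \in \cat{SOP}[\frac12]$ with $\interp{t_1} = \interp{t_2}$. First I would pass to the ZH side: set $D_i := [t_i]^{\operatorname{ZH}}$. By \autoref{prop:zh-preserves-semantics}, $\interp{D_i} = \interp{t_i}$, so $\interp{D_1} = \interp{D_2}$; and since $t_i \in \cat{SOP}[\frac12]$, the argument of \autoref{prop:sop1/2-TOF} guarantees $D_i \in \cat{ZH}_{\operatorname{TH}}$. Now \autoref{thm:ZH-completeness} applies and yields $\operatorname{ZH}_{\operatorname{TH}} \vdash D_1 = D_2$. Next I transport this derivation back to $\cat{SOP}$ using \autoref{prop:TH-proves-TH}, which gives $[D_1]^{\operatorname{sop}} \underset{\textnormal{TH}}{\sim} [D_2]^{\operatorname{sop}}$, i.e.\ $\left[[t_1]^{\operatorname{ZH}}\right]^{\operatorname{sop}} \underset{\textnormal{TH}}{\sim} \left[[t_2]^{\operatorname{ZH}}\right]^{\operatorname{sop}}$.

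The final step closes the loop via \autoref{prop:double-interp-is-identity}, which states $\left[[\cdot]^{\operatorname{ZH}}\right]^{\operatorname{sop}} \underset{\operatorname{TH}}{\sim} (\cdot)$. Applying this at $t_1$ and at $t_2$ gives $t_1 \underset{\textnormal{TH}}{\sim} \left[[t_1]^{\operatorname{ZH}}\right]^{\operatorname{sop}}$ and $\left[[t_2]^{\operatorname{ZH}}\right]^{\operatorname{sop}} \underset{\textnormal{TH}}{\sim} t_2$. Chaining these three $\underset{\textnormal{TH}}{\sim}$-equivalences by transitivity yields $t_1 \underset{\textnormal{TH}}{\sim} t_2$, as required. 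The diagram to keep in mind is the round-trip $t_i \to D_i \to [D_i]^{\operatorname{sop}}$, where the outer legs are handled by \autoref{prop:double-interp-is-identity} and the bridge between the two $[D_i]^{\operatorname{sop}}$ is handled by \autoref{prop:TH-proves-TH}.

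The proof itself is essentially a bookkeeping of four previously established results, so the genuine difficulty lives upstream in \autoref{prop:TH-proves-TH}: one must check that \emph{every} axiom of $\operatorname{ZH}_{\operatorname{TH}}$ (\autoref{fig:ZH-TH-rules}), once both sides are pushed through $[\cdot]^{\operatorname{sop}}$, becomes a $\underset{\textnormal{TH}}{\sim}$-equivalence — and crucially that these rewrites can be applied locally inside arbitrary contexts, which is exactly what \autoref{prop:TH-local} provides so that a ZH-derivation built by composition and tensoring lifts rule-by-rule. The only subtlety to flag within the present theorem is the well-definedness of the detour: I must invoke \autoref{prop:sop1/2-TOF} to be sure that $[t_i]^{\operatorname{ZH}}$ really lands in $\cat{ZH}_{\operatorname{TH}}$, since \autoref{thm:ZH-completeness} is stated only for that restricted fragment; with that containment in hand, the four-step chain goes through without further obstruction.
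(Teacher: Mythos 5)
Your proposal is correct and follows essentially the same four-step chain as the paper's own proof: push through $[\cdot]^{\operatorname{ZH}}$, apply ZH-completeness, transport back via \autoref{prop:TH-proves-TH}, and close the round-trip with \autoref{prop:double-interp-is-identity}. Your added remarks — handling the trivial soundness direction explicitly and invoking \autoref{prop:sop1/2-TOF} to confirm that $[t_i]^{\operatorname{ZH}}$ lands in $\cat{ZH}_{\operatorname{TH}}$ — are details the paper leaves implicit, and they only strengthen the argument.
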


\begin{proof}
Let $t_1$ and $t_2$ be two $\cat{SOP}[\frac12]$-morphisms such that $\interp{t_1}=\interp{t_2}$. By \autoref{prop:zh-preserves-semantics}:
$\interp{[t_1]^{\operatorname{ZH}}} = \interp{[t_2]^{\operatorname{ZH}}}$.\\
By completeness of $\cat{ZH}_{\operatorname{TH}}/\operatorname{ZH}_{\operatorname{TH}}$ (\autoref{thm:ZH-completeness}):
$\operatorname{ZH}_{\operatorname{TH}}\vdash [t_1]^{\operatorname{ZH}}=[t_2]^{\operatorname{ZH}}$\\
Thanks to \autoref{prop:TH-proves-TH}:
$\left[[t_1]^{\operatorname{ZH}}\right]^{\operatorname{sop}} \underset{\text{TH}}\sim \left[[t_2]^{\operatorname{ZH}}\right]^{\operatorname{sop}}$.
Finally, by \autoref{prop:double-interp-is-identity}:
\[t_1\underset{\text{TH}}\sim\left[[t_1]^{\operatorname{ZH}}\right]^{\operatorname{sop}} \underset{\text{TH}}\sim \left[[t_2]^{\operatorname{ZH}}\right]^{\operatorname{sop}}\underset{\text{TH}}\sim t_2\qedhere\]
\end{proof}

The rewrite system is however not sufficient to get to a unique normal form, as:
\begin{lemma}[Non-Confluence]
The rewrite system $\rewrite{\textnormal{TH}}$ is not confluent.
\end{lemma}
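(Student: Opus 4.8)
The plan is to exhibit a single $\cat{SOP}$ morphism that reduces, via two different applications of the rewrite rules, to two distinct irreducible terms; since these two normal forms are syntactically different, no common reduct can exist, and confluence fails. The non-confluence is fully expected: the paper already emphasised (in the introduction and when discussing the QMA-completeness of circuit equivalence) that a universal fragment cannot admit a confluent, polynomial, finitely-presented rewrite system, so the obstruction must already appear at the level of the rewrite strategy $\rewrite{\textnormal{TH}}$ itself.

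First I would look for a morphism whose phase polynomial contains \emph{two} internal variables, say $y_0$ and $y_0'$, each of which controls a factor of the form $\tfrac{y_j}{2}\widehat{Q}$, so that the rules (HHgen) and (HHnl) both become applicable but substitute \emph{different} variables. The cleanest candidates are built out of a symmetric phase polynomial in which the roles of two variables can be swapped: applying (HHgen) to force one variable to a constant, versus applying (HHnl) to factorise the two controlling polynomials together, will generically break the symmetry in incompatible ways. A concrete route is to take a term similar in spirit to the worked example of \autoref{sec:SOP}, with a phase such as $\tfrac{y_0}{2}\widehat{Q} + \tfrac{y_0'}{2}\widehat{Q'} + R$ where $Q$ and $Q'$ share structure, and check by hand that one rule eliminates $y_0$ while the other rewrites $y_0'$ in terms of $y_0$, leaving the two results with different sets of surviving variables or different phase polynomials that no further rule of \autoref{fig:rewrite-TH} can reconcile.

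The key verification steps, in order, are: (i) write down the candidate term $t$ and confirm it is a legitimate $\cat{SOP}$ morphism (and, for the result to be meaningful, that it lies in $\cat{SOP}[\tfrac12]$); (ii) exhibit one reduction $t \rewrite{\textnormal{TH}} t_1$ and a \emph{different} reduction $t \rewrite{\textnormal{TH}} t_2$, checking in each case that the side-conditions on the variables (the $\Var$-disjointness conditions and the conditions like $QQ'=Q'$) are genuinely satisfied; (iii) argue that both $t_1$ and $t_2$ are \emph{irreducible} — i.e.\ that none of (Elim), (HHgen), (HHnl), (ket), (bra), (Z) applies — and that they are not $\alpha$-equivalent, so they are distinct normal forms.

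The main obstacle is step (iii): proving irreducibility of the two reducts requires a careful case analysis ruling out every rule, and one must be sure the two normal forms really are inequivalent rather than merely presented differently (recall equality is only up to $\alpha$-conversion). Since soundness (\autoref{prop:TH-soundness}) guarantees $\interp{t_1}=\interp{t_2}$, the two terms \emph{are} semantically equal but syntactically unjoinable — which is exactly what is needed, but it also means one cannot distinguish them by their interpretation and must instead compare their syntactic data (number of internal variables, monomials of the phase polynomial, the $\vec O$/$\vec I$ vectors) directly. I would therefore choose the smallest example for which this syntactic comparison is transparent, keeping the number of variables minimal so that the exhaustive check against all six rules is short.
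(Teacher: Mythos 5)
Your overall strategy is exactly the one the paper uses: exhibit a single term of $\cat{SOP}[\frac12]$ admitting two reduction sequences that terminate in distinct irreducible morphisms, so that no common reduct can exist. However, there is a genuine gap: for an existence claim of this kind the witness \emph{is} the proof, and you never produce one. You describe the shape a candidate should have (a phase of the form $\frac{y_0}{2}\widehat Q+\frac{y_0'}{2}\widehat{Q'}+R$ with overlapping $Q$, $Q'$) and correctly list the checks (i)--(iii) that would be required, but none of them is carried out; in particular the irreducibility of the two reducts, which you yourself identify as the main obstacle, is left entirely open, and it is precisely the step where naive small examples fail --- symmetric two-monomial patterns of the kind you sketch typically rejoin after a further (Elim) or (HH) step.

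For comparison, the paper's proof commits to the concrete term $t=\sum e^{2i\pi\left(\frac12 y_0y_6+\frac12 y_8y_9y_6+\frac12 y_4y_5y_6+\frac12 y_8y_9y_{12}\right)}\ket{y_0}$ (four monomials, six variables) and gives three explicit reduction sequences: one applying (HH) on $y_6$, one starting with (HHnl) on $(y_4,y_8)$, and one starting with (HHnl) on $(y_6,y_{12})$ followed by (HHgen). Their endpoints have visibly different output polynomials --- e.g.\ $\ket{y_4y_5{\oplus}y_8y_9}$ versus $\ket{y_4y_5{\oplus}y_8y_9{\oplus}y_4y_5y_8y_9}$ --- hence are not $\alpha$-equivalent, and each is irreducible. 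To turn your proposal into a proof you would need to supply such a witness and perform the exhaustive check against all six rules for each of the resulting terms.
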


\begin{proof}
The $\cat{SOP}[\frac12]$-morphism:
$\displaystyle t = \sum e^{2i\pi\left(\frac{1}{2}y_{0}y_{6} + \frac{1}{2}y_{8}y_{9}y_{6} + \frac{1}{2}y_{4}y_{5}y_{6} + \frac{1}{2}y_{8}y_{9}y_{12}\right)}\ket{y_{0}}$\\
can be reduced to (at least) three different non-reducible morphisms:
\begin{align*}
\bullet~& t
\rewrite{\text{HH}(y_{6},[y_{0}\leftarrow y_{4}y_{5}{\oplus}y_{8}y_{9}])}
2\sum e^{2i\pi\left(\frac{1}{2}y_{8}y_{9}y_{12}\right)}\ket{y_{4}y_{5}{\oplus}y_{8}y_{9}}\\
\bullet~& t
\rewrite{\text{HHnl}(y_{4},y_{8})}
2 \sum e^{2i\pi\left(\frac{1}{2}y_{9}y_{4}y_{5}y_{6} + \frac{1}{2}y_{0}y_{6} + \frac{1}{2}y_{9}y_{4}y_{6} + \frac{1}{2}y_{9}y_{12}y_{4} + \frac{1}{2}y_{4}y_{5}y_{6}y_{9}y_{12} + \frac{1}{2}y_{4}y_{5}y_{6}\right)}\ket{y_{0}}\\
&\hspace*{-2em}\rewrite{\text{HH}(y_{6},[y_{0}\leftarrow y_{9}y_{12}y_{4}y_{5}{\oplus}y_{9}y_{4}{\oplus}y_{9}y_{4}y_{5}{\oplus}y_{4}y_{5}])}
4 \sum e^{2i\pi\left(\frac{1}{2}y_{9}y_{12}y_{4}\right)}\ket{y_{9}y_{12}y_{4}y_{5}{\oplus}y_{9}y_{4}{\oplus}y_{9}y_{4}y_{5}{\oplus}y_{4}y_{5}}\\
\bullet~& t
\rewrite{\text{HHnl}(y_{6},y_{12})}
2 \sum e^{2i\pi\left(\frac{1}{2}y_{0}y_{8}y_{9}y_{6} + \frac{1}{2}y_{0}y_{6} + \frac{1}{2}y_{8}y_{9}y_{6} + \frac{1}{2}y_{4}y_{5}y_{6}y_{8}y_{9} + \frac{1}{2}y_{4}y_{5}y_{6}\right)}\ket{y_{0}}\\
&\rewrite{\text{HHgen}(y_6,[y_0\leftarrow y_4y_5{\oplus}y_8y_9{\oplus}y_4y_5y_8y_9])}
2 \sum e^{2i\pi\left(\frac{1}{2}y_{8}y_{9}y_{6}\right)}\ket{y_4y_5{\oplus}y_8y_9{\oplus}y_4y_5y_8y_9}\qedhere
\end{align*}
\end{proof}
Another important downside is the potential explosion of the size of the phase polynomial:
\begin{lemma}
Applying (HHnl) $k$ times in a row on an SOP morphism with phase polynomial of size $O(k)$ may give a morphism with phase polynomial of size $O(2^k)$.
\end{lemma}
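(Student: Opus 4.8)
The plan is to prove the lemma by exhibiting an explicit family of $\cat{SOP}$-morphisms on which a sequence of $k$ applications of (HHnl) is available, and then tracking the number of monomials of the phase polynomial at each step. The conceptual heart is to identify what a single application of (HHnl) does to the phase polynomial: I claim it merges two variable-weighted Booleans into the \emph{logical OR} of their factors, carried by a single variable, and iterating this builds an OR of $k+1$ literals, whose multilinear expansion has exponentially many monomials.

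First I would pin down the effect of one rewrite. The rule substitutes $y_0'\leftarrow y_0\oplus y_0Q$, so using the definition of $\widehat{\cdot}$ together with $y_0^2=y_0$ one computes
\[\widehat{y_0\oplus y_0Q}=\widehat{y_0}+\widehat{y_0Q}-2\widehat{y_0}\widehat{y_0Q}=y_0+y_0\widehat Q-2y_0\widehat Q=y_0(1-\widehat Q),\]
so the summand $\frac{y_0'}2\widehat{Q'}$ becomes $\frac{y_0}2\widehat{Q'}-\frac{y_0}2\widehat Q\widehat{Q'}$. Adding the surviving summand $\frac{y_0}2\widehat Q$ and reducing coefficients modulo $1$, the two linear terms $\frac{y_0}2\widehat Q+\frac{y_0'}2\widehat{Q'}$ collapse to the single term $\frac{y_0}2(\widehat Q+\widehat{Q'}-\widehat Q\widehat{Q'})=\frac{y_0}2\widehat{Q\lor Q'}$, where I use that $\widehat Q+\widehat{Q'}-\widehat Q\widehat{Q'}$ is the multilinear lift of the Boolean $Q\lor Q'=Q\oplus Q'\oplus QQ'$. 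Thus the new factor is again of the form $\widehat{(\cdot)}$ of a Boolean polynomial, so (HHnl) can be reapplied.

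Then I would iterate. Take $k+1$ internal variables $a_0,\dots,a_k$ and $k+1$ further variables $x_0,\dots,x_k$ placed in the inputs/outputs, and set
\[t=\sum e^{2i\pi\sum_{i=0}^k\frac{a_ix_i}2}\ketbra{x_0,\dots,x_k}{x_0,\dots,x_k},\]
whose phase has $k+1=O(k)$ monomials. Merging $a_0$ with $a_1$, then the result with $a_2$, and so on, the side conditions $a_0,a_j\notin\Var(Q,Q',R,\vec O,\vec I)$ all hold because the $a_i$ never occur in the factors $x_i$, in the remaining phase, or in the kets and bras. By the computation above, after the $j$-th merge the factor carried by $a_0$ is $\widehat{x_0\lor\dots\lor x_j}=1-\prod_{i\le j}(1-x_i)$, which expands to exactly $2^{j+1}-1$ distinct monomials indexed by the nonempty subsets of $\{0,\dots,j\}$. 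After all $k$ merges the phase is $\frac{a_0}2\widehat{x_0\lor\dots\lor x_k}$, with $2^{k+1}-1=O(2^k)$ monomials, which is the claim.

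The only points needing care — the closest thing to an obstacle — are bookkeeping rather than conceptual. One must check that $\prod_i(1-x_i)$ expands with no cancellation, so that the count is genuinely $2^{j+1}-1$ and does not shrink under reduction modulo $X_i^2-X_i$ and modulo $1$; this holds because distinct subsets $S$ give distinct monomials $\prod_{i\in S}x_i$, none of which can merge. One must also confirm that each intermediate term really has the shape $\frac{a_0}2\widehat Q$ demanded by the left-hand side of (HHnl), which is exactly the observation that the running factor is always a Boolean lift $\widehat{x_0\lor\dots\lor x_j}$.
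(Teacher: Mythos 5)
Your proof is correct and follows essentially the same strategy as the paper's: exhibit an explicit family on which $k$ chained applications of (HHnl) accumulate all the factors onto one variable, observe that the merged factor is the multilinear lift of a $(k{+}1)$-fold OR, and count its $\Theta(2^k)$ monomials (you get $2^{k+1}-1$ from $Q_i=x_i$, the paper gets $2^{k+1}+1$ from factors $y_{i1}+y_{i2}+1$). The only difference is the cosmetic choice of witness family, so nothing further is needed.
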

\begin{proof}
For any $k\geq1$ we can define the following term:
\[t_k:=\sum e^{2i\pi \sum\limits_{i=0}^k\frac{y_{i0}}2(y_{i1}+y_{i2}+1)} \]
on which we can apply (HHnl) $k$ times in a row. In that case we end up with:
\[t_k\rewrite{}^k2^k\sum e^{2i\pi (\frac{y_0}2\prod\limits_{i=0}^k(y_{i1}+y_{i2})+1)}\]
While $t_k$ has only $3(k+1)$ terms (each of degree at most $2$) in its phase polynomial, it can rewrite into a morphism with $2^{k+1}+1$ terms (each of degree at most $3$).
\end{proof}
Hence, if one were to perform simplifications with this rewrite system, they ought to give special attention as to where and in which order to apply the rules.

\section{Completeness for the Dyadic Fragment}
\label{sec:completeness}

We show here how we can turn an $\cat{SOP}[\frac1{2^{n+1}}]$-morphism into an $\cat{SOP}[\frac1{2^{n}}]$-morphism in a ``reversible'' manner. This will allow us to extend the completeness result to all the restrictions $\cat{SOP}[\frac1{2^n}]$. This is particularly interesting as the phase gates with dyadic multiples of $\pi$, used in particular in the quantum Fourier transform, belong in these fragments:
\[R_Z\left(p\frac\pi{2^k}\right) := \sum_{y_0}e^{2i\pi\cdot\frac p{2^{k-1}}}\ketbra{y_0}{y_0}\]

\subsection{Ascending the Dyadic Levels}

These transformations between restrictions of $\cat{SOP}$ are more easily defined on $\cat{SOP}$-morphisms of a particular shape, namely, when their phase polynomial is reduced to a single monomial. Because of this, we show how a $\cat{SOP}$-morphism can be turned into a composition of these.
\begin{lemma}
\label{lem:sop-decomp}
Let $P = \sum m_i \in \mathbb R[X_1,\ldots,X_{k}]/(X_i^2-X_i)$, and $t = s\sum e^{2i\pi P} \ketbra{\vec O}{\vec I}$. Then:
\begin{align*}
\left[\begin{array}{c}
\displaystyle\left(s\sum \ketbra{\vec O}{y_0,...,y_k}\right)\circ\hfill~\\
\displaystyle\left(\sum e^{2i\pi m_1}\ketbra{y_0,...,y_k}{y_0,...,y_k}\right)
\circ \ldots \circ
\left(\sum e^{2i\pi m_\ell}\ketbra{y_0,...,y_k}{y_0,...,y_k}\right)\\
\displaystyle\hfill\circ\left(\sum \ketbra{y_0,...,y_k}{\vec I}\right)
\end{array}\right]
\overset\ast{\rewrite{\textnormal{HH}}}t
\end{align*}
\end{lemma}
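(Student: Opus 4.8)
The plan is to peel off the composition variables introduced by each application of $(\cdot\circ\cdot)$, using the rule (HH) to collapse them, so that the whole bracketed composition telescopes down to $t$. The crucial sub-computation is that composing two \emph{diagonal} morphisms merges their phases: for any phase polynomials $m_i, m_j$ over the variables $\vec y = (y_0,\ldots,y_k)$,
\[
\left(\sum_{\vec y} e^{2i\pi m_i}\ketbra{\vec y}{\vec y}\right)\circ\left(\sum_{\vec y} e^{2i\pi m_j}\ketbra{\vec y}{\vec y}\right)\overset\ast{\rewrite{\textnormal{HH}}}\sum_{\vec y} e^{2i\pi (m_i+m_j)}\ketbra{\vec y}{\vec y}.
\]
I would establish this merging identity first and then use it as the engine for the entire reduction.

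To prove it, I expand the left-hand side with the composition formula. Writing $f$ for the left factor and $g$ for the right one, we have $\abs{\vec I_f}=\abs{\vec O_g}=k+1$, so the composition introduces fresh variables $\vec z=(z_0,\ldots,z_k)$ and yields
\[
\frac1{2^{k+1}}\sum_{\vec y_f,\vec y_g,\vec z} e^{2i\pi\left(m_i(\vec y_f)+m_j(\vec y_g)+\sum_l \frac{z_l}{2}(y_{gl}+y_{fl})\right)}\ketbra{\vec y_f}{\vec y_g}.
\]
For each $l$ the variable $z_l$ occurs only in the term $\frac{z_l}{2}(y_{gl}+y_{fl})$, and $y_{gl}$ occurs neither in $y_{fl}$ nor in the remaining $z$-terms, so (HH) applies with $y_0\leftarrow z_l$, $y_i\leftarrow y_{gl}$ and $Q\leftarrow y_{fl}$. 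Each application multiplies by $2$ and substitutes $y_{gl}\leftarrow y_{fl}$ throughout; performing it for all $l$ identifies $\vec y_g$ with $\vec y_f$, cancels the $\frac1{2^{k+1}}$ against the accumulated factor $2^{k+1}$, and leaves exactly $\sum_{\vec y}e^{2i\pi(m_i+m_j)}\ketbra{\vec y}{\vec y}$. The eliminations are independent, so their order is immaterial.

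With the merging identity in hand, a straightforward induction on $\ell$ collapses the $\ell$ middle diagonal factors into the single diagonal morphism $\sum_{\vec y}e^{2i\pi P}\ketbra{\vec y}{\vec y}$, since $P=\sum_i m_i$. It remains to absorb the outermost factors. Composing this diagonal on the right with $\sum_{\vec y}\ketbra{\vec y}{\vec I}$ introduces fresh variables that, by the very same (HH) argument, get identified with $\vec y$, turning the bra into $\bra{\vec I}$; composing on the left with $s\sum_{\vec y}\ketbra{\vec O}{\vec y}$ similarly installs the ket $\ket{\vec O}$ and brings in the scalar $s$. The result is $s\sum_{\vec y}e^{2i\pi P}\ketbra{\vec O}{\vec I}=t$, as required.

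The only real work is bookkeeping: checking at each step that the side conditions of (HH) hold (the composition variable is genuinely internal and absent from the relevant polynomials, and $y_i\notin\Var(Q)$), and tracking the scalars so that the $2$'s produced by (HH) exactly cancel the $2^{-(k+1)}$ factors produced by $(\cdot\circ\cdot)$. I expect no conceptual obstacle beyond this careful accounting, the key point being that every composition reduces, via (HH), to the ``identification of variable copies'' already familiar from the Clifford setting.
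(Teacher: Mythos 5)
Your argument is correct and is exactly the computation the paper implicitly relies on: the paper states this lemma without proof, treating it as the routine observation (already made when the composition $(\cdot\circ\cdot)$ is defined) that each composition variable $z_l$ pairs the term $\frac{z_l}{2}(y_{gl}+y_{fl})$ and is eliminated by (HH), identifying the two copies of the variables and cancelling the $2^{-(k+1)}$ prefactor. Your merging identity for diagonal factors, the induction on $\ell$, and the absorption of the outer bra/ket factors are precisely the bookkeeping the author leaves to the reader, and the side conditions of (HH) are checked correctly.
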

Notice that this decomposed form is not unique, as different orderings on the monomials of $P$ define different orderings of the compositions. However, this will not matter.

A particular care is sadly needed for the overall scalar. Because of this, we will first focus on a slightly different notion of restriction of $\cat{SOP}$.

\begin{definition}[$\cat{SOP}{[}\frac1{2^n}{]}'$]
We define $\cat{SOP}[\frac1{2^n}]'$ as the restriction of $\cat{SOP}$ to morphisms of the form: $\displaystyle t = \frac1{2^p}\sum e^{2i\pi \frac P{2^n}}\ketbra{\vec O}{\vec I}$ where $P$ has integer coefficients.
\end{definition}

The only difference with $\cat{SOP}[\frac1{2^n}]$ is that the overall scalar is now a power of $\frac12$ and not of $\frac1{\sqrt2}$. There always exists a $\cat{SOP}[\frac1{2^n}]'$-morphism that represents the same linear map as any $\cat{SOP}[\frac1{2^{n}}]$-morphism.

\begin{lemma}
$\interp{\frac1{\sqrt2}\sum\limits_{y_0\in V} e^{2i\pi\left(\frac{1}{8} + \frac{3}{4}y_{0}\right)}} = 1$. Hence:
\[\textstyle\forall t\in \cat{SOP}[\frac1{2^n}],~\exists t'\in \cat{SOP}[\frac1{2^{\max(3,n)}}]',~~\interp{t}=\interp{t'}\]
\end{lemma}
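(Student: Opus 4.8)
The plan is to prove the scalar equation by a one-line evaluation, and then use that morphism as a ``unit gadget'' that converts an odd power of $\frac1{\sqrt2}$ into an even one without disturbing the interpretation. First I would check the scalar identity directly. Since $\interp{\cdot}$ evaluates the bound variable over $\{0,1\}$, the $0\to0$ morphism unfolds to $\frac1{\sqrt2}(e^{2i\pi/8}+e^{2i\pi(1/8+3/4)})$. Because $\frac18+\frac34=\frac78\equiv-\frac18 \pmod 1$, the bracket is $e^{i\pi/4}+e^{-i\pi/4}=2\cos\frac\pi4=\sqrt2$, so the whole expression is $\frac1{\sqrt2}\cdot\sqrt2=1$. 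Write $u$ for this scalar morphism, so $\interp{u}=1$.

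Next, for the implication, I would start from $t=\frac1{\sqrt2^p}\sum e^{2i\pi\frac P{2^n}}\ketbra{\vec O}{\vec I}\in\cat{SOP}[\frac1{2^n}]$ with $p\in\mathbb Z$ and $P$ integer-coefficiented, and set $m=\max(3,n)$. Two conditions must be met to land in $\cat{SOP}[\frac1{2^m}]'$: the global scalar must be an integer power of $\frac12$, and the phase must be expressible as $\frac{P'}{2^m}$ with $P'$ integral. The phase requirement is free, since $m\ge n$ gives $\frac P{2^n}=\frac{2^{m-n}P}{2^m}$ with $2^{m-n}P$ still integer-coefficiented. The only genuine obstruction is the parity of $p$.

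I would then split on that parity. If $p$ is even, then $\frac1{\sqrt2^p}=\frac1{2^{p/2}}$ is already an integer power of $\frac12$, so $t$ itself (with the phase re-expressed over $2^m$) works and one takes $t'=t$. If $p$ is odd, I take $t'=t\otimes u$, after $\alpha$-renaming $u$'s variable to be fresh. By the tensor rule this multiplies the scalar to $\frac1{\sqrt2^{p+1}}=\frac1{2^{(p+1)/2}}$, an integer power of $\frac12$ since $p+1$ is even, and adds $\frac18+\frac34y_0$ to the phase. Over the common denominator $2^m$ with $m\ge3$ one has $\frac18=\frac{2^{m-3}}{2^m}$ and $\frac34y_0=\frac{3\cdot2^{m-2}}{2^m}y_0$, both with integer numerators, so the total phase is $\frac{P'}{2^m}$ with $P'=2^{m-n}P+2^{m-3}+3\cdot2^{m-2}y_0$ integral. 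Hence $t'\in\cat{SOP}[\frac1{2^m}]'$, and since $\interp{u}=1$ and the tensor multiplies interpretations, $\interp{t'}=\interp{t}\cdot1=\interp{t}$.

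The only delicate point is why the target level must be $\max(3,n)$ rather than $n$: the compensating unit $u$ necessarily carries an eighth-root-of-unity phase (an angle $\frac\pi4$) in order to realise $\frac1{\sqrt2}$ as a two-term sum with a $\frac12$-scalar, and this forces the phase denominator up to at least $2^3$. Everything else is bookkeeping; in particular \autoref{lem:sop-decomp} is not needed for this step, the construction being a single tensor with the fixed scalar gadget $u$.
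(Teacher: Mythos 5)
Your proposal is correct and follows essentially the same route as the paper: the paper's proof also tensors $t$ with the scalar gadget $\frac1{\sqrt2}\sum e^{2i\pi(\frac18+\frac34y_0)}$ precisely when $t\notin\cat{SOP}[\frac1{2^n}]'$ (your odd-$p$ case), leaving the even case and the verification that $\interp{u}=1$ implicit. Your write-up just spells out the direct evaluation $e^{i\pi/4}+e^{-i\pi/4}=\sqrt2$ and the denominator bookkeeping that the paper omits.
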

\begin{proof}
If $t\in \cat{SOP}[\frac1{2^n}]$ and $t\notin \cat{SOP}[\frac1{2^n}]'$, then:\\ $t':=t\otimes\left(\frac1{\sqrt2}\sum e^{2i\pi\left(\frac{1}{8} + \frac{3}{4}y_{0}\right)}\right) \in \cat{SOP}[\frac1{2^{\max(3,n)}}]'$ and $\interp{t'}=\interp{t}$.
\end{proof}

We can now define the family of maps that will link the different levels of the ``dyadic levels'':
\begin{definition}
For any $k\geq1$, we define the functor $\ascend{\cdot}_k:\cat{SOP}[\frac1{2^{k+1}}]'\to\cat{SOP}[\frac1{2^k}]'$, first for morphisms $t=s\sum e^{2i\pi \frac\ell{2^{k+1}}y_{i_1}...y_{i_q}} \ketbra{\vec O}{\vec I}$ with phase polynomial of size 0 or 1:
\[t\mapsto
\begin{cases}
\displaystyle s\sum e^{2i\pi \frac{\ell/2}{2^{k}}y_{i_1}...y_{i_q}} \ketbra{\vec O,y'}{\vec I,y'} = t\otimes id&\text{ if } \ell\bmod2 = 0\\[0.5ex]
\displaystyle s\sum e^{2i\pi \frac{y_{i_1}...y_{i_q}}{2^{k}}\left((\ell-1)/2+y'\right)} \ketbra{\vec O,y'}{\vec I,y'{\oplus}y_{i_1}...y_{i_q}}&\text{ if } \ell\bmod2 = 1
\end{cases}\]
The functor is then extended to any $\cat{SOP}[\frac1{2^{k+1}}]'$-morphism by the decomposition of \autoref{lem:sop-decomp} (and given a particular ordering on the monomials of the phase polynomial).
\end{definition}
Since $\ascend{\cdot}_k$ is defined to be a functor, we have $\ascend{\cdot\,\circ\,\cdot}_k = \ascend{\cdot}_k\circ\ascend{\cdot}_k$. We can show that the ordering of the monomials has no real importance. Indeed, suppose $t_1=\sum e^{2i\pi \frac{\ell_1}{2^{k+1}}y_{i_1}...y_{i_q}} \ketbra{\vec y}{\vec y}$ and $t_2=\sum e^{2i\pi \frac{\ell_2}{2^{k+1}}y_{j_1}...y_{j_r}} \ketbra{\vec y}{\vec y}$. Then: $
\ascend{t_1\circ t_2}_k = \ascend{t_2\circ t_1}_k$ 
quite obviously when either $\ell_1\bmod2=0$ or $\ell_2\bmod2=0$, but also when $\ell_1\bmod2=\ell_2\bmod2=1$:
\begin{align*}
\ascend{t_1\circ t_2}_k
\rewrite{\text{HH}} \sum e^{2i\pi \left(
\begin{array}{c}
\scriptstyle \frac{y_{i_1}...y_{i_q}}{2^{k}}\left((\ell_1-1)/2+y'\right)
 +\frac{y_{j_1}...y_{j_r}}{2^{k}}\left((\ell_2-1)/2+y'\right)\\[1ex]
\scriptstyle + \frac{y_{i_1}...y_{i_q}y_{j_1}...y_{j_r}}{2^{k}}(1-2y')
\end{array}
\right)} \raisebox{-2ex}{\hspace*{-4em}$\ketbra{\vec y,y'}{\vec y,y'{\oplus}y_{i_1}...y_{i_q}{\oplus}y_{j_1}...y_{j_r}}$}\\
\underset{\text{HH}}\longleftarrow \ascend{t_2\circ t_1}_k
\end{align*}
Notice however that $\ascend{\cdot}_k$ adds an input and an output, so necessarily $\ascend{\cdot\otimes\cdot}_k\neq\ascend\cdot_k\otimes\ascend\cdot_k$.

The functors $\ascend{\cdot}_k$ map terms with the same semantics to terms with the same semantics:
\begin{proposition}
\label{prop:ascending-encodes}
$\displaystyle\forall t_1,t_2\in\cat{SOP}[\textstyle\frac1{2^{k+1}}]',~~\interp{t_1}=\interp{t_2}\implies \interp{\ascend{t_1}_k} = \interp{\ascend{t_2}_k}$
\end{proposition}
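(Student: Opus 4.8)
The plan is to show that $\interp{\ascend{\cdot}_k}$ factors through $\interp{\cdot}$: I will exhibit a map $\Phi$ on linear maps with $\interp{\ascend{t}_k}=\Phi(\interp{t})$ for every $t\in\cat{SOP}[\frac1{2^{k+1}}]'$. Once this is established the implication is immediate, since $\interp{t_1}=\interp{t_2}$ then forces $\Phi(\interp{t_1})=\Phi(\interp{t_2})$. To locate $\Phi$ I would compute $\interp{\ascend{t}_k}$ entrywise, using the decomposition of \autoref{lem:sop-decomp} to write $t=A\circ D_1\circ\cdots\circ D_\ell\circ B$ with $A,B$ phase-free and each $D_j=\sum e^{2i\pi\frac{\ell_j}{2^{k+1}}M_j}\ketbra{\vec y}{\vec y}$ a single-monomial diagonal, and then invoking functoriality of $\ascend{\cdot}_k$, so that the single ancilla introduced by the functor is shared along the whole composite.

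The core is the phase/ancilla bookkeeping. The images $\ascend{A}_k,\ascend{B}_k$ and the even-$\ell_j$ factors have the form $(-)\otimes id$, hence leave the ancilla untouched and contribute to the phase exactly the corresponding monomials of the original $P$. Each odd-$\ell_j$ factor instead sends $\ket{\vec y,a}$ to $e^{2i\pi\frac{M_j}{2^k}(c_j+(a\oplus M_j))}\ket{\vec y,a\oplus M_j}$ with $c_j=(\ell_j-1)/2$; using $M_j(a\oplus M_j)=M_j(1-a)$ and summing the running-parity contributions, these telescope. Writing $\mu(\vec y)=\bigoplus_{j:\,\ell_j\text{ odd}}M_j(\vec y)$, I expect the composite to send $\ket{\vec i,i'}$, on the branch with internal assignment $\vec y$, to output ancilla value $i'\oplus\mu(\vec y)$ with total phase $P(\vec y)+\frac{(-1)^{i'}\mu(\vec y)}{2^{k+1}}$. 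Consequently the entry of $\interp{\ascend{t}_k}$ at output $(\vec o,o')$ and input $(\vec i,i')$ equals $e^{2i\pi\frac{(-1)^{i'}\mu_0}{2^{k+1}}}\,S_{\mu_0}(\vec o,\vec i)$, where $\mu_0=i'\oplus o'$ and $S_{\mu_0}(\vec o,\vec i)$ is the partial sum of $\interp{t}_{\vec o,\vec i}$ restricted to the internal assignments with $\mu(\vec y)=\mu_0$.

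It then remains to recover these partial sums from $\interp{t}$ alone. The key point is that $P(\vec y)\equiv\mu(\vec y)\pmod 2$, since modulo $2$ only the odd-coefficient monomials of $P=\sum_j\ell_j M_j$ survive. Hence $S_0(\vec o,\vec i)$ collects precisely the summands with $e^{2i\pi P/2^{k+1}}\in\mathbb Q(\omega_k)$ and $S_1(\vec o,\vec i)$ those with value in $\omega_{k+1}\mathbb Q(\omega_k)$, where $\omega_{k+1}=e^{2i\pi/2^{k+1}}$ and $\omega_k=\omega_{k+1}^2$. Because $\{1,\omega_{k+1}\}$ is a basis of $\mathbb Q(\omega_{k+1})$ over $\mathbb Q(\omega_k)$, the decomposition $\interp{t}_{\vec o,\vec i}=S_0+S_1$ into these two components is unique, so $S_0$ and $S_1$ are functions of the single number $\interp{t}_{\vec o,\vec i}$ alone; explicitly, for the automorphism $\tau$ fixing $\omega_k$ and sending $\omega_{k+1}\mapsto-\omega_{k+1}$ one has $S_0=\tfrac12(x+\tau x)$ and $S_1=\tfrac12(x-\tau x)$ with $x=\interp{t}_{\vec o,\vec i}$ (for $k=1$ these are simply the real and imaginary parts). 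This yields a well-defined $\Phi$ and closes the argument.

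The step I expect to be the main obstacle is the second paragraph: checking that the per-factor phases of the odd monomials genuinely telescope into the clean correction $\frac{(-1)^{i'}\mu(\vec y)}{2^{k+1}}$, independently of the chosen monomial ordering. I would defuse the order-dependence by appealing to the commutation of $\ascend{\cdot}_k$ established above (so that every ordering gives the same interpretation, letting me fix a convenient one), and reduce the remaining bookkeeping to counting, for a fixed branch, how many active odd monomials precede each flip of the ancilla.
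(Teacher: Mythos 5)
Your proposal is correct and follows essentially the same route as the paper: the paper's proof likewise factors $\interp{\ascend{\cdot}_k}$ through $\interp{\cdot}$ via a map $\psi_k$ (your $\Phi$) built from the unique decomposition $x=x_1+e^{i\pi/2^k}x_2$ over the index-$2$ cyclotomic subfield --- exactly your $(S_0,S_1)$ splitting --- and verifies the intertwining identity on the single-monomial factors of \autoref{lem:sop-decomp}. The only difference is presentational: the paper packages the map as the homomorphism $A+Be^{i\pi/2^k}\mapsto A\otimes I_2+B\otimes X_k$ and checks it on generators, whereas you derive the same entrywise formula by telescoping the ancilla phases through the whole composite.
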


\begin{proof}
In appendix at page \pageref{prf:ascending-encodes}.
\end{proof}

\subsection{Going Back}

We  now show how to reverse the functors $\ascend{\cdot}_k$.

\begin{definition}
For any $k\geq1$, we define the (partial) map $\descend{\cdot}_k:\cat{SOP}[\frac1{2^k}]'\to \cat{SOP}[\frac1{2^{k+1}}]'$ as:
\[\forall t:n+1\to m+1 \in\cat{SOP}[{\textstyle\frac1{2^k}}]',~~\descend{t}_k:= (id_m\otimes \bra0)\circ t \circ (id_n\otimes \sum e^{2i\pi \frac{y_0}{2^{k+1}}}\ket{y_0})\]
\end{definition}
Notice that $\descend{\cdot}_k$ can only be applied on morphisms that have at least one input and one output.

$\descend{\cdot}_k$ reverses the action of $\ascend{\cdot}_k$ (up to some rewrites):
\begin{proposition}
\label{prop:descending-reverses-ascending}
$\descend{\ascend{\cdot}_k}_k \underset{\textnormal{TH}}\sim (\cdot)$ and $t_1\underset{\textnormal{TH}}\sim t_2 \implies \descend{t_1}\underset{\textnormal{TH}}\sim \descend{t_2}$ for any two terms $t_1,t_2$.
\end{proposition}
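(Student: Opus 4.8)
The statement bundles two claims; the second is the easy one, so I dispatch it first. Since $\descend{t}_k$ is by definition $A\circ t\circ B$ with the \emph{fixed} morphisms $A=id_m\otimes\bra0$ and $B=id_n\otimes\sum e^{2i\pi\frac{y_0}{2^{k+1}}}\ket{y_0}$, and since every rule of $\rewrite{\text{TH}}$ preserves the input/output arity, two $\underset{\textnormal{TH}}\sim$-equivalent morphisms $t_1,t_2$ have the same type $n+1\to m+1$ and the same $A,B$ apply to both. The congruence of $\underset{\textnormal{TH}}\sim$ for composition (\autoref{prop:TH-local}) then gives $\descend{t_1}_k=A\circ t_1\circ B\underset{\textnormal{TH}}\sim A\circ t_2\circ B=\descend{t_2}_k$ immediately.

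For the first claim $\descend{\ascend{\cdot}_k}_k\underset{\textnormal{TH}}\sim(\cdot)$, I would reduce to the single-monomial case via the decomposition of \autoref{lem:sop-decomp}. Write $t\overset{\ast}{\rewrite{\text{HH}}}C_{\vec O}\circ F_1\circ\cdots\circ F_\ell\circ C_{\vec I}$, where $C_{\vec O},C_{\vec I}$ are the output/input ``copy'' layers and each $F_i=\sum e^{2i\pi\frac{\ell_i}{2^{k+1}}m_i}\ketbra{\vec y}{\vec y}$ carries a single monomial and is the identity on the wires. Because $\ascend{\cdot}_k$ is a functor it distributes over this composition, producing a single extra wire $y'$ threaded through every factor (identity through $C_{\vec O},C_{\vec I}$). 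The map $\descend{\cdot}_k$ then feeds the phase state into the bottom of this extra wire and caps its top with $\bra0$.

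The crux is a single ``push-through'' identity for one monomial factor, valid up to $\underset{\textnormal{TH}}\sim$:
\[\ascend{F_i}_k\circ(id\otimes\textstyle\sum e^{2i\pi\frac{y_0}{2^{k+1}}}\ket{y_0})\underset{\textnormal{TH}}\sim(id\otimes\textstyle\sum e^{2i\pi\frac{y_0}{2^{k+1}}}\ket{y_0})\circ F_i.\]
When $\ell_i$ is even, $\ascend{F_i}_k=F_i\otimes id$ and the identity is immediate. When $\ell_i$ is odd, the state feeds the extra input $y'\oplus m_i$ of $\ascend{F_i}_k$; expanding $\widehat{y'\oplus m_i}=y'+m_i-2\,y'm_i$, the cross term $-\frac{y'm_i}{2^{k}}$ cancels the $+\frac{y'm_i}{2^{k}}$ already present in $\ascend{F_i}_k$, while $\frac{(\ell_i-1)/2}{2^{k}}m_i+\frac{m_i}{2^{k+1}}=\frac{\ell_i}{2^{k+1}}m_i$ rebuilds the original phase, leaving $F_i$ with the state now sitting \emph{above} it (the fresh composition variables being removed by (HH)/(Elim)). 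Applying this identity repeatedly pushes the state up the whole chain, restoring each $\ascend{F_i}_k$ to $F_i$, until it meets the cap through the identity-threaded $C_{\vec O}$: there $\bra0$ and the state annihilate to the scalar $1$ (only the $y_0=0$ summand survives), and the extra wire disappears. What remains is exactly $C_{\vec O}\circ F_1\circ\cdots\circ F_\ell\circ C_{\vec I}$, which is $\underset{\textnormal{TH}}\sim t$ by \autoref{lem:sop-decomp}. The congruence just proved guarantees that applying $\descend{\cdot}_k$ along the $\underset{\textnormal{TH}}\sim$-decomposition is legitimate.

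The main obstacle is the odd-case bookkeeping: getting the cross-term cancellation exactly right and checking that the overhead introduced by $\cat{SOP}$-composition (the $\frac1{2^{\,\cdot}}$ scalars, the enforcement terms $\frac{\vec O_g\cdot\vec y+\vec I_f\cdot\vec y}2$, and the intermediate variables) is cleaned up by (HH)/(Elim), so that the push-through truly holds up to $\underset{\textnormal{TH}}\sim$ rather than only semantically.
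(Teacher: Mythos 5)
Your proposal is correct and follows essentially the same route as the paper: dispatch the congruence claim via \autoref{prop:TH-local}, reduce the first claim to single-monomial factors via \autoref{lem:sop-decomp}, split on the parity of $\ell_i$, and in the odd case cancel the cross term $\frac{y'm_i}{2^k}$ so that $\frac{(\ell_i-1)/2}{2^k}m_i+\frac{m_i}{2^{k+1}}$ reassembles $\frac{\ell_i}{2^{k+1}}m_i$, before annihilating the pushed-through state against $\bra0$. The only cosmetic difference is that you phrase the key step as a per-factor ``push-through'' $\ascend{F_i}_k\circ(id\otimes\text{state})\underset{\textnormal{TH}}\sim(id\otimes\text{state})\circ F_i$, whereas the paper states it as $\ascend{t}_k\circ(id\otimes\text{state})\underset{\textnormal{TH}}\sim t\otimes\text{state}$ and iterates over the decomposition; these coincide by bifunctoriality.
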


\begin{proof}
In appendix at page \pageref{prf:descending-reverses-ascending}.
\end{proof}

\subsection{Completeness}

We may now show completeness first for $\cat{SOP}[\frac1{2^{k+1}}]'$ and then tweak the equational theory to extend the result to $\cat{SOP}[\frac1{2^{k+1}}]$.

\begin{theorem}[Completeness of $\cat{SOP}{[}\frac1{2^{k+1}}{]}'/\sim_{\raisebox{-1ex}{\hspace*{-2ex}\textnormal{\scriptsize TH}}}$]
\label{thm:dyadic-completeness-aux}
\[\forall t_1,t_2\in \cat{SOP}[{\textstyle\frac1{2^{k+1}}}]',~~\interp{t_1}=\interp{t_2}\iff t_1\underset{\textnormal{TH}}\sim t_2\]
\end{theorem}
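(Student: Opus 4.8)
The plan is to prove the left-to-right (completeness) direction, since the right-to-left direction is immediate: by \autoref{prop:TH-soundness}, $t_1\underset{\textnormal{TH}}\sim t_2$ already forces $\interp{t_1}=\interp{t_2}$, and this holds on all of $\cat{SOP}$, hence on the fragment $\cat{SOP}[\frac1{2^{k+1}}]'$. I would attack the forward implication by induction on the dyadic level $k\geq0$. For the base case $k=0$, I would note that $\cat{SOP}[\frac12]'$ is a sub-collection of $\cat{SOP}[\frac12]$ (a scalar that is a power of $\frac12$ is in particular a power of $\frac1{\sqrt2}$), so two morphisms of $\cat{SOP}[\frac12]'$ with equal interpretation are in particular $\cat{SOP}[\frac12]$-morphisms with equal interpretation, and \autoref{thm:TH-completeness} gives $t_1\underset{\textnormal{TH}}\sim t_2$ directly.

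For the inductive step, assuming the statement for $\cat{SOP}[\frac1{2^k}]'$, I would take $t_1,t_2\in\cat{SOP}[\frac1{2^{k+1}}]'$ with $\interp{t_1}=\interp{t_2}$ and run a three-move argument that ascends one level, invokes the induction hypothesis, then descends. First, \autoref{prop:ascending-encodes} transports the equality of semantics upward to $\interp{\ascend{t_1}_k}=\interp{\ascend{t_2}_k}$. Since $\ascend{t_1}_k$ and $\ascend{t_2}_k$ lie in $\cat{SOP}[\frac1{2^k}]'$, the induction hypothesis yields $\ascend{t_1}_k\underset{\textnormal{TH}}\sim\ascend{t_2}_k$. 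Finally, I would push this equivalence back down through $\descend{\cdot}_k$: the second half of \autoref{prop:descending-reverses-ascending} preserves $\underset{\textnormal{TH}}\sim$, while its first half states $\descend{\ascend{\cdot}_k}_k\underset{\textnormal{TH}}\sim(\cdot)$, so chaining $t_1\underset{\textnormal{TH}}\sim\descend{\ascend{t_1}_k}_k\underset{\textnormal{TH}}\sim\descend{\ascend{t_2}_k}_k\underset{\textnormal{TH}}\sim t_2$ closes the step.

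I do not expect a genuine obstacle inside this proof: all of the mathematical weight has been front-loaded into \autoref{prop:ascending-encodes} and \autoref{prop:descending-reverses-ascending}, so the argument here is essentially a bookkeeping assembly of them driven by the induction. The one point I would verify carefully is that $\descend{\cdot}_k$ is only a \emph{partial} map, defined on morphisms with at least one input and one output; this is harmless because $\ascend{\cdot}_k$ by construction appends an extra input/output, so $\descend{\cdot}_k$ is always applicable to the ascended terms. The reason the statement is phrased for the primed fragment $\cat{SOP}[\frac1{2^{k+1}}]'$ rather than $\cat{SOP}[\frac1{2^{k+1}}]$ is precisely that the functors $\ascend{\cdot}_k$ and $\descend{\cdot}_k$ are set up between the primed fragments (where the overall scalar is a power of $\frac12$); extending to the unprimed fragment is the remaining, comparatively minor, step handled afterwards via the scalar-normalising lemma.
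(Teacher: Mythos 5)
Your proposal is correct and follows essentially the same route as the paper: the paper simply unrolls your induction into a single chain of iterated ascents $\ascend{\cdots\ascend{t_i}_k\cdots}_1$ landing in $\cat{SOP}[\frac12]'\subset\cat{SOP}[\frac12]$, one application of \autoref{thm:TH-completeness}, and iterated descents via \autoref{prop:descending-reverses-ascending}. Your remark about the partiality of $\descend{\cdot}_k$ being harmless (since $\ascend{\cdot}_k$ always appends a wire) is a valid observation the paper leaves implicit.
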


\begin{proof}
Let $t_1,t_2\in \cat{SOP}[{\textstyle\frac1{2^{k+1}}}]'$ such that $\interp{t_1}=\interp{t_2}$. By \autoref{prop:ascending-encodes}:
\[\interp{\ascend{...\ascend{t_1}_k...}_1}=\interp{\ascend{...\ascend{t_2}_k...}_1}\]
Since $\ascend{...\ascend{t_i}_k...}_1 \in \cat{SOP}[\frac12]'\subset \cat{SOP}[\frac12]$, by completeness of this fragment (\autoref{thm:TH-completeness}): 
\[\ascend{...\ascend{t_1}_k...}_1 \underset{\textnormal{TH}}\sim \ascend{...\ascend{t_2}_k...}_1 \]
Finally, by \autoref{prop:descending-reverses-ascending}:
$t_1\underset{\textnormal{TH}}\sim\descend{...\descend{\ascend{...\ascend{t_1}_k...}_1}_1...}_k \underset{\textnormal{TH}}\sim \descend{...\descend{\ascend{...\ascend{t_2}_k...}_1}_1...}_k\underset{\textnormal{TH}}\sim t_2 $.
\end{proof}

This is not entirely satisfactory, as we would like to relate any two morphisms of the same interpretation. However:

\begin{lemma}
\label{lem:empty-intersection-sop-sop'}
If $t_1\in \cat{SOP}[{\textstyle\frac1{2^{k+1}}}]'$ and $t_2\in\cat{SOP}[{\textstyle\frac1{2^{k+1}}}]\setminus \cat{SOP}[{\textstyle\frac1{2^{k+1}}}]'$, then $t_1\underset{\textnormal{TH}}{\nsim}t_2$.
\end{lemma}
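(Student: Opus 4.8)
The plan is to separate the two fragments by a parity invariant that is preserved by the equational theory. Write the real scalar of a morphism $t = s\sum e^{2i\pi P}\ketbra{\vec O}{\vec I}$ as $s=(1/\sqrt2)^{a(t)}$ with $a(t)\in\mathbb Z$, and set $\nu(t):=a(t)\bmod2\in\{0,1\}$. This is well defined whenever the scalar is a power of $1/\sqrt2$, which holds for every morphism of $\cat{SOP}[\frac1{2^{k+1}}]$ and is maintained by all the rules (which only ever multiply the scalar by $2$ or reset it to $1$), so $\nu$ is defined all along any rewrite chain. By the two definitions, $t\in\cat{SOP}[\frac1{2^{k+1}}]'$ iff $s$ is a power of $\frac12$, i.e. $a(t)$ is even and $\nu(t)=0$, whereas $t\in\cat{SOP}[\frac1{2^{k+1}}]\setminus\cat{SOP}[\frac1{2^{k+1}}]'$ forces $a(t)$ odd and $\nu(t)=1$. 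Hence it suffices to show that $\nu$ cannot change along a chain witnessing $t_1\underset{\textnormal{TH}}\sim t_2$.

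First I would inspect each rule of \autoref{fig:rewrite-TH} for its effect on the scalar. The substitution rules (HHgen), (ket) and (bra) leave $s$ unchanged, hence fix $a(t)$ and $\nu$. The rules (Elim) and (HHnl) each produce exactly one factor $2$, i.e. $s\mapsto 2s$ and $a(t)\mapsto a(t)-2$, so the parity $\nu(t)$ is again unchanged. Thus every rule but (Z) preserves $\nu$ (and, being an equality of parities, it does so in both directions).

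The rule (Z) is the sole exception, as it overwrites the scalar by $1$ and so sends $\nu$ to $0$. The observation that rescues the argument is that (Z) relates only morphisms of zero interpretation: in both its left-hand side $s\sum e^{2i\pi(\frac{y_0}2+R)}\ketbra{\vec O}{\vec I}$ (with $y_0\notin\Var(R,\vec O,\vec I)$) and its right-hand side $\sum_{y_0}e^{2i\pi\frac{y_0}2}\ketbra{0\cdots0}{0\cdots0}$, the variable $y_0$ occurs only through the term $\frac{y_0}2$, so summing it out yields a factor $1+e^{i\pi}=0$ and both sides interpret to the zero map. Therefore, assuming $\interp{t_1}\neq0$, soundness (\autoref{prop:TH-soundness}) forces every term of a chain $t_1=u_0\leftrightarrow\cdots\leftrightarrow u_N=t_2$ to share the nonzero interpretation $\interp{t_1}$, so no step can be an instance of (Z); every step then preserves $\nu$, whence $\nu(t_1)=\nu(t_2)$, contradicting $\nu(t_1)=0\neq1=\nu(t_2)$. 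This establishes $t_1\underset{\textnormal{TH}}{\nsim}t_2$ whenever $\interp{t_1}\neq0$.

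The delicate case, which I expect to be the main obstacle, is $\interp{t_1}=\interp{t_2}=0$: this is precisely the regime in which (Z) may fire and flip $\nu$, so the parity invariant no longer separates the classes, and indeed one can connect zero-interpretation presentations of different $\nu$ by routing a chain through (Z). I would handle this either by restricting attention to the only relevant nonzero case or by adjoining the hypothesis $\interp{t_1}\neq0$; the substantive instance needed in the sequel is the nontrivial scalar gadget $\frac1{\sqrt2}\sum e^{2i\pi(\frac18+\frac34y_0)}$, of interpretation $1$ and $\nu=1$, against the empty diagram, of interpretation $1$ and $\nu=0$, which the nonzero argument already separates.
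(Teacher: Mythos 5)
Your argument is the same one the paper uses --- its entire proof is the one-liner ``there is no rule in $\rewrite{\textnormal{TH}}$ that changes the overall scalar from an odd power of $\frac1{\sqrt2}$ to an even one, or vice-versa'' --- but you are more careful than the paper, and the caveat you raise about (Z) is not a technicality: it is a genuine flaw in the lemma as stated. The right-hand side of (Z) carries scalar $1$ regardless of the $s$ on the left, so for instance
\[
t_2=\frac1{\sqrt2}\sum_{y_0,y_1}e^{2i\pi\left(\frac{y_0}2+\frac{y_1}2\right)}\ketbra{}{}
\quad\rewrite{\textnormal{Z}}\quad
t_1=\sum_{y_0}e^{2i\pi\frac{y_0}2}\ketbra{}{}
\]
is a legal rewrite (take $R=\frac{y_1}2\neq 0$, $y_0\notin\Var(R)$), with $t_2\in\cat{SOP}[\frac1{2^{k+1}}]\setminus\cat{SOP}[\frac1{2^{k+1}}]'$ and $t_1\in\cat{SOP}[\frac1{2^{k+1}}]'$; hence $t_1\underset{\textnormal{TH}}\sim t_2$ and the lemma fails on zero-interpretation morphisms. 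Your repair is the right one: the left-hand side of (Z) always interprets to the zero map (summing out $y_0$ produces the factor $1+e^{i\pi}=0$), so by soundness no step of a chain between morphisms of nonzero interpretation can be an instance of (Z) in either direction, and on the remaining rules your parity $\nu$ is indeed invariant ((Elim) and (HHnl) multiply $s$ by $2$, while (HHgen), (ket) and (bra) leave it untouched). The statement restricted to $\interp{t_1}\neq 0$ is also all the paper actually needs, since the lemma's only role is to separate $t_{\sqrt2}$ from the empty term, both of interpretation $1$. So: same invariant as the paper, but your version of the proof is the correct one, and the lemma should be weakened (or (Z) amended to keep the scalar $s$) accordingly.
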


\begin{proof}
There is no rule in $\rewrite{\textnormal{TH}}$ that changes the overall scalar from an odd power of $\frac1{\sqrt2}$ to an even one, or vice-versa.
\end{proof}

However, adding a single rule:
\[\sum_{\vec y} e^{2i\pi \left(\frac18+\frac34y_0+R\right)}\ketbra{\vec O}{\vec I} \rewrite{y_0\notin\Var(R,\vec O,\vec I)} \sqrt2\sum_{\vec y\setminus\{y_0\}} e^{2i\pi R}\ketbra{\vec O}{\vec I} \tag{$\sqrt2$}\]
fixes this caveat. This rule can also be recovered from the more general one:
\[\sum_{\vec y} e^{2i\pi\left(\frac{y_0}{4} + \frac{y_0}{2}\widehat{Q} + R\right)}\ketbra{\vec O}{\vec I}
\rewrite{y_0\notin\Var(Q,R,\vec O,\vec I)} \sqrt{2}\sum_{\vec y\setminus{\{y_0\}}} e^{2i\pi\left(\frac{1}{8}-\frac{1}{4}\widehat{Q} + R\right)}\ketbra{\vec O}{\vec I}\tag{$\omega$}\]
which was already used in \cite{SOP,LvdWK,SOP-Clifford} to deal with the Clifford fragment of quantum mechanics.

With this additional rule at hand, we can derive the general completeness theorem:
\begin{theorem}[Completeness of $\cat{SOP}{[}\frac1{2^{k+1}}{]}/\sim_{\raisebox{-1ex}{\hspace*{-2ex}\textnormal{\scriptsize TH'}}}$]
Let us write $\rewrite{\textnormal{TH'}}~:=~\rewrite{\textnormal{TH}}+\{(\sqrt2)\}$. Then: 
$\forall t_1,t_2\in \cat{SOP}[{\textstyle\frac1{2^{k+1}}}],~~\interp{t_1}=\interp{t_2}\iff t_1\underset{\textnormal{TH'}}\sim t_2$
\end{theorem}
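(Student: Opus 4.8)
The plan is to reduce completeness of $\cat{SOP}[\frac1{2^{k+1}}]$ under $\underset{\textnormal{TH'}}\sim$ to the completeness of the primed fragment already established in \autoref{thm:dyadic-completeness-aux}, using the new rule $(\sqrt2)$ purely as a bridge between odd and even powers of $\frac1{\sqrt2}$. For the easy ($\Leftarrow$) direction I would argue soundness of $\rewrite{\textnormal{TH'}}$: since $\rewrite{\textnormal{TH}}$ is already sound by \autoref{prop:TH-soundness}, it suffices to check that $(\sqrt2)$ preserves the interpretation, which is immediate from the scalar identity $\interp{\frac1{\sqrt2}\sum_{y_0} e^{2i\pi(\frac18+\frac34 y_0)}}=1$ shown above. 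Thus $t_1\underset{\textnormal{TH'}}\sim t_2$ forces $\interp{t_1}=\interp{t_2}$.

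For the ($\Rightarrow$) direction, set $N=\max(3,k+1)$ and take $t_1,t_2\in\cat{SOP}[\frac1{2^{k+1}}]$ with $\interp{t_1}=\interp{t_2}$. I would first bring each $t_i$ into the common primed fragment $\cat{SOP}[\frac1{2^N}]'$. If $t_i$ is already primed, it lies in $\cat{SOP}[\frac1{2^N}]'$ since $k+1\le N$. Otherwise its overall scalar is an odd power of $\frac1{\sqrt2}$, and applying $(\sqrt2)$ backwards — legitimate because $\underset{\textnormal{TH'}}\sim$ is the symmetric closure — reintroduces a fresh variable $y$ together with the phase summand $\frac18+\frac34 y$ and converts the $\sqrt2$ factor into an even power of $\frac1{\sqrt2}$. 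This yields $t_i'\in\cat{SOP}[\frac1{2^N}]'$ with $t_i\underset{\textnormal{TH'}}\sim t_i'$ and, by soundness, $\interp{t_i'}=\interp{t_i}$. Note that \autoref{lem:empty-intersection-sop-sop'} is exactly what makes this step necessary: without $(\sqrt2)$ an odd-power $t_i$ is provably unable to reach any primed morphism.

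It then remains to invoke the primed completeness theorem. Since $t_1',t_2'\in\cat{SOP}[\frac1{2^N}]'$ and $\interp{t_1'}=\interp{t_1}=\interp{t_2}=\interp{t_2'}$, \autoref{thm:dyadic-completeness-aux} gives $t_1'\underset{\textnormal{TH}}\sim t_2'$, hence \emph{a fortiori} $t_1'\underset{\textnormal{TH'}}\sim t_2'$. Chaining by transitivity produces $t_1\underset{\textnormal{TH'}}\sim t_1'\underset{\textnormal{TH'}}\sim t_2'\underset{\textnormal{TH'}}\sim t_2$, which is the claim. The only delicate point, and the main obstacle, is the scalar/level bookkeeping: one must send both morphisms into the \emph{same} primed level $N$ (forcing $N\ge 3$ so that the pattern $\frac18+\frac34 y$ is expressible with integer numerators) and track that $(\sqrt2)$ correctly flips the parity of the exponent of $\frac1{\sqrt2}$. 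Everything substantive is already packaged inside \autoref{thm:dyadic-completeness-aux} and the ascending/descending machinery behind it, so the remaining argument is essentially this bridging bookkeeping.
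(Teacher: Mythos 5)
Your proposal is correct and follows essentially the same route as the paper: both construct $t_i'$ in the primed fragment $\cat{SOP}[\frac1{2^{\max(3,k+1)}}]'$ by absorbing the odd $\frac1{\sqrt2}$ via the $(\sqrt2)$ rule (the paper phrases this as tensoring with $t_{\sqrt2}$, which is the same step read backwards), and then invoke \autoref{thm:dyadic-completeness-aux} and transitivity. The only addition is your explicit soundness check for the $(\Leftarrow)$ direction, which the paper leaves implicit.
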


\begin{proof}
Let $t_1,t_2\in \cat{SOP}[{\textstyle\frac1{2^{k+1}}}]$ such that $\interp{t_1}=\interp{t_2}$. Let us also write:
\[t_{\sqrt2}:=\frac{1}{\sqrt2} \sum e^{2i\pi\left(\frac{1}{8} + \frac{3}{4}y_{0}\right)}\]
We define $t_i'$ as:
$\qquad t_i':=\begin{cases}
t_i &\text{ if } t_i\in \cat{SOP}[{\textstyle\frac1{2^{k+1}}}]'\\
t_i\otimes t_{\sqrt2} &\text{ if } t_i\notin \cat{SOP}[{\textstyle\frac1{2^{k+1}}}]'
\end{cases}$.\\
It is easy to check that $t_i'\in \cat{SOP}[{\textstyle\frac1{2^{\max(3,k+1)}}}]'$ and that $t_i\underset{\textnormal{TH'}}\sim t_i'$. By \autoref{thm:dyadic-completeness-aux}:
\[t_1 \underset{\textnormal{TH'}}\sim t_1' \underset{\textnormal{TH'}}\sim t_2' \underset{\textnormal{TH'}}\sim t_2\qedhere\]
\end{proof}

We hence have completeness for all dyadic fragments of quantum computation. By taking their union, we can get completeness for the ``whole dyadic fragment''.

\begin{definition}
Let $\cat{SOP}[\mathbb D] := \bigcup\limits_{k=1}^\infty \cat{SOP}[\frac1{2^{k}}]$ be the whole dyadic fragment of quantum computation.
\end{definition}

\begin{corollary}[Completeness of $\cat{SOP}{[}\mathbb D{]}/\sim_{\raisebox{-1ex}{\hspace*{-2ex}\textnormal{\scriptsize TH'}}}$]
\[\forall t_1,t_2\in \cat{SOP}[{\textstyle\mathbb D}],~~\interp{t_1}=\interp{t_2}\iff t_1\underset{\textnormal{TH'}}\sim t_2\]
\end{corollary}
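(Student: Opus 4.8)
The plan is to collapse the infinite union $\cat{SOP}[\mathbb D]$ onto a single dyadic level and then invoke the finite-level completeness theorem for $\cat{SOP}[\frac1{2^{k+1}}]/\sim_{\textnormal{TH'}}$ just established. The structural fact that drives everything is that the levels form an increasing chain: $\cat{SOP}[\frac1{2^n}]\subseteq\cat{SOP}[\frac1{2^{n+1}}]$ for every $n\geq1$. Indeed, a morphism $\frac1{\sqrt2^p}\sum e^{2i\pi\frac P{2^n}}\ketbra{\vec O}{\vec I}$ with $P$ of integer coefficients is literally the same $\cat{SOP}$-tuple as $\frac1{\sqrt2^p}\sum e^{2i\pi\frac{2P}{2^{n+1}}}\ketbra{\vec O}{\vec I}$, since $\frac P{2^n}=\frac{2P}{2^{n+1}}$ as real phase polynomials and $2P$ again has integer coefficients; hence the morphism already lies in $\cat{SOP}[\frac1{2^{n+1}}]$.

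For the direction $(\Leftarrow)$ I would only observe that every rule of $\rewrite{\textnormal{TH'}}$ is semantics-preserving, independently of the level at which it is applied: the rules of $\rewrite{\textnormal{TH}}$ by \autoref{prop:TH-soundness}, and the added rule $(\sqrt2)$ by the identity $\interp{\frac1{\sqrt2}\sum_{y_0}e^{2i\pi(\frac18+\frac34y_0)}}=1$ recorded above, which shows that eliminating $y_0$ contributes exactly the compensating scalar $\sqrt2$ that the rule reintroduces. Thus $t_1\underset{\textnormal{TH'}}\sim t_2$ forces $\interp{t_1}=\interp{t_2}$ for any $t_1,t_2\in\cat{SOP}[\mathbb D]$, no matter which levels they inhabit.

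For completeness $(\Rightarrow)$, take $t_1,t_2\in\cat{SOP}[\mathbb D]$ with $\interp{t_1}=\interp{t_2}$. By the definition of the union there are $k_1,k_2\geq1$ with $t_i\in\cat{SOP}[\frac1{2^{k_i}}]$; setting $N:=\max(k_1,k_2)$ and iterating the nesting above, both $t_1$ and $t_2$ lie in the single finite-level fragment $\cat{SOP}[\frac1{2^N}]$. Writing $N=k+1$, the finite-level completeness theorem applied inside $\cat{SOP}[\frac1{2^{k+1}}]$ yields $t_1\underset{\textnormal{TH'}}\sim t_2$ at once.

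The only genuinely new ingredient beyond the finite-level theorem is the nesting $\cat{SOP}[\frac1{2^n}]\subseteq\cat{SOP}[\frac1{2^{n+1}}]$, and even that is a one-line check; the remainder is purely quantifier bookkeeping to bring two \emph{a priori} differently-leveled morphisms to a common level. I therefore expect no real obstacle here — the substantive work was already done in proving completeness level by level, and the union adds nothing essential precisely because any two dyadic morphisms always share a common level.
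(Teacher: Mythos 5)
Your argument is correct and is exactly the (implicit) justification the paper intends for this corollary: the fragments nest via $\frac{P}{2^n}=\frac{2P}{2^{n+1}}$, so any two morphisms of $\cat{SOP}[\mathbb D]$ share a common level, where the level-wise completeness theorem applies. The only cosmetic point is the edge case $N=1$ (where $k=0$ falls outside the stated range of the dyadic theorem), which your own nesting observation disposes of by embedding into level $2$ first.
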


\section{Conclusion and Discussion}

We have given a new rewrite system for the Toffoli-Hadamard fragment of Sums-Over-Paths, and showed the induced equational theory to be complete. We then extended this rewrite strategy by adding a single new rewrite, which we then proved to be complete for the whole dyadic fragment. As expected from the universality of the fragments at hand, we do not get all the nice properties of the rewriting in the Clifford fragment. In particular, we showed that the rewrite strategies given above are not confluent, and that the size of the terms may grow exponentially when rules are applied carelessly. Whether one of the above two drawbacks can be removed by a different rewrite system remains an open question.

Using the translation from $\cat{SOP}$ to $\cat{ZH}$, this time, we can make sense of the $\cat{SOP}$ rewrite rules as graphical ones. We will focus on the two rules that were not present in the previous works on $\cat{SOP}$, namely (HHgen) and (HHnl). Let us start with the latter.

(HHnl) turns an occurrence of $\frac{y_0}2\widehat Q+\frac{y_0'}2\widehat Q'$ into $\frac{y_0}2(\widehat Q+\widehat Q'+\widehat Q\widehat Q')$, when the two variables are linked to nothing else than their respective polynomials $Q$ and $Q'$. The induced $\cat{ZH}$ identity can be derived using its rules:
\[\tikzfig{HHnl}\]
(where the first equality uses \autoref{lem:HHnl-lemma} (in appendix), the second, third and last use \autoref{lem:poly-copy}, and the fourth uses (ZS1), (HS2) and the definition of the black node).\\
Although the overall number of nodes usually increases, the number of white nodes that amount to $\cat{SOP}$-variables (i.e. white nodes that are not part of a polynomial) decreases.

Rule (HHgen) is a bit more tricky to deal with in particular as it involves a non-trivial side condition. Hence, we do not provide a derivation of the equality, but only state it. With the pattern $\frac{y_0}2(y_i\widehat{Q}+\widehat{Q}'+1)$ we get $\frac{y_0}2(y_i\widehat{Q}+1)$ with all other occurrences of $y_i$ replaced by $Q'\oplus 1$:
\[\tikzfig{HHgen}\]

This paper, together with the above small study of how the rewrites translate as ZH transformations, really shows how the two formalisms (SOP and ZH) give different and complementary approaches to rewriting and simplifying representations of quantum processes.

We provided new rewrites that allow simplification in the terms -- in that they decrease the number of variables -- with the aim of completeness. A next important step for verification, simulation and simplification using SOP is to determine which rewrites, or which variants, are the most relevant to the task at hand.


\appendix
\section*{Appendix}

\begin{proof}[Proof of \autoref{prop:TH-local}]
\phantomsection\label{prf:TH-local}
The result is obvious for the tensor product $(.\otimes.)$. For the composition, we show that if $t_1\rewrite{\text{TH}}t_2$ in one step, then $A\circ t_1\circ B \underset{\textnormal{TH}}{\sim} A\circ t_2\circ B$. In other words, we have to show it for every rule in $\rewrite{\text{TH}}$:

\noindent$\bullet$ (Elim): Obvious.

\noindent$\bullet$ (HHgen):
\begin{align*}
A\circ t_1\circ B = \sum e^{2i\pi \left(P_A+P_B+\frac{y_0}2(y_i\widehat Q + \widehat{Q'} +1) + R + \frac{\vec O\cdot \vec x + \vec I_A\cdot\vec x + \vec I\cdot \vec x' + \vec O_B\cdot \vec x'}2\right)} \ketbra{\vec O_A}{\vec I_B}\\
\rewrite{\text{HHgen}}
\sum e^{2i\pi \left(P_A+P_B+\frac{y_0}2(\widehat Q +1) + R[y_i\leftarrow \widehat{1{\oplus}Q'}] + \frac{\vec O[y_i\leftarrow \widehat{1{\oplus}Q'}]\cdot \vec x + \vec I_A\cdot\vec x + \vec I[y_i\leftarrow \widehat{1{\oplus}Q'}]\cdot \vec x' + \vec O_B\cdot \vec x'}2\right)} \raisebox{-2ex}{\hspace*{-2em}$\ketbra{\vec O_A}{\vec I_B}$}\\
= A\circ t_1[y_i\leftarrow {1{\oplus}Q'}]\circ B
= A\circ t_2\circ B
\end{align*}

\noindent$\bullet$ (HHnl):
\begin{align*}
A\circ t_1\circ B = \sum e^{2i\pi \left(P_A+P_B+\frac{y_0}{2}\widehat Q + \frac{y_0'}{2}\widehat{Q'} + R + \frac{\vec O\cdot \vec x + \vec I_A\cdot\vec x + \vec I\cdot \vec x' + \vec O_B\cdot \vec x'}2\right)} \ketbra{\vec O_A}{\vec I_B}\\
\rewrite{\text{HHnl}}
2\sum e^{2i\pi \left(P_A+P_B+\frac{y_0}{2}(\widehat Q + \widehat{Q'} + \widehat{QQ'}) + R + \frac{\vec O\cdot \vec x + \vec I_A\cdot\vec x + \vec I\cdot \vec x' + \vec O_B\cdot \vec x'}2\right)} \ketbra{\vec O_A}{\vec I_B}\\
= A\circ (2t_1[y_0'\leftarrow {y_0{\oplus}y_0Q}])\circ B
= A\circ t_2\circ B
\end{align*}

\noindent$\bullet$ (ket):
\begin{align*}
A\circ t_1\circ B =\hspace*{32em}\\
\sum e^{2i\pi \left(P_A+P_B+P+ \frac{(\widehat O_1 +\widehat I_{A1})x_1 +\ldots + (y_0+\widehat O_i'+\widehat I_{Ai})x_i+\ldots +(\widehat O_m +\widehat I_{Am})x_m + \vec I\cdot \vec x' + \vec O_B\cdot \vec x'}2\right)} \ketbra{\vec O_A}{\vec I_B}\\
\rewrite{\text{HH}}
2\sum e^{2i\pi \left(P_A+P_B+P[y_0\leftarrow \widehat{O_i'{\oplus}I_{Ai}}] + \frac{\vec O[y_0\leftarrow \widehat{O_i'{\oplus}I_{Ai}}]\cdot \vec x + \vec I_A\cdot\vec x + \vec I[y_0\leftarrow \widehat{O_i'{\oplus}I_{Ai}}]\cdot \vec x' + \vec O_B\cdot \vec x'}2\right)} \ketbra{\vec O_A}{\vec I_B}\\
\underset{\text{HH}}\longleftarrow
\sum e^{2i\pi \left(P_A+P_B+P[y_0\leftarrow \widehat{y_0{\oplus}O_i'}] + \frac{\vec O[y_0\leftarrow \widehat{y_0{\oplus}O_i'}]\cdot \vec x + \vec I_A\cdot\vec x + \vec I[y_0\leftarrow \widehat{y_0{\oplus}O_i'}]\cdot \vec x' + \vec O_B\cdot \vec x'}2\right)} \ketbra{\vec O_A}{\vec I_B}\\
=A\circ t_1[y_0\leftarrow {y_0{\oplus}O_i'}]\circ B = A\circ t_2\circ B
\end{align*}

\noindent$\bullet$ (bra): Similar to (ket).

\noindent$\bullet$ (Z):
\begin{align*}
A\circ t_1\circ B = \sum e^{2i\pi \left(P_A+P_B+\frac{y_0}2 + R + \frac{\vec O\cdot \vec x + \vec I_A\cdot\vec x + \vec I\cdot \vec x' + \vec O_B\cdot \vec x'}2\right)} \ketbra{\vec O_A}{\vec I_B}
\rewrite{\text{Z}}
\sum e^{2i\pi \left(\frac{y_0}2\right)} \ketbra{\vec 0}{\vec 0}\\
\underset{\text{Z}}\longleftarrow
\sum e^{2i\pi \left(P_A+P_B+\frac{y_0}2 + \frac{\vec 0\cdot \vec x + \vec I_A\cdot\vec x + \vec 0\cdot \vec x' + \vec O_B\cdot \vec x'}2\right)} \ketbra{\vec O_A}{\vec I_B}
= A\circ t_2\circ B
\end{align*}
\end{proof}

\begin{proof}[Proof of \autoref{prop:TH-soundness}]
\phantomsection\label{prf:TH-soundness}
~\\
(HHgen): If $t = \sum_{\vec y\in V^k} e^{2i\pi\left(\frac{y_0}{2}(y_i\widehat Q + \widehat{Q'} +1) + R\right)}\ketbra{\vec O}{\vec I}$ such that $QQ'=Q'$:
\begin{align*}
\interp{t}
&= \sum_{\vec y\in \{0,1\}^k} e^{2i\pi\left(\frac{y_0}{2}(y_i\widehat Q + \widehat{Q'} +1) + R\right)}\ketbra{\vec O}{\vec I}
= \sum_{\vec y\in \{0,1\}^{k-1}} (1+e^{i\pi (y_i\widehat Q + \widehat{Q'} +1)})e^{2i\pi R}\ketbra{\vec O}{\vec I}\\
&= \sum_{\vec y\in \{0,1\}^{k-2}} (1+e^{i\pi (\widehat{Q'}\widehat Q + \widehat{Q'} +1)})e^{2i\pi R[y_i\leftarrow \widehat Q']}(\ketbra{\vec O}{\vec I})[y_i\leftarrow Q']\\[-0.8em]
&\tag*{$\displaystyle + \sum_{\vec y\in \{0,1\}^{k-2}} (1+e^{i\pi (\widehat{1\oplus Q'}\widehat Q + \widehat{Q'} +1)})e^{2i\pi R[y_i\leftarrow \widehat{1\oplus Q'}]}(\ketbra{\vec O}{\vec I})[y_i\leftarrow 1\oplus Q']$}\\
&= 0 +\sum_{\vec y\in \{0,1\}^{k-2}} (1+e^{i\pi (\widehat{1\oplus Q'}\widehat Q + \widehat{Q'} +1)})e^{2i\pi R[y_i\leftarrow \widehat{1\oplus Q'}]}(\ketbra{\vec O}{\vec I})[y_i\leftarrow 1\oplus Q']\\
&=\interp{t[y_i\leftarrow 1\oplus Q']}
\end{align*}
(HHnl) : If $t = \sum e^{2i\pi\left(\frac{y_0}{2}\widehat Q + \frac{y_0'}{2}\widehat{Q'} + R\right)}\ketbra{\vec O}{\vec I}$:
\begin{align*}
\interp{t} &= \sum_{\vec y\in\{0,1\}^k} e^{2i\pi\left(\frac{y_0}{2}\widehat Q + \frac{y_0'}{2}\widehat{Q'} + R\right)}\ketbra{\vec O}{\vec I}\\
&= \sum_{\vec y\in\{0,1\}^{k-2}} \left(1+e^{i\pi\widehat Q}+e^{i\pi\widehat Q'}+e^{i\pi\widehat{Q{\oplus}Q'}}\right) e^{2i\pi R}\ketbra{\vec O}{\vec I}\\
&= \sum_{\vec y\in\{0,1\}^{k-2}} 2\left(1+e^{i\pi\widehat{Q{\oplus}Q'{\oplus}QQ'}}\right) e^{2i\pi R}\ketbra{\vec O}{\vec I}\\
&= 2\sum_{\vec y\in\{0,1\}^{k-1}} e^{2i\pi\left(\frac{y_0}{2}(\widehat Q + \widehat{Q'} + \widehat{QQ'}) + R\right)}\ketbra{\vec O}{\vec I}\\
&= \interp{2t[y_0'\leftarrow y_0\oplus y_0Q]}
\end{align*}
The third equality is obtained by checking that the equality is true for all values of $\widehat{Q}$ and $\widehat{Q'}$:
$$\begin{array}{c|c|c|c}
~\widehat Q~ & ~\widehat{Q'}~ & ~\left(1+e^{i\pi\widehat Q}+e^{i\pi\widehat Q'}+e^{i\pi\widehat{Q{\oplus}Q'}}\right)~ & ~2\left(1+e^{i\pi\widehat{Q{\oplus}Q'{\oplus}QQ'}}\right)\\
\hline
0 & 0 & 4 & 4 \\
0 & 1 & 0 & 0 \\
1 & 0 & 0 & 0 \\
1 & 1 & 0 & 0 
\end{array}$$

\end{proof}

\begin{proof}[Proof of \autoref{lem:H-spider-decomp}]
\phantomsection{.}\label{prf:H-spider-decomp}
First, thanks to rule (HS1), we have $\tikzfig{H-spider}\eq{}\tikzfig{H-spider-decomp-aux}$. Then, we have:
\def\fig{H-spider-decomp-proof}
\begin{align*}
\interp{\begin{tikzpicture}
	\begin{pgfonlayer}{nodelayer}
		\node [style=white dot] (0)  at (-0.5, 0.0) {};
		\node [style=white dot] (1)  at (0.5, 0.0) {};
		\node [style=white dot] (2)  at (0.0, 0.5) {};
		\node [style=box] (3)  at (-0.5, 0.5) {};
		\node [style=box] (4)  at (0.0, 0.0) {};
		\node [style=box] (5)  at (0.5, 0.5) {};
		\node [style=none] (6)  at (-0.5, -0.5) {};
		\node [style=none] (7)  at (0.5, -0.5) {};
	\end{pgfonlayer}
	\begin{pgfonlayer}{edgelayer}
		\draw (0) to (6.center);
		\draw (0) to (1);
		\draw (1) to (7.center);
		\draw (2) to (5);
		\draw (2) to (4);
		\draw (3) to (0);
		\draw (3) to (2);
		\draw (5) to (1);
	\end{pgfonlayer}
\end{tikzpicture}}=\frac12\begin{pmatrix}1+r\\1-r\end{pmatrix}=\frac{1+r}2\begin{pmatrix}1\\\frac{1-r}{1+r}\end{pmatrix}
\end{align*}
and
\begin{align*}
\interp{\begin{tikzpicture}
	\begin{pgfonlayer}{nodelayer}
		\node [style=white dot] (20)  at (-0.375, 0.013) {};
		\node [style=box] (21)  at (0.625, 0.512) {};
		\node [style=box] (22)  at (0.625, 0.013) {};
		\node [style=none] (23)  at (-0.375, -0.738) {};
		\node [style=none] (24)  at (0.625, -0.738) {};
		\node [style=dot] (25)  at (-0.375, -0.412) {};
		\node [style=none, font={\scriptsize}] (26)  at (-0.625, -0.438) {$\neg$};
		\node [style=dot] (27)  at (0.125, 0.512) {};
		\node [style=none, font={\scriptsize}] (28)  at (0.125, 0.738) {$\neg$};
	\end{pgfonlayer}
	\begin{pgfonlayer}{edgelayer}
		\draw (20) to (23.center);
		\draw (20) to (22);
		\draw (21) to (22);
		\draw (21) to (27);
		\draw (22) to (24.center);
		\draw (27) to (20);
	\end{pgfonlayer}
\end{tikzpicture}}
=2se^{i\frac\alpha2}\begin{pmatrix}\cos{\frac\alpha2}\\-ie^{i\beta}\sin{\frac\alpha2}\end{pmatrix}
=2se^{i\frac\alpha2}\cos{\frac\alpha2}\begin{pmatrix}1\\e^{i(\beta-\frac\pi2)}\tan{\frac\alpha2}\end{pmatrix}
\end{align*}
Hence, when $\abs{r}\notin\{0,1\}$, we have equality between the two with
$\alpha := 2\atan{\frac{1-r}{1+r}}$, $\beta = \arg\left(\frac{1-r}{1+r}\right)+\frac\pi2$ and $s:=\frac{1+r}{4e^{i\frac\alpha2}\cos{\frac\alpha2}}$ (since $r\neq1$, $\alpha$ is well defined and $\alpha\neq\pi\mod 2\pi$ so $s$ is also well-defined). From this, we get:
\[\interp{\tikzfig{H-spider}} = \interp{\tikzfig{H-spider-decomp}}\qedhere\]
\end{proof}

\begin{proof}[Proof of \autoref{lem:poly-copy}]
\phantomsection\label{prf:poly-copy}
\def\fig{polynomial-copy}
\begin{align*}
\begin{tikzpicture}
	\begin{pgfonlayer}{nodelayer}
		\node [style=white dot] (0)  at (-0.5, 0.0) {};
		\node [style=white dot] (1)  at (0.5, 0.0) {};
		\node [style=white dot] (2)  at (0.0, 0.5) {};
		\node [style=box] (3)  at (-0.5, 0.5) {};
		\node [style=box] (4)  at (0.0, 0.0) {};
		\node [style=box] (5)  at (0.5, 0.5) {};
		\node [style=none] (6)  at (-0.5, -0.5) {};
		\node [style=none] (7)  at (0.5, -0.5) {};
	\end{pgfonlayer}
	\begin{pgfonlayer}{edgelayer}
		\draw (0) to (6.center);
		\draw (0) to (1);
		\draw (1) to (7.center);
		\draw (2) to (5);
		\draw (2) to (4);
		\draw (3) to (0);
		\draw (3) to (2);
		\draw (5) to (1);
	\end{pgfonlayer}
\end{tikzpicture}
&\eq{}\begin{tikzpicture}
	\begin{pgfonlayer}{nodelayer}
		\node [style=white dot] (9)  at (-0.5, 0.0) {};
		\node [style=box] (12)  at (-0.5, 0.5) {};
		\node [style=box] (13)  at (0.5, 0.0) {};
		\node [style=none] (15)  at (-0.5, -0.5) {};
		\node [style=none] (16)  at (0.5, -0.5) {};
		\node [style=dot] (17)  at (0.0, 0.0) {};
		\node [style=none, font={\scriptsize}] (18)  at (0.0, -0.275) {$\neg$};
	\end{pgfonlayer}
	\begin{pgfonlayer}{edgelayer}
		\draw (9) to (15.center);
		\draw (9) to (13);
		\draw (12) to (9);
		\draw [bend left] (12) to (13);
		\draw (13) to (16.center);
	\end{pgfonlayer}
\end{tikzpicture}
\eq{}\begin{tikzpicture}
	\begin{pgfonlayer}{nodelayer}
		\node [style=white dot] (20)  at (-0.375, 0.013) {};
		\node [style=box] (21)  at (0.625, 0.512) {};
		\node [style=box] (22)  at (0.625, 0.013) {};
		\node [style=none] (23)  at (-0.375, -0.738) {};
		\node [style=none] (24)  at (0.625, -0.738) {};
		\node [style=dot] (25)  at (-0.375, -0.412) {};
		\node [style=none, font={\scriptsize}] (26)  at (-0.625, -0.438) {$\neg$};
		\node [style=dot] (27)  at (0.125, 0.512) {};
		\node [style=none, font={\scriptsize}] (28)  at (0.125, 0.738) {$\neg$};
	\end{pgfonlayer}
	\begin{pgfonlayer}{edgelayer}
		\draw (20) to (23.center);
		\draw (20) to (22);
		\draw (21) to (22);
		\draw (21) to (27);
		\draw (22) to (24.center);
		\draw (27) to (20);
	\end{pgfonlayer}
\end{tikzpicture}
\eq{}\begin{tikzpicture}
	\begin{pgfonlayer}{nodelayer}
		\node [style=white dot] (30)  at (-0.375, -0.125) {};
		\node [style=box] (31)  at (-0.125, 0.375) {};
		\node [style=box] (32)  at (0.625, -0.125) {};
		\node [style=none] (33)  at (-0.375, -0.875) {};
		\node [style=none] (34)  at (0.625, -0.875) {};
		\node [style=dot] (35)  at (-0.375, -0.55) {};
		\node [style=none, font={\scriptsize}] (36)  at (-0.625, -0.575) {$\neg$};
		\node [style=white dot] (37)  at (0.375, 0.375) {};
		\node [style=box] (38)  at (0.375, 0.875) {};
	\end{pgfonlayer}
	\begin{pgfonlayer}{edgelayer}
		\draw (30) to (33.center);
		\draw (30) to (32);
		\draw (30) to (31);
		\draw (31) to (37);
		\draw (32) to (34.center);
		\draw (37) to (32);
		\draw (38) to (37);
	\end{pgfonlayer}
\end{tikzpicture}\\
&\eq{}\begin{tikzpicture}
	\begin{pgfonlayer}{nodelayer}
		\node [style=box] (42)  at (0.125, 0.125) {};
		\node [style=none] (43)  at (-0.375, -0.625) {};
		\node [style=none] (44)  at (0.625, -0.625) {};
		\node [style=dot] (45)  at (-0.375, -0.3) {};
		\node [style=none, font={\scriptsize}] (46)  at (-0.625, -0.325) {$\neg$};
		\node [style=box] (48)  at (0.125, 0.625) {};
		\node [style=dot] (49)  at (0.625, -0.3) {};
		\node [style=none, font={\scriptsize}] (50)  at (0.375, -0.325) {$\neg$};
	\end{pgfonlayer}
	\begin{pgfonlayer}{edgelayer}
		\draw (42) to (45);
		\draw (44.center) to (49);
		\draw (45) to (43.center);
		\draw (48) to (42);
		\draw (49) to (42);
	\end{pgfonlayer}
\end{tikzpicture}
\eq{}\begin{tikzpicture}
	\begin{pgfonlayer}{nodelayer}
		\node [style=none] (53)  at (-0.25, -0.625) {};
		\node [style=none] (54)  at (0.5, -0.625) {};
		\node [style=dot] (55)  at (-0.25, -0.3) {};
		\node [style=none, font={\scriptsize}] (56)  at (-0.5, -0.325) {$\neg$};
		\node [style=dot] (58)  at (0.5, -0.3) {};
		\node [style=none, font={\scriptsize}] (59)  at (0.25, -0.325) {$\neg$};
		\node [style=dot] (60)  at (-0.25, 0.2) {};
		\node [style=none, font={\scriptsize}] (61)  at (-0.5, 0.175) {$\neg$};
		\node [style=dot] (62)  at (0.5, 0.2) {};
		\node [style=none, font={\scriptsize}] (63)  at (0.25, 0.175) {$\neg$};
		\node [style=white dot] (74)  at (0.0, 0.625) {};
	\end{pgfonlayer}
	\begin{pgfonlayer}{edgelayer}
		\draw (54.center) to (58);
		\draw (55) to (53.center);
		\draw (60) to (55);
		\draw (62) to (58);
	\end{pgfonlayer}
\end{tikzpicture}\qedhere
\end{align*}
\end{proof}

\begin{proof}[Proof of \autoref{prop:TH-proves-TH}]
\phantomsection\label{prf:TH-proves-TH}
We show that all the rules of $\operatorname{ZH}_{\operatorname{TH}}$ hold in $\cat{SOP}[\frac12]$.

Checking the rules (ZS1), (ZS2), (HS1), (HS2) and (M) is straightforward using the rule (HH). We give for instance a check of the rule (ZS1):
\begin{align*}
\left[\tikzfig{ZH-rule-ZS1-lhs}\right]^{\operatorname{sop}} &= \frac12\sum e^{2i\pi \frac{y_0+y_1}2y'}\ketbra{y_1,...,y_1}{y_0,...,y_0}\\
&\rewrite{\text{HH}(y',[y_1\leftarrow y_0])}
\sum \ketbra{y_0,...,y_0}{y_0,...,y_0} = \left[\tikzfig{Z-spider}\right]^{\operatorname{sop}}
\end{align*}
We give derivations to prove the remaining rules of $\operatorname{ZH}_{\operatorname{TH}}$. Recall that equality is up to $\alpha$-conversion.

(IV):
\begin{align*}
\left[~\tikzfig{H-scalar-1_2}~~\tikzfig{Z-0-0}~\right]^{\operatorname{sop}} = \frac12\sum_y\ketbra{}{}
\rewrite{\text{Elim}} 1 = \left[~\tikzfig{empty}~\right]^{\operatorname{sop}}
\end{align*}

(Z):
\begin{align*}
\left[~\tikzfig{H-scalar-1_sqrt2}~~\tikzfig{ZH-scalar-0}~\right]^{\operatorname{sop}}
\rewrite{\text{HH}}
\frac1{\sqrt2}\sum e^{2i\pi \frac y2}\ketbra{}{}
\rewrite{\text Z}
\sum e^{2i\pi \frac y2}\ketbra{}{}
\underset{\text{HH}}{\longleftarrow}
\left[~\tikzfig{ZH-scalar-0}~\right]^{\operatorname{sop}}
\end{align*}

The two rules (BA1) and (BA2) are fairly easy to check, once one realises that $\left[\tikzfig{xor}\right]^{\operatorname{sop}} \rewrite{\text{HH}} \sum \ketbra{y_0{\oplus}y_1}{y_0,y_1}$:
\begin{align*}
\left[\tikzfig{ZH-rule-BA1-rhs}\right]^{\operatorname{sop}} \rewrite{\text{HH}}
\frac12\sum e^{2i\pi \frac{y_1+...+y_n+y_0}2y'} \ketbra{y_1,...,y_n}{y_0,...,y_0}\\
\rewrite{\text{HH}(y',[y_0\leftarrow y_1{\oplus}...{\oplus}y_n])}
\sum \ketbra{y_1,...,y_n}{y_1{\oplus}...{\oplus}y_n,...,y_1{\oplus}...{\oplus}y_n}
\underset{\text{HH}}{\longleftarrow}\left[\tikzfig{ZH-rule-BA1-lhs}\right]^{\operatorname{sop}} 
\end{align*}
and
\begin{align*}
\left[\tikzfig{ZH-rule-BA2-rhs}\right]^{\operatorname{sop}} \rewrite{\text{HH}}
\frac12\sum e^{2i\pi \left(\frac{y_1...y_my'}2+\frac{x_1+...+x_n+y'}2y''\right)} \ketbra{y_1,...,y_m}{x_1,...x_n}\\
\rewrite{\text{HH}(y'',[y'\leftarrow x_1{\oplus}...{\oplus}x_n])}
\sum e^{2i\pi \left(\frac{y_1...y_mx_1}2+...+\frac{y_1...y_mx_n}2\right)} \ketbra{y_1,...,y_m}{x_1,...x_n}\\
\underset{\text{HH}}{\longleftarrow}\left[\tikzfig{ZH-rule-BA2-lhs}\right]^{\operatorname{sop}} 
\end{align*}

(O):
\begin{align*}
\left[\tikzfig{ZH-rule-O-lhs}\right]^{\operatorname{sop}}\rewrite{\text{HH}}
2 \sum e^{2i\pi\left(\frac{1}{2}y_{0}y_{2}y_{3} + \frac{1}{2}y_{0}y_{3} + \frac{1}{2}y_{1}y_{2}y_{3}\right)}\ket{y_{0}, y_{1}}\!\!\bra{y_{2}}
\end{align*}
\begin{align*}
\left[\tikzfig{ZH-rule-O-rhs}\right]^{\operatorname{sop}}\rewrite{\text{HH}}
\sum e^{2i\pi\left(\frac{1}{2}y_{0}y_{1} + \frac{1}{2}y_{2}y_{3}y_{4} + \frac{1}{2}y_{0}y_{1}y_{4}\right)}\ket{y_{0}, y_{3}}\!\!\bra{y_{4}}\\
\rewrite{\text{HHnl}(y_{1},y_{2})}
2 \sum e^{2i\pi\left(\frac{1}{2}y_{0}y_{1} + \frac{1}{2}y_{0}y_{1}y_{4} + \frac{1}{2}y_{1}y_{3}y_{4}\right)}\ket{y_{0}, y_{3}}\!\!\bra{y_{4}}
\end{align*}

(\&):
\begin{align*}
\left[\tikzfig{ZH-rule-AND-lhs}\right]^{\operatorname{sop}}\rewrite{\text{HH}}
\sum e^{2i\pi\left(\frac{1}{2}y_{0}y_{1} + \frac{1}{2}y_{0}y_{2}y_{3}\right)}\ket{y_{1}}\!\!\bra{y_{2}, y_{3}}
\rewrite{\text{HH}(y_{0},[y_{1}\leftarrow y_{2}y_{3}])}
2 \sum \ket{y_{2}y_{3}}\!\!\bra{y_{2}, y_{3}}
\end{align*}
\begin{align*}
\left[\tikzfig{ZH-rule-AND-rhs}\right]^{\operatorname{sop}}\rewrite{\text{HH}}
\frac{1}{4} \sum e^{2i\pi\left(\frac{1}{2}y_{0} + \frac{1}{2}y_{8}y_{1}y_{7} + \frac{1}{2}y_{1} + \frac{1}{2}y_{1}y_{3}y_{4} + \frac{1}{2}y_{0}y_{1}y_{2} + \frac{1}{2}y_{0}y_{2}\right)}\raisebox{-2ex}{\hspace*{-2em}$\ket{y_{0}}\!\!\bra{1{\oplus}y_{4}, 1{\oplus}y_{7}}$}\\
\rewrite{\text{HHnl}(y_{8},y_{2})}
\frac{1}{2} \sum e^{2i\pi\left(\frac{1}{2}y_{0}y_{1}y_{8} + \frac{1}{2}y_{0} + \frac{1}{2}y_{8}y_{1}y_{7} + \frac{1}{2}y_{1} + \frac{1}{2}y_{1}y_{3}y_{4} + \frac{1}{2}y_{0}y_{8}\right)}\ket{y_{0}}\!\!\bra{1{\oplus}y_{4}, 1{\oplus}y_{7}}\\
\rewrite{\text{ket/bra}([y_{4}\leftarrow y_{4}{\oplus}1])\\\text{ket/bra}([y_{7}\leftarrow y_{7}{\oplus}1])}
\frac{1}{2} \sum e^{2i\pi\left(\frac{1}{2}y_{0}y_{1}y_{8} + \frac{1}{2}y_{0} + \frac{1}{2}y_{8}y_{1}y_{7} + \frac{1}{2}y_{8}y_{1} + \frac{1}{2}y_{1} + \frac{1}{2}y_{1}y_{3}y_{4} + \frac{1}{2}y_{1}y_{3} + \frac{1}{2}y_{0}y_{8}\right)}\raisebox{-2ex}{\hspace*{-2em}$\ket{y_{0}}\!\!\bra{y_{4}, y_{7}}$}\\
\rewrite{\text{HHnl}(y_{8},y_{3})}
\sum e^{2i\pi\left(\frac{1}{2}y_{0}y_{1}y_{8} + \frac{1}{2}y_{0} + \frac{1}{2}y_{8}y_{1}y_{4}y_{7} + \frac{1}{2}y_{1} + \frac{1}{2}y_{8}y_{1} + \frac{1}{2}y_{0}y_{8}\right)}\ket{y_{0}}\!\!\bra{y_{4}, y_{7}}\\
\rewrite{\text{HHgen}(y_{1},[y_{0}\leftarrow y_{4}y_{7}y_8{\oplus}y_8{\oplus}1])}
\sum e^{2i\pi\left(\frac{1}{2}y_{1} + \frac{1}{2}y_{8}y_{1} + \frac{1}{2}y_{8} + \frac12\right)}\ket{y_{4}y_{7}y_8{\oplus}y_8{\oplus}1}\!\!\bra{y_{4}, y_{7}}\\
\rewrite{\text{HH}(y_1,[y_8\leftarrow 1])}
2 \sum \ket{y_{4}y_{7}}\!\!\bra{y_{4}, y_{7}}
\end{align*}
\end{proof}

\begin{proof}[Proof of \autoref{prop:ascending-encodes}]
\phantomsection\label{prf:ascending-encodes}
We demonstrate this proposition by showing that:
\begin{enumerate}
\item $\interp{\cat{SOP}[\frac1{2^{k+1}}]'}\subseteq
\mathcal M(\mathbb Z[\frac12,e^{i\frac\pi{2^k}}])$
\label{item:interp}
\item For each element $x\in \mathbb Z[\frac12,e^{i\frac\pi{2^k}}]$, there exists a unique decomposition as $x = x_1+e^{i\frac\pi{2^k}}x_2$ where $x_1,x_2\in \mathbb Z[\frac12,e^{i\frac\pi{2^{k-1}}}]$
\label{item:unique}
\item There exists a map $\psi_k:\mathcal M(\mathbb Z[\frac12, e^{i\frac\pi{2^k}}])\to \mathcal M(\mathbb Z[\frac12, e^{i\frac\pi{2^{k-1}}}])$, based on the decomposition, and such that $\interp{\ascend{t}_k} = \psi_k\left(\interp{t}\right)$
\label{item:exists}
\end{enumerate}
In this case, given $t_1,t_2\in\cat{SOP}[\frac1{2^{k+1}}]'$ such that $\interp{t_1}=\interp{t_2}$, by \ref{item:interp}.~we can apply $\psi_k$ to their interpretation. By uniqueness of the decomposition \ref{item:unique}., $\psi_k(\interp{t_1})=\psi_k(\interp{t_2})$. Finally, by \ref{item:exists}., $\interp{\ascend{t_1}_k} = \interp{\ascend{t_2}_k}$.
Let us now prove the previous claims:
\begin{enumerate}
\item This point is a simple verification.
\item Let $\displaystyle x=\sum_{\ell=0}^{2^k-1} \alpha_\ell e^{i\frac{\ell\pi}{2^k}}$ $\in \mathbb Z[\frac12,e^{i\frac\pi{2^k}}]$. Obviously, $x$ can be decomposed as $$ x = \sum_{\ell=0}^{2^{k-1}-1} \alpha_{2\ell} e^{i\frac{\ell\pi}{2^{k-1}}} + e^{i\frac{\pi}{2^k}}\sum_{\ell=0}^{2^{k-1}-1} \alpha_{2\ell+1} e^{i\frac{\ell\pi}{2^{k-1}}} = x_1+e^{i\frac\pi{2^k}}x_2$$
where $x_1,x_2\in \mathbb Z[\frac12,e^{i\frac\pi{2^{k-1}}}]$. We now need to show that this decomposition is unique. To do so, let us consider $\mathbb Q[e^{i\frac\pi{2^k}}]$ and $\mathbb Q[e^{i\frac\pi{2^{k-1}}}]$. These are two fields such that $\mathbb Q[e^{i\frac\pi{2^{k-1}}}] \subset \mathbb Q[e^{i\frac\pi{2^k}}]$. $\mathbb Q[e^{i\frac\pi{2^k}}]$ can hence be seen as a vector space over $\mathbb Q[e^{i\frac\pi{2^{k-1}}}]$. This vector space is of dimension:
$$\left[\mathbb Q[e^{i\frac\pi{2^k}}]:\mathbb Q[e^{i\frac\pi{2^{k-1}}}]\right]
= \left[\mathbb Q[e^{i\frac{2\pi}{2^{k+1}}}]:\mathbb Q[e^{i\frac{2\pi}{2^k}}]\right]
=\frac{\left[\mathbb Q[e^{i\frac{2\pi}{2^{k+1}}}]:\mathbb Q\right]}{\left[\mathbb Q[e^{i\frac{2\pi}{2^k}}]:\mathbb Q\right]} = \frac{\varphi(2^{k+1})}{\varphi(2^k)}=\frac{2^k}{2^{k-1}}=2$$
where $\varphi$ is Euler's totient function. The vector space has $(1,e^{i\frac\pi{2^k}})$ as a basis. Hence, the above decomposition is unique.
\item We now need to define $\psi_k$. We are going to define it first on scalars, and on the basis $(1,e^{i\frac\pi{2^k}})$:
$$\psi_k(1):= I_2 = \begin{pmatrix}1&0\\0&1\end{pmatrix}\qquad\quad\text{and}\qquad\quad
\psi_k(e^{i\frac\pi{2^k}}):= X_k = \begin{pmatrix}0&1\\e^{i\frac\pi{2^{k-1}}}&0\end{pmatrix}$$
By linearity, $\psi_k$ is defined on all elements of $\mathbb Z[\frac12,e^{i\frac\pi{2^k}}]$. We then naturally extend this definition to any matrix over these elements. Formally: $\psi_k:A+Be^{i\frac\pi{2^k}}\mapsto A\otimes I_2 + B\otimes X_k$ where $A+Be^{i\frac\pi{2^k}}$ is the aforementioned decomposition extended to matrices. One can check that $\psi_k$ is a homomorphism, i.e.~$\psi_k(.+.)=\psi_k(.)+\psi_k(.)$ and $\psi_k(.\circ .)=\psi_k(.)\circ\psi_k(.)$.

It remains to show that $\interp{\ascend{.}_k} = \psi_k\left(\interp{.}\right)$. Since $\psi_k$ is a homomorphism, it is enough to show the result on the terms in the decomposed form of \autoref{lem:sop-decomp}. Let $t=s\sum e^{2i\pi \frac\ell{2^{k+1}}y_{i_1}...y_{i_q}} \ketbra{\vec O}{\vec I}$ be such a term.

If $\ell\bmod2=0$, then $\interp{t}\in\mathcal M(\mathbb Z[\frac12, e^{i\frac\pi{2^{k-1}}}])$ so $\psi_k(\interp t) = \interp t \otimes I_2$ and: \[\interp{\ascend{t}_k} = \interp{s\sum e^{2i\pi \frac{\ell/2}{2^{k}}y_{i_1}...y_{i_q}} \ketbra{\vec O,y'}{\vec I,y'}} = \interp t\otimes I_2.\]

If $\ell\bmod2=1$, then:
$$\interp{t} = se^{i\frac\pi{2^k}}\sum_{y_{i_1}...y_{i_q}=1} e^{2i\pi \frac{(\ell-1)/2}{2^k}} \ketbra{\vec O}{\vec I} + s\sum_{y_{i_1}...y_{i_q}=0} \ketbra{\vec O}{\vec I}$$
so:
$$\psi_k(\interp t) = \left(s\sum_{y_{i_1}...y_{i_q}=1} e^{2i\pi \frac{(\ell-1)/2}{2^k}} \ketbra{\vec O}{\vec I}\right)\otimes X_k + \left(s\sum_{y_{i_1}...y_{i_q}=0} \ketbra{\vec O}{\vec I}\right)\otimes I_2$$
and 
\begin{align*}
&\interp{\ascend{t}_k} = s\sum e^{2i\pi \frac{y_{i_1}...y_{i_q}}{2^{k}}\left((\ell-1)/2+y'\right)} \ketbra{\vec O,y'}{\vec I,y'{\oplus}y_{i_1}...y_{i_q}}\\
&= s\sum_{y_{i_1}...y_{i_q}=1} e^{2i\pi \frac{(\ell-1)/2+y'}{2^{k}}} \ketbra{\vec O,y'}{\vec I,y'{\oplus}1} + s\sum_{y_{i_1}...y_{i_q}=0} \ketbra{\vec O,y'}{\vec I,y'}\\
&= \left(s\sum_{y_{i_1}...y_{i_q}=1} e^{2i\pi \frac{(\ell-1)/2}{2^k}} \ketbra{\vec O}{\vec I}\right)\otimes X_k + \left(s\sum_{y_{i_1}...y_{i_q}=0} \ketbra{\vec O}{\vec I}\right)\otimes I_2 = \psi_k(\interp t)\qedhere
\end{align*}
\end{enumerate}
\end{proof}

\begin{proof}[Proof of \autoref{prop:descending-reverses-ascending}]
\phantomsection\label{prf:descending-reverses-ascending}
Again, we can use the decomposition given in \autoref{lem:sop-decomp}. We can show that if $t=s\sum e^{2i\pi \frac\ell{2^{k+1}}y_{i_1}...y_{i_q}} \ketbra{\vec O}{\vec I}$, then $\ascend{t}_k\circ (id_n\otimes \sum e^{2i\pi \frac{y_0}{2^{k+1}}}\ket{y_0}) \underset{\textnormal{TH}}\sim t\otimes \sum e^{2i\pi \frac{y_0}{2^{k+1}}}\ket{y_0}$:

If $\ell\bmod2=0$, then $\ascend{t}_k = t\otimes id$ so $\ascend{t}_k\circ (id_n\otimes \sum e^{2i\pi \frac{y_0}{2^{k+1}}}\ket{y_0}) \underset{\textnormal{TH}}\sim t\otimes \sum e^{2i\pi \frac{y_0}{2^{k+1}}}\ket{y_0}$.

If $\ell\bmod2=1$, then:
\begin{align*}
\ascend{t}_k\circ (id_n\otimes \sum e^{2i\pi \frac{y_0}{2^{k+1}}}\ket{y_0})
\hspace*{25em}\\
= \frac s2\sum e^{2i\pi \left( \frac{y_{i_1}...y_{i_q}}{2^{k}}\left((\ell-1)/2+y'\right)+\frac{y'+y_{i_1}...y_{i_q}+y_0}2y''+\frac{y_0}{2^{k+1}}\right)} \ketbra{\vec O,y'}{\vec I}\\
\rewrite{\text{HH}(y'',[y_0\leftarrow y'{\oplus}y_{i_1}...y_{i_q}])}
s\sum e^{2i\pi \left( \frac{y_{i_1}...y_{i_q}}{2^{k}}\left((\ell-1)/2+y'\right)+\frac{y'+y_{i_1}...y_{i_q}-2y'y_{i_1}...y_{i_q}}{2^{k+1}}\right)} \ketbra{\vec O,y'}{\vec I}\\
=s\sum e^{2i\pi \left( \ell\frac{y_{i_1}...y_{i_q}}{2^{k}}+\frac{y'}{2^{k+1}}\right)} \ketbra{\vec O,y'}{\vec I} = t\otimes \sum e^{2i\pi \frac{y_0}{2^{k+1}}}\ket{y_0}
\end{align*}

Now, for an arbitrary $t\in \cat{SOP}[\frac1{2^{k+1}}]'$, we can do the above inductively on each term in its decomposition, resulting in $\ascend{t}_k\circ (id_n\otimes \sum e^{2i\pi \frac{y_0}{2^{k+1}}}\ket{y_0}) \underset{\textnormal{TH}}\sim t\otimes \sum e^{2i\pi \frac{y_0}{2^{k+1}}}\ket{y_0}$. Finally:
\begin{align*}
\descend{\ascend{t}_k}_k &= (id_m\otimes \bra0)\circ \ascend{t}_k \circ (id_n\otimes \sum e^{2i\pi \frac{y_0}{2^{k+1}}}\ket{y_0})\\
&\underset{\textnormal{TH}}\sim (id_m\otimes \bra0)\circ \left(t\otimes \sum e^{2i\pi \frac{y_0}{2^{k+1}}}\ket{y_0}\right)
\underset{\textnormal{TH}}\sim t
\end{align*}

The second result in the Proposition simply comes from the fact that $\descend{t_i}$ is built by composition from $t_i$, so Proposition \ref{prop:TH-local} gives the desired result.
\end{proof}

\begin{lemma}
\label{lem:HHnl-lemma}
\def\fig{HHnl-lemma}
\begin{align*}
\begin{tikzpicture}
	\begin{pgfonlayer}{nodelayer}
		\node [style=white dot] (0)  at (-0.5, 0.0) {};
		\node [style=white dot] (1)  at (0.5, 0.0) {};
		\node [style=white dot] (2)  at (0.0, 0.5) {};
		\node [style=box] (3)  at (-0.5, 0.5) {};
		\node [style=box] (4)  at (0.0, 0.0) {};
		\node [style=box] (5)  at (0.5, 0.5) {};
		\node [style=none] (6)  at (-0.5, -0.5) {};
		\node [style=none] (7)  at (0.5, -0.5) {};
	\end{pgfonlayer}
	\begin{pgfonlayer}{edgelayer}
		\draw (0) to (6.center);
		\draw (0) to (1);
		\draw (1) to (7.center);
		\draw (2) to (5);
		\draw (2) to (4);
		\draw (3) to (0);
		\draw (3) to (2);
		\draw (5) to (1);
	\end{pgfonlayer}
\end{tikzpicture}
\eq{}\begin{tikzpicture}
	\begin{pgfonlayer}{nodelayer}
		\node [style=none] (65)  at (-0.375, -0.625) {};
		\node [style=none] (66)  at (0.375, -0.625) {};
		\node [style=dot] (71)  at (-0.375, 0.2) {};
		\node [style=dot] (73)  at (0.375, 0.2) {};
		\node [style=white dot] (75)  at (-0.125, 0.625) {};
	\end{pgfonlayer}
	\begin{pgfonlayer}{edgelayer}
		\draw (71) to (65.center);
		\draw (73) to (66.center);
	\end{pgfonlayer}
\end{tikzpicture}
\end{align*}
\end{lemma}

\begin{proof} We have the following, where the numbering refers to lemmas in \cite{backens2021ZHcompleteness}:
\def\fig{HHnl-lemma}
\begin{align*}
\begin{tikzpicture}
	\begin{pgfonlayer}{nodelayer}
		\node [style=white dot] (0)  at (-0.5, 0.0) {};
		\node [style=white dot] (1)  at (0.5, 0.0) {};
		\node [style=white dot] (2)  at (0.0, 0.5) {};
		\node [style=box] (3)  at (-0.5, 0.5) {};
		\node [style=box] (4)  at (0.0, 0.0) {};
		\node [style=box] (5)  at (0.5, 0.5) {};
		\node [style=none] (6)  at (-0.5, -0.5) {};
		\node [style=none] (7)  at (0.5, -0.5) {};
	\end{pgfonlayer}
	\begin{pgfonlayer}{edgelayer}
		\draw (0) to (6.center);
		\draw (0) to (1);
		\draw (1) to (7.center);
		\draw (2) to (5);
		\draw (2) to (4);
		\draw (3) to (0);
		\draw (3) to (2);
		\draw (5) to (1);
	\end{pgfonlayer}
\end{tikzpicture}&
\eq{2.18}\begin{tikzpicture}
	\begin{pgfonlayer}{nodelayer}
		\node [style=white dot] (9)  at (-0.5, 0.0) {};
		\node [style=box] (12)  at (-0.5, 0.5) {};
		\node [style=box] (13)  at (0.5, 0.0) {};
		\node [style=none] (15)  at (-0.5, -0.5) {};
		\node [style=none] (16)  at (0.5, -0.5) {};
		\node [style=dot] (17)  at (0.0, 0.0) {};
		\node [style=none, font={\scriptsize}] (18)  at (0.0, -0.275) {$\neg$};
	\end{pgfonlayer}
	\begin{pgfonlayer}{edgelayer}
		\draw (9) to (15.center);
		\draw (9) to (13);
		\draw (12) to (9);
		\draw [bend left] (12) to (13);
		\draw (13) to (16.center);
	\end{pgfonlayer}
\end{tikzpicture}
\eq{2.19}\begin{tikzpicture}
	\begin{pgfonlayer}{nodelayer}
		\node [style=white dot] (20)  at (-0.375, 0.013) {};
		\node [style=box] (21)  at (0.625, 0.512) {};
		\node [style=box] (22)  at (0.625, 0.013) {};
		\node [style=none] (23)  at (-0.375, -0.738) {};
		\node [style=none] (24)  at (0.625, -0.738) {};
		\node [style=dot] (25)  at (-0.375, -0.412) {};
		\node [style=none, font={\scriptsize}] (26)  at (-0.625, -0.438) {$\neg$};
		\node [style=dot] (27)  at (0.125, 0.512) {};
		\node [style=none, font={\scriptsize}] (28)  at (0.125, 0.738) {$\neg$};
	\end{pgfonlayer}
	\begin{pgfonlayer}{edgelayer}
		\draw (20) to (23.center);
		\draw (20) to (22);
		\draw (21) to (22);
		\draw (21) to (27);
		\draw (22) to (24.center);
		\draw (27) to (20);
	\end{pgfonlayer}
\end{tikzpicture}
\eq{2.17}\begin{tikzpicture}
	\begin{pgfonlayer}{nodelayer}
		\node [style=white dot] (30)  at (-0.375, -0.125) {};
		\node [style=box] (31)  at (-0.125, 0.375) {};
		\node [style=box] (32)  at (0.625, -0.125) {};
		\node [style=none] (33)  at (-0.375, -0.875) {};
		\node [style=none] (34)  at (0.625, -0.875) {};
		\node [style=dot] (35)  at (-0.375, -0.55) {};
		\node [style=none, font={\scriptsize}] (36)  at (-0.625, -0.575) {$\neg$};
		\node [style=white dot] (37)  at (0.375, 0.375) {};
		\node [style=box] (38)  at (0.375, 0.875) {};
	\end{pgfonlayer}
	\begin{pgfonlayer}{edgelayer}
		\draw (30) to (33.center);
		\draw (30) to (32);
		\draw (30) to (31);
		\draw (31) to (37);
		\draw (32) to (34.center);
		\draw (37) to (32);
		\draw (38) to (37);
	\end{pgfonlayer}
\end{tikzpicture}\\
&\eq{2.18}\begin{tikzpicture}
	\begin{pgfonlayer}{nodelayer}
		\node [style=box] (42)  at (0.125, 0.125) {};
		\node [style=none] (43)  at (-0.375, -0.625) {};
		\node [style=none] (44)  at (0.625, -0.625) {};
		\node [style=dot] (45)  at (-0.375, -0.3) {};
		\node [style=none, font={\scriptsize}] (46)  at (-0.625, -0.325) {$\neg$};
		\node [style=box] (48)  at (0.125, 0.625) {};
		\node [style=dot] (49)  at (0.625, -0.3) {};
		\node [style=none, font={\scriptsize}] (50)  at (0.375, -0.325) {$\neg$};
	\end{pgfonlayer}
	\begin{pgfonlayer}{edgelayer}
		\draw (42) to (45);
		\draw (44.center) to (49);
		\draw (45) to (43.center);
		\draw (48) to (42);
		\draw (49) to (42);
	\end{pgfonlayer}
\end{tikzpicture}
\eq{2.28}\begin{tikzpicture}
	\begin{pgfonlayer}{nodelayer}
		\node [style=none] (53)  at (-0.25, -0.625) {};
		\node [style=none] (54)  at (0.5, -0.625) {};
		\node [style=dot] (55)  at (-0.25, -0.3) {};
		\node [style=none, font={\scriptsize}] (56)  at (-0.5, -0.325) {$\neg$};
		\node [style=dot] (58)  at (0.5, -0.3) {};
		\node [style=none, font={\scriptsize}] (59)  at (0.25, -0.325) {$\neg$};
		\node [style=dot] (60)  at (-0.25, 0.2) {};
		\node [style=none, font={\scriptsize}] (61)  at (-0.5, 0.175) {$\neg$};
		\node [style=dot] (62)  at (0.5, 0.2) {};
		\node [style=none, font={\scriptsize}] (63)  at (0.25, 0.175) {$\neg$};
		\node [style=white dot] (74)  at (0.0, 0.625) {};
	\end{pgfonlayer}
	\begin{pgfonlayer}{edgelayer}
		\draw (54.center) to (58);
		\draw (55) to (53.center);
		\draw (60) to (55);
		\draw (62) to (58);
	\end{pgfonlayer}
\end{tikzpicture}
\eq{2.12}\begin{tikzpicture}
	\begin{pgfonlayer}{nodelayer}
		\node [style=none] (65)  at (-0.375, -0.625) {};
		\node [style=none] (66)  at (0.375, -0.625) {};
		\node [style=dot] (71)  at (-0.375, 0.2) {};
		\node [style=dot] (73)  at (0.375, 0.2) {};
		\node [style=white dot] (75)  at (-0.125, 0.625) {};
	\end{pgfonlayer}
	\begin{pgfonlayer}{edgelayer}
		\draw (71) to (65.center);
		\draw (73) to (66.center);
	\end{pgfonlayer}
\end{tikzpicture}\qedhere
\end{align*}
\end{proof}

\end{document}